\title{Envy-freeness and Relaxed Stability under lower quotas} 
\author{Prem Krishnaa\footnote{Part of this work was done when the author was a Dual Degree student at IIT Madras.}}{Cohesity Storage Solutions India Pvt. Ltd, India}{premkrishnaa.jaganmohan@cohesity.com}{}{}
\author{Girija Limaye}{Indian Institute of Technology Madras, India}{girija@cse.iitm.ac.in}{}{}
\author{Meghana Nasre}{Indian Institute of Technology Madras, India}{meghana@cse.iitm.ac.in}{}{}
\author{Prajakta Nimbhorkar}{Chennai Mathematical Institute, India and UMI ReLaX}{prajakta@cmi.ac.in}{}{}
\authorrunning{P. Krishnaa, G. Limaye, M. Nasre, P. Nimbhorkar}
\keywords{Matchings under preferences, Lower quota, Hardness of approximation, Approximation algorithms, Parameterized complexity}
\newcommand{\MINUR}{\mbox{{\sf MIN-UR-EFM}}}
\newcommand{\IS}{\mbox{{\sf IND}\text{-}{{\sf SET}}}}
\newcommand{\NP}{\mbox{{\sf NP}}}
\newcommand{\FPT}{\mbox{{\sf FPT}}}
\newcommand{\Poly}{\mbox{{\sf P}}}
\newcommand{\HR}{\mbox{{\sf HR}}}
\newcommand{\HRLQ}{\mbox{{\sf HRLQ}}}
\newcommand{\LQ}{\mbox{lower-quota}}
\newcommand{\TR}{\mbox{{\sf 01-HRLQ-2R}}}
\newcommand{\CL}{\mbox{{\sf CL}}}
\newcommand{\MAXEFM}{\mbox{{\sf MAXEFM}}}
\newcommand{\MAXRSM}{\mbox{{\sf MAXRSM}}}
\newcommand{\MVC}{\mbox{\sf MVC}}
\newcommand{\ESDA}{\mbox{{\sf ESDA}}}
\newcommand{\CSM}{\mbox{{\sf CSM}}}
\newcommand{\EFHRLQ}{\mbox{{\sf{EF}\text{-}\sf{HR}\text{-}\sf{LQ}}}}
\newcommand{\HH}{\mathcal{H}}
\newcommand{\RR}{\mathcal{R}}
\newcommand{\EE}{\mathcal{E}}
\newenvironment{appendix-lemma}[1]{\vspace{0.1in}\noindent{\bf Lemma~#1~} \em }{\vspace{0.1in}}
\newenvironment{appendix-theorem}[1]{\vspace{0.1in}\noindent{\bf Theorem~#1~} \em }{\vspace{0.1in}}
\begin{document}

\maketitle
\begin{abstract}
We consider the problem of matchings under two-sided preferences in the presence of maximum as well as minimum quota requirements for the
agents. This setting, studied as the Hospital Residents with Lower Quotas ($\HRLQ$) in literature, models important
real world problems like assigning medical interns (residents) to hospitals, and teaching assistants to instructors where a minimum guarantee is essential. 
When there are no minimum quotas, {\em stability} is the de-facto
notion of optimality. 
However, in the presence of minimum quotas, ensuring stability and simultaneously satisfying lower quotas is not an attainable goal in many instances. 

To address this, a relaxation of stability known as {\em envy-freeness}, is proposed in literature.  
In our work, we thoroughly investigate envy-freeness from a computational view point.
Our results show that computing envy-free matchings that match maximum number of agents is computationally hard and also hard to approximate up to a constant factor. Additionally, it is known that envy-free matchings satisfying lower-quotas may not exist.
To circumvent these drawbacks, we propose a new notion called {\em relaxed stability}. We show that relaxed stable matchings are guaranteed to exist
even in the presence of lower-quotas. Despite the computational intractability of finding a largest matching that is feasible and relaxed stable, we
give efficient algorithms that compute a constant factor approximation to this matching in terms of size.
\end{abstract}

\section{Introduction}\label{sec:intro}
Matching problems with two-sided preferences have been extensively investigated  for matching markets  where agents (hospitals/residents or colleges/students)
have upper quotas that can not be exceeded. Stability~\cite{GS62} is a widely accepted notion of optimality in this scenario. An allocation is said to be {\em stable} 
if no pair of agents has an incentive to deviate from it. However, the case when the agents
have maximum as well as minimum quotas poses new challenges and there is still a want of satisfactory mechanisms that take minimum quotas into
account. 
Lower quotas are important from a practical perspective, since it is natural for a hospital to require a minimum number of residents
to run the hospital smoothly. The setting also models important applications like course-allocation, and assigning teaching assistants
(TAs) in academic institutions where a minimum guarantee is essential.

Ensuring stability and at the same time satisfying lower quotas is not an attainable goal in many instances. On one hand,  disregarding
preferences in the interest of satisfying the lower quotas gives rise to social unfairness (for instance agents envying each other); on the other hand, too much emphasis on fairness can lead  to
wastefulness \cite{FITUY15}. Therefore, it is necessary to strike a balance between
these three mutually conflicting goals -- optimality with respect to preferences, feasibility for minimum quotas and minimizing wastefulness.
The main contribution of this paper is to propose a mechanism  to achieve this balance.

Envy-freeness~\cite{FITUY15,GIKKYY16,KK15,KK17,EHYY14} is a widely accepted notion for achieving fairness from a social perspective. 
Unfortunately, the two goals viz. envy-freeness and feasibility may not be simultaneously achievable; whether feasible envy-free matchings exist can be answered efficiently by the characterization of Yokoi~\cite{Yokoi20}.
Fragiadakis~et~al.~\cite{FITUY15} explore strategy-proof aspects of envy-freeness and the trade-off between envy-freeness and wastefulness; however their results are for a restricted setting of agent preferences.
In our work, we thoroughly investigate envy-freeness from a computational view point both in the general and restricted settings.
Our results show that computing envy-free matchings that match maximum number of agents is computationally hard and even such matchings can be wasteful.
To circumvent these drawbacks, we propose a new notion called {\em relaxed stability}. We show that relaxed stable matchings are guaranteed to exist
even in the presence of lower-quotas. We additionally show that a relaxed stable matching that is at least the size of the stable matching in the instance (disregarding lower quotas) exists and can be efficiently computed. On the other hand, if we insist for the largest size relaxed stable matching, computing such a matching turns out to be computationally intractable.

We state the problem formally in terms of assigning a set of medical interns (residents) to a set of hospitals
where preferences are expressed by both the sets, and hospitals have upper quotas 
and lower quotas associated with them. This is called the $\HRLQ$ setting in literature. 
The input in the $\HRLQ$ setting is a bipartite graph $G = (\RR \cup \HH, E)$ where  $\RR$ denotes the set of
residents, $\HH$ denotes the set of hospitals, and an edge $(r, h) \in E$ denotes that $r$ and $h$ are mutually
acceptable. 
A hospital $h \in \HH$ has an 
upper-quota $q^+(h)$ which denotes the maximum
number of residents that can be assigned to $h$. Additionally, $h$ has a lower-quota $q^-(h)$ which denotes the minimum number of 
residents that must be assigned to $h$. 
Finally, every vertex (resident and hospital) in $G$ ranks its neighbors in a strict order, referred to as the {\em preference list} of the vertex. 
If a vertex $a$ prefers its neighbor $b_1$ over $b_2$, we denote it by $b_1 >_a b_2$.

A matching $M \subseteq E$ in $G$ is an
assignment of residents to hospitals such that 
 each resident is 
matched to at most one hospital, and every hospital $h$ is matched to 
at most $q^+(h)$-many residents.
For a matching $M$, let $M(r)$ denote the hospital that $r$ is matched to, and $M(h)$ denote the set of residents
matched to $h$ in $M$.
If resident $r$ is unmatched in matching $M$ we let $M(r) = \bot$ and $\bot$ is considered as the least preferred choice by any resident. 
We say that a hospital $h$ is {\em under-subscribed} in $M$ if $|M(h)| < q^+(h)$, $h$ is {\em fully-subscribed} in $M$ if $|M(h)| = q^+(h)$ and $h$ is {\em deficient}
in $M$ if $|M(h)| < q^-(h)$. A matching is {\em feasible} for an $\HRLQ$ instance if no hospital is deficient in 
$M$. 
The goal for the $\HRLQ$ problem is to match residents to hospitals {\em optimally} with respect to the preference lists such that the matching is feasible.
The $\HRLQ$ problem is a generalization of the well-studied $\HR$ problem (introduced by Gale and Shapley~\cite{GS62}) where there are no lower quotas. 
In the $\HR$ problem, {\em stability} is a de-facto notion of optimality and is defined by the absence of {\em blocking pairs}.
\begin{definition}[Stable matchings]\label{def:stable}
A pair $(r, h) \in E \backslash M$ is a blocking pair w.r.t. the matching $M$ if $h >_r M(r)$ and $h$ is either under-subscribed in $M$ or there exists at least one resident $r' \in M(h)$ such that $r >_h r'$. A matching $M$ is {\em stable} if there is no blocking pair w.r.t. $M$.
\end{definition}

\noindent {\bf Existence of Stable Feasible Matchings:} Given an $\HRLQ$ instance, it is natural to ask ``does the instance admit a stable feasible matching?"
A stable matching always exists in an $\HR$ instance and can be computed in linear time~\cite{GS62}. In contrast, there exist simple $\HRLQ$ instances 
that may not admit any feasible and stable matching, even when a feasible matching exists.
\begin{wrapfigure}{r}{0.5\textwidth}
\begin{center}
    \begin{minipage}{0.2\textwidth}
    \begin{align*}
        r_1 &: h_1, h_2\\
        r_2 &: h_1
    \end{align*}
    \end{minipage}%
    \begin{minipage}{0.25\textwidth}
    \begin{align*}
        \text{[0,1]}\text{ } {h_1} &: r_1, r_2\\
        \text{[1,1]}\text{ } {h_2} &: r_1
    \end{align*}
    \end{minipage}%
\caption{An $\HRLQ$ instance with no feasible and stable matching.}
\label{fig:no_stb_feas}
\end{center}
\end{wrapfigure}
\raggedbottom
For example, Fig.~\ref{fig:no_stb_feas} shows an $\HRLQ$ instance with 
$\RR = \{r_1, r_2\}$ and $\HH = \{h_1, h_2\}$.
	where both hospitals have a unit upper-quota and $h_2$ has a unit lower-quota.
	We denote the lower-quota and upper-quota of hospital $h$ using $[q^-(h), q^+(h)]$ before hospital $h$.
The matching $M_s = \{(r_1, h_1)\}$ is stable but not feasible since $h_2$ is deficient
in $M_s$. The matchings $M_1 = \{ (r_1, h_2) \}$ and $M_2 = \{(r_1, h_2), (r_2, h_1)\}$ 
are both feasible but not stable since $(r_1, h_1)$ blocks both of them. 
The existence question of a stable, feasible matching for $\HRLQ$ can be answered by simply ignoring lower quotas
and computing a stable matching in the resulting $\HR$ instance. From the well-known Rural Hospitals Theorem~\cite{Roth86}, the {\em number} of residents matched to a hospital is invariant
across all stable matchings of the instance. Hence, for any $\HRLQ$ instance, either all the stable matchings are feasible or all are infeasible, and they have the same size.

Imposing lower quotas ensures that infeasible matchings
are no longer acceptable, however, the presence of lower quotas poses new challenges as discussed above.
In light of the fact that stable and feasible matchings may not exist, relaxations of stability, like popularity and envy-freeness have been proposed in the literature \cite{NN17,MNNR18,Yokoi20}. Envy-freeness is defined by the absence of {\em envy-pairs}.
\begin{definition}[Envy-free matchings]\label{def:ef}
Given a matching $M$, 
	a resident $r$ has a {\em justified envy} (here onwards called envy) towards a matched resident $r'$, where $M(r') = h$ and $(r,h) \in E$
if $h >_r M(r)$ and $r >_h r'$. The pair $(r, r')$ is
an envy-pair w.r.t. $M$.  A matching $M$ is envy-free
if there is no envy-pair w.r.t. $M$.
\end{definition}
Note that an envy-pair implies a blocking pair but the converse is not true and hence envy-freeness is a relaxation of stability.
In the example in Fig.~\ref{fig:no_stb_feas}, the matching $M_1$ is envy-free 
and feasible, although not stable. Thus,
envy-free matchings provide an alternative to stability in such instances. 
Envy-freeness is motivated by fairness from a social perspective. Importance of envy-free matchings has been recognized in the context of 
constrained matchings \cite{FITUY15,GIKKYY16,KK15,KK17,EHYY14}, and their
structural properties have been investigated in \cite{WR18}. 
\begin{wrapfigure}{r}{0.5\textwidth}
    \begin{minipage}{0.2\textwidth}%
    \begin{align*}
	    \forall i \in [n],\ r_i &: h_1, h_2
    \end{align*}
    \end{minipage}
    \begin{minipage}{0.28\textwidth}
    \begin{align*}
        \text{$[0,n]$}\text{ } {h_1} &: r_1, \cdots, r_n\\
        \text{$[1,1]$}\text{ } {h_2} &: r_1, \cdots, r_n
    \end{align*}
    \end{minipage}%
\caption{An $\HRLQ$ instance with two envy-free matchings of different sizes.}
\label{fig:min_max_efms}
\end{wrapfigure}

\noindent {\bf Size of envy-free matchings:} 
In terms of size, there is a sharp contrast between stable matchings in the $\HR$ setting and envy-free matchings in the $\HRLQ$ setting.
While all the stable matchings in an $\HR$ instance have the same size,
the envy-free matchings in an $\HRLQ$ instance may have significantly different sizes.
Consider the example in Fig.~\ref{fig:min_max_efms} with $n$ residents $\RR = \{r_1, r_2, \cdots, r_n\}$  and 
two hospitals $\HH = \{h_1, h_2\}$. 
The hospital $h_2$ has a unit upper-quota and a unit lower-quota. The instance admits an envy-free matching $N_1 = \{(r_1, h_2)\}$
of size one and another envy-free matching $N_n = \{(r_1, h_1), (r_2, h_1), \ldots,$ $(r_{n-1}, h_1), (r_n, h_2)\}$ of size $n$.

\noindent{\bf Shortcomings of envy-free matchings:} It is interesting to note that a feasible, envy-free matching 
itself may not exist -- for instance, if the example in 
Fig.~\ref{fig:no_stb_feas} is modified such that both $h_1, h_2$ have a unit lower-quota, then $M_2$
is the unique feasible matching. However, $M_2$ is not envy-free since $(r_1, r_2)$ is an envy-pair
w.r.t. $M_2$.
If a stable matching is not feasible in an $\HRLQ$ instance, {\em wastefulness} may be inevitable for attaining feasibility.
A matching is wasteful if there exists a resident who prefers a hospital to her current assignment and that hospital has a vacant position~\cite{FITUY15}.
Envy-free matchings can be significantly wasteful. For instance, in the example in Fig.~\ref{fig:min_max_efms}, the matching $N_1$ is wasteful.
Therefore, it would be ideal to have a notion of optimality which is guaranteed to exist, is efficiently computable and avoids wastefulness.

\noindent{\bf Quest for a better optimality criterion: } We propose a new notion of {\em relaxed stability} which always exists for any $\HRLQ$ instance.
We observe that 
in the presence of lower quotas, 
there can be at most $q^-(h)$-many residents that are forced to be matched to $h$, even though they have higher preferred under-subscribed 
hospitals in their list.
Our relaxation allows these forced residents to participate in blocking-pairs,\footnote{Our initial idea was to allow them to participate in envy-pairs. We thank anonymous reviewer for suggesting this modification which is stricter than our earlier notion.}
\ however, 
the matching is still stable when restricted to the remaining residents. We now make this formal below.

\begin{definition}[Relaxed stable matchings]
	A matching $M$ is {\em relaxed stable} if, for every hospital $h$, at most $q^-(h)$ residents from $M(h)$ participate in blocking pairs 
	and no unmatched resident participates in a blocking pair. 
\end{definition}

\begin{wrapfigure}{r}{0.5\textwidth}
    \begin{minipage}{0.2\textwidth}
    \begin{align*}
	    r_1 &: h_1, h_3\\
	    r_2 &: h_2, h_3\\
	    r_3 &: h_2\\
    \end{align*}
    \end{minipage}\hfill
    \begin{minipage}{0.25\textwidth}
    \begin{align*}
        \text{$[0,1]$}\text{ } {h_1} &: r_1\\
        \text{$[0,1]$}\text{ } {h_2} &: r_2, r_3\\
        \text{$[1,1]$}\text{ } {h_3} &: r_1, r_2\\
    \end{align*}
    \end{minipage}%
\caption{An $\HRLQ$ instance with two relaxed stable matchings of different sizes, one larger than stable matching}
\label{fig:min_max_rsm}
\end{wrapfigure}

\raggedbottom
We note that in the example in Fig.~\ref{fig:no_stb_feas}, the matching $M_2$ (which was not envy-free) is feasible, relaxed stable and non-wasteful. 
In fact, we show that every instance of the $\HRLQ$ problem admits a feasible relaxed stable matching -- thus 
addressing the issue of guaranteed existence. 
In terms of computation, a relaxed stable matching can be efficiently computed; however if we insist on the maximum size relaxed stable matching, we show that this problem
	is computationally hard.

	In order to ensure guaranteed existence of relaxed stable matchings, we need to allow upto $q^-(h)$ many residents per hospital to participate in blocking pairs. We remark that Hamada et al.~\cite{HIM16} studied a similar notion of computing matchings with minimum blocking pairs. Such matchings ($\sf{MINBP}$) are guaranteed to exist, however, computing them is $\NP$-hard even under severe restrictions on the preference lists. Contrast this with relaxed stability which is guaranteed to exist and a relaxed stable matching at least as large as a stable matching (obtained by disregarding lower-quotas) is efficiently computable.
In fact, a relaxed stable matching may be even larger than the size of the stable matching in the instance, as seen in the example in Fig.~\ref{fig:min_max_rsm}.
In this instance, the stable matching $M_s = \{(r_1, h_1), (r_2, h_2)\}$ of size two is infeasible. Matchings $M_1' = \{(r_1, h_3), (r_2, h_2)\}$ and $M_2' = \{(r_1,h_1), (r_2, h_3), (r_3, h_2)\}$
both are relaxed stable and feasible and $M_2'$ is larger than $M_s$. 
	This is in contrast to maximum size envy-free matching which (as we will see in section~\ref{sec:hardness}) cannot be larger than the size of a  stable matching. 

\noindent{\bf Our contributions: } 
	In this paper, we study the computational complexity, approximability, parameterized complexity and the hardness of approximation for the notions of 
envy-freeness and relaxed stability. Throughout, we assume that our input $\HRLQ$ instance admits some feasible matching and our algorithms always aim
to output a feasible matching that is optimal.
We consider 
the problem of computing a maximum-size, feasible, optimal matching in an $\HRLQ$ instance when one exists.
When the optimality criterion is envy-freeness, we denote this as the $\MAXEFM$ problem, and the equivalent problem of computing an envy-free matching that has the minimum number of unmatched residents 
as the $\MINUR$ problem. For exact solutions, the two problems are equivalent. 
When the optimality criterion is relaxed stability, we denote this as the $\MAXRSM$ problem.

\noindent{\bf{Results on envy-freeness: }} 
We show that the $\MAXEFM$ problem is NP-hard, and in fact, is hard to approximate below a constant factor.
\begin{theorem}
	The following hold:
	\label{thm:strong-inapprox}
	\begin{enumerate}[label={(\Roman*)}]
	\item The {$\MAXEFM$} (equivalently $\MINUR$) problem is $\NP$-hard.
	Moreover, the $\MINUR$ problem has no $\alpha$-approximation algorithm for any factor $\alpha > 0$ 
	unless $\Poly=\NP$. Above hardness results hold when
	\label{part1}
			\begin{enumerate}[label={(\alph*)}]
		\item every resident has a preference list of length at most two (upper-quotas of hospitals can be arbitrary).\label{parta}
		\item every hospital has lower-quota and upper-quota at most one (resident preference lists can be longer than two).\label{partb}
	\end{enumerate}
	\item The $\MAXEFM$ problem can not be approximated within a factor of $\frac{21}{19}$ unless $\Poly=\NP$ {even when every hospital has a quota of at most one}. 
			\label{part2}
	\end{enumerate}
\end{theorem}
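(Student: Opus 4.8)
For Part~\ref{part1} the plan is to reduce from a suitable $\NP$-complete problem in such a way that ``yes'' instances map to $\HRLQ$ instances admitting an envy-free matching that saturates $\RR$ (so $\MINUR = 0$), while ``no'' instances force at least one resident to be unmatched in every envy-free matching (so $\MINUR \ge 1$). This single dichotomy yields everything in Part~\ref{part1} at once: it is $\NP$-hard to decide whether $\MINUR = 0$, equivalently whether the maximum envy-free matching has size $|\RR|$, so both $\MINUR$ and $\MAXEFM$ are $\NP$-hard; and any $\alpha$-approximation algorithm for $\MINUR$ with finite $\alpha$ would be forced to output value $\alpha \cdot 0 = 0$ on ``yes'' instances and value $\ge 1$ on ``no'' instances, thereby deciding the problem, so no such algorithm exists unless $\Poly = \NP$.

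The gadgets must respect the two restrictions separately. In both cases the engine is a hospital $h$ with $q^-(h) = q^+(h) = 1$: in every feasible matching exactly one resident is assigned to $h$, and if that forced resident prefers some other hospital then a blocking pair appears; the construction is arranged so that this blocking pair is in fact an \emph{envy}-pair (in the sense of Definition~\ref{def:ef}) unless the remainder of the matching is ``consistent'' with the intended truth assignment. For part~\ref{parta} I would keep every resident's list of length at most $2$, so that a resident encodes only a binary choice between two hospitals, and carry all combinatorial content in the (unrestricted) upper quotas together with the unit lower quotas; a natural source is a bounded-occurrence variant of \textsc{Sat} or vertex cover on bounded-degree graphs, whose local structure is exactly what lets the lists stay short. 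For part~\ref{partb} I would instead force all quotas into $\{0,1\}$ and allow longer resident lists, using them to route the chains along which envy propagates.

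For Part~\ref{part2} the plan is a gap-preserving reduction from a problem whose $21/19$-inapproximability is known, the most natural candidate being \textsc{Max-Smti} (maximum-cardinality stable matching with ties and incomplete lists). Given an \textsc{Smti} instance one builds an $\HRLQ$ instance in which every hospital has quota at most one: the strict parts of the preference lists are copied verbatim, and each tie is simulated by an auxiliary quota-one hospital carrying a unit lower quota, whose forced occupant absorbs precisely the blocking pairs that the tie would otherwise have suppressed. One then establishes a size-preserving correspondence (up to the auxiliary residents) between stable matchings of the \textsc{Smti} instance and envy-free matchings of the constructed instance, so the $21/19$ gap transfers verbatim. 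Alternatively one can re-implement, in the envy-free language, the bounded-degree vertex-cover gadget from which the $21/19$ bound is derived.

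\textbf{Where the difficulty lies.} The completeness direction --- a satisfying assignment or small vertex cover yields the claimed envy-free matching, and the blocking pairs created by the forced residents are harmless under Definition~\ref{def:ef} --- should be routine. The real work is \emph{soundness}: one must show that in a ``no'' instance \emph{no} envy-free matching can saturate $\RR$ (respectively exceed the gap threshold), not merely that the intended matchings fail. Since an envy-pair constrains a resident's and a hospital's preference orders simultaneously, ruling out every ``creative'' re-routing of residents that meets all lower quotas while dodging every envy-pair is exactly where the gadgets must be engineered tightly; for Part~\ref{part2} this is also where the constant $21/19$ must be tracked carefully through the construction.
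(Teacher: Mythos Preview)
Your plan for Part~\ref{part1} is sound and matches the paper's at the strategic level: exhibit a reduction in which ``yes'' instances admit an envy-free matching saturating $\RR$ (so $\MINUR=0$) and ``no'' instances strand at least one resident. The paper's source problem is \textsc{Independent Set} (the complement of your vertex cover), and the gadget is close to what you sketch: each vertex-resident has the two-element list $h_i,\,x$ and each edge-resident lists its two endpoint hospitals. One point where your sketch diverges: for part~(a) the lower-quota ``engine'' is a \emph{single} hospital $x$ with $q^-(x)=q^+(x)=k$, not $1$; this forces exactly $k$ vertex-residents down to $x$, and then any edge internal to the chosen $k$-set leaves its edge-resident unassignable without creating envy. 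Part~(b) is obtained by cloning $x$ into $k$ unit-quota hospitals and each $h_i$ into $|E_i|+1$ unit-quota copies, at the cost of longer resident lists.

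For Part~\ref{part2} the paper takes your \emph{alternative} route rather than your primary one: it builds a direct $\MVC$ gadget (three residents and four unit-quota hospitals per vertex, one carrying lower quota $1$), proves $|OPT(G')|=3|V|-|VC(G)|$, and then applies Dinur--Safra exactly as in Halld\'orsson et~al.'s \textsc{Max-SMTI} argument. The paper is explicit that only the \emph{template} is borrowed---the gadgets ``bear no resemblance'' to the \textsc{SMTI} ones. This is where I would push back on your primary plan of a black-box reduction from \textsc{Max-SMTI} via ``tie simulation'': a tie in \textsc{SMTI} suppresses blocking pairs \emph{symmetrically} between two equally ranked partners, whereas a unit lower quota forces one resident into a seat and, under envy-freeness, merely excuses blocking pairs that are not envy-pairs. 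It is not clear how a single forced seat reproduces two-sided indifference while preserving sizes, and this is exactly the soundness step you yourself flag as the hard part. Absent a concrete gadget, the \textsc{SMTI} route is a hope rather than a proof; the direct $\MVC$ construction avoids the issue entirely.
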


\noindent In light of the above negative result, we consider $\MAXEFM$ problem on restriction on $\HRLQ$ instance, called the $\CL$-restriction~\cite{HIM16}. The restriction requires that every hospital with positive lower-quota must rank every resident, and consequently, every resident
ranks every hospital with a positive lower quota.
Note that in Fig.~\ref{fig:no_stb_feas}, the infeasibility of hospital $h_2$ could be resolved if $h_2$ and $r_2$ were mutually acceptable. In that case, the stable matching $\{(r_1,h_1), (r_2,h_2)\}$ is feasible
and hence is a maximum size envy-free matching.
The $\CL$-restriction has been considered by Hamada~et~al~\cite{HIM16} where the goal is to output a matching with minimum number of blocking pairs ($\sf{MINBP}$) or blocking residents ($\sf{MINBR}$). Hamada~et~al.~\cite{HIM16} 
proved that even under the $\CL$-restriction, computing the $\sf {MINBP}$ and $\sf{MINBR}$ problems are $\NP$-hard. In contrast under the $\CL$-restriction, $\MAXEFM$ (equivalently $\MINUR$) is tractable.
\begin{theorem}\label{thm:CL}
	There is a simple linear-time algorithm for the $\MAXEFM$ (equivalently $\MINUR$) problem for {\sf CL}-restricted $\HRLQ$ instances.
\end{theorem}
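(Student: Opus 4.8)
The plan is to show that the upper bound proved in Section~\ref{sec:hardness} is tight under the $\CL$-restriction: no envy-free matching can exceed the size of a stable matching of the $\HR$ instance obtained by discarding all lower quotas, and for $\CL$-restricted instances I will construct, in linear time, a feasible envy-free matching of exactly that size. Such a matching is simultaneously of maximum size and one with the fewest unmatched residents, so it solves both $\MAXEFM$ and $\MINUR$. Two facts are used throughout: every stable matching is envy-free, and, more generally, if a matching $M$ is stable in the instance obtained by lowering some upper quotas of a given $\HRLQ$ instance (to values still at least the lower quotas), then $M$ is envy-free in the original instance -- indeed, an envy pair $(r,r')$ with $M(r')=h$ in the original instance is a blocking pair $(r,h)$ in the reduced instance, since $r\in M(h)$ is not needed to witness under-subscription, $r'\in M(h)$ with $r>_h r'$ already does. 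I will also use that, since the instance admits a feasible matching, $\sum_h q^-(h)\le|\RR|$.

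First I would run Gale--Shapley on the $\HR$ instance obtained by ignoring the lower quotas, producing a stable matching $M_0$ in linear time. If $M_0$ is feasible for the $\HRLQ$ instance, output it: it is envy-free and, by the bound above, of maximum possible size, and we are done. Otherwise some hospital $h$ with $q^-(h)>0$ is under-subscribed in $M_0$, and this is exactly where the $\CL$-restriction is needed: since $h$ ranks and is ranked by every resident, an unmatched resident together with the under-subscribed $h$ would be a blocking pair of $M_0$, contradicting stability. Hence $M_0$ matches every resident, so $|M_0|=|\RR|$, and it now suffices to output any feasible envy-free matching that keeps all of $\RR$ matched.

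For the repair phase I would redistribute residents into deficient hospitals by resuming Gale--Shapley with reduced capacities: while some hospital is deficient, pick a hospital that is currently strictly above its lower quota (one must exist, for otherwise $\sum_h q^-(h)>|\RR|$, contradicting feasibility of the instance), decrease its capacity by one -- it then evicts its least preferred assigned resident, who re-proposes down its list -- and run the deferred-acceptance cascade to stability with respect to the new capacities; repeat. Correctness rests on three points. (i) During the whole process every resident stays matched: a floating resident ranks every hospital with positive lower quota (by the $\CL$-restriction), and as long as some hospital is deficient at least one such hospital is under-subscribed and therefore accepts it, so the size never drops below $|\RR|$, matching the upper bound. (ii) A resident is removed from a hospital only while that hospital is strictly above its lower quota, and each capacity reduction strictly decreases $\sum_h q^+(h)$ while never going below the lower quotas, so the process terminates, necessarily with no deficient hospital, i.e.\ with a feasible matching. (iii) The final matching is stable in the $\HR$ instance with the reduced capacities, hence envy-free in the original instance by the fact from Paragraph~1. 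Finally, with the usual amortized bookkeeping each resident walks down its preference list only a constant number of times in total, across both Gale--Shapley and the repair phase, so the algorithm runs in linear time. The part I expect to need the most care is making the repair phase rigorous -- precisely arguing that the resumed cascades keep everyone matched and that the bookkeeping stays linear; the envy-freeness characterization, the reduction to a single capacity vector, and the linear-time bound on Gale--Shapley are routine.
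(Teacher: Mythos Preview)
The proposal is correct and reaches the same conclusion as the paper, but by a genuinely different route. The paper (Section~\ref{sec:cl}, Algorithm~\ref{alg:hrlq_cl}) runs a \emph{single} modified Gale--Shapley pass that maintains the current deficiency $d$ and the number of unmatched residents $k$; whenever an under-subscribed but non-deficient hospital receives a proposal while $k=d$, it nevertheless rejects its worst resident, effectively reserving capacity for lower-quota hospitals. The correctness argument is then a dichotomy on the run itself: either $k>d$ throughout, so the execution coincides with ordinary Gale--Shapley and the output is stable (hence feasible and maximum envy-free by Lemma~\ref{lem:stbl_feas}), or $k=d$ is hit and the output is $\RR$-perfect. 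You instead separate the phases: run plain Gale--Shapley first, observe (via the $\CL$-restriction) that an infeasible stable matching must already be $\RR$-perfect, and then repair feasibility by iterated capacity reduction with deferred-acceptance re-stabilisation, invoking the clean fact that stability under reduced upper quotas implies envy-freeness under the original ones.

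Both approaches work. Yours makes the dichotomy explicit and isolates the mechanism for envy-freeness (the reduced-capacity lemma) in a reusable form, at the price of a more intricate repair phase; the paper's single-pass algorithm is more direct, and its proof is shorter because the invariant $k\ge d$ handles the bookkeeping automatically. One point in your repair argument deserves an extra sentence: to conclude that the floating resident is accepted by a currently deficient hospital, you should note why it has not already proposed there --- namely, a hospital that is deficient now was deficient (hence under-subscribed and never rejecting) throughout, since fill levels never decrease in the resident-proposing dynamics and you never cut the capacity of a deficient hospital. With that monotonicity made explicit, both correctness and the linear-time bound follow as you outline.
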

In practice it is common to have preference lists which are incomplete and in many cases the preference lists of residents may also be constant size.
Krishnapriya et al.~\cite{MNNR18} present an algorithm that efficiently computes a {\em maximal} envy-free matching that extends a given feasible matching. 
Matching $M$ is a \emph{maximal envy-free matching} if addition of any edge to $M$ violates either the upper-quota or envy-freeness.
However, prior to this work, no approximation guarantee of a maximal envy-free matching was known. We prove following guarantee on the size of a maximal envy-free matching. Let $\ell_1$ be the length of the longest preference list of a resident and $\ell_2$ be the length of the longest preference list of a hospital.
\begin{theorem}\label{thm:app_maximal}
	A maximal envy-free matching is 
	\begin{enumerate}[label={(\Roman*)}]
		\item an $\ell_1$-approximation of $\MAXEFM$ when hospital quotas are at most $1$\label{p1}
		\item an $(\ell_1\cdot\ell_2)$-approximation of $\MAXEFM$ when quotas are unrestricted.\label{p2}
	\end{enumerate}
\end{theorem}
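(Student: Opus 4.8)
The plan is to fix a maximal envy-free matching $M$ and a maximum-size feasible envy-free matching $N$ (an optimal solution of $\MAXEFM$) and to show $|N|\le\ell_1\cdot|M|$ in case~\ref{p1} and $|N|\le\ell_1\ell_2\cdot|M|$ in case~\ref{p2}; only envy-freeness of $N$ will actually be used. Since $|M|$ and $|N|$ equal the numbers of residents matched by $M$ and $N$, it suffices to map the residents matched in $N$ injectively into a pool of ``slots'' of size at most $\ell_1|M|$ (respectively $\ell_1\ell_2|M|$). The technical core is the following consequence of maximality: if a hospital $h$ is under-subscribed in $M$ and $r$ is a resident with $(r,h)\in E$ that is unmatched in $M$, then there is a resident $\hat r$ matched in $M$ with $h>_{\hat r}M(\hat r)$ and $\hat r>_h r$. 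To prove it I would add $(r,h)$ to $M$: this breaches no upper quota, so by maximality it destroys envy-freeness, and a short case analysis using envy-freeness of $M$ shows any envy-pair with respect to $M\cup\{(r,h)\}$ must have the form $(a,r)$ with $(a,h)\in E$, $h>_a M(a)$ and $a>_h r$. The set of such residents $a$ is nonempty; taking $\hat r$ to be the one $h$ ranks highest and repeating the argument with $\hat r$ in place of $r$ forces $\hat r$ to be matched in $M$, since otherwise we would produce a candidate that $h$ ranks even above $\hat r$.

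Given this lemma, I would assign a slot to each resident $r$ matched in $N$, writing $h:=N(r)$ (so $q^+(h)\ge 1$): (i) if $h$ is fully-subscribed in $M$, let $\hat r$ be any resident of $M(h)$; (ii) if $h$ is under-subscribed in $M$ and $r$ is matched in $M$, let $\hat r:=r$; (iii) if $h$ is under-subscribed in $M$ and $r$ is unmatched in $M$, let $\hat r$ be the resident given by the lemma. These three cases are exhaustive, in each $\hat r$ is matched in $M$ and $(\hat r,h)\in E$, and only case~(iii) uses maximality. For case~\ref{p2} the slot of $r$ is the triple $\bigl(\hat r,\,h,\,\pi\bigr)$, where $\pi$ is the position of $r$ in $h$'s preference list; for case~\ref{p1}, where all quotas are at most one, the pair $(\hat r,h)$ alone serves as the slot.

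Injectivity is then immediate and, conveniently, independent of the choice of $\hat r$: the second coordinate of a slot is $N(r)$ and the third is the rank of $r$ on that hospital's strict list, so these two coordinates already pin down $r$ (and in case~\ref{p1} the unit quota of $h=N(r)$ pins $r$ down from the pair). Counting slots, the first coordinate is one of the $|M|$ residents matched in $M$, the second one of the at most $\ell_1$ hospitals on that resident's list, and the third one of the at most $\ell_2$ entries of that hospital's list, so there are at most $\ell_1\ell_2|M|$ slots in general and at most $\ell_1|M|$ when quotas are at most one. Hence $|N|\le\ell_1\ell_2|M|$, respectively $|N|\le\ell_1|M|$, which is exactly the claimed approximation ratio. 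I expect the maximality lemma to be the only genuine obstacle — in particular, ruling out the degenerate case in which $h$ has no neighbour matched in $M$ — while the slot assignment, injectivity, and counting are routine once it is established.
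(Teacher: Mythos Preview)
Your proof is correct, and it takes a genuinely different route from the paper's. The paper splits the residents matched in $OPT$ into those also matched in $M$ (bounded trivially by $|M|$) and those unmatched in $M$; for the latter set it traces alternating $M$--$OPT$ paths (exploiting envy-freeness of \emph{both} $M$ and $OPT$) to reach an under-subscribed hospital with a matched neighbour, and then argues these hospitals (or edges, in part~\ref{p2}) are distinct via a somewhat delicate disjoint-path analysis. Your argument bypasses all of this: you directly inject every resident matched in $N$ into a slot $(\hat r,h)$ or $(\hat r,h,\pi)$ and count slots, with injectivity coming for free because $h=N(r)$ (together with the rank $\pi$ in the general-quota case) already determines $r$. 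Your maximality lemma replaces the path-tracing step entirely. As a bonus, despite your opening remark that envy-freeness of $N$ would be used, your proof never actually invokes it --- so you in fact establish that a maximal envy-free matching is an $\ell_1$- (respectively $\ell_1\ell_2$-) approximation of the \emph{maximum} matching, not merely of $\MAXEFM$. The paper's proof genuinely needs envy-freeness of $OPT$ in its Case~2, so your argument is both shorter and slightly stronger.
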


\noindent 
Besides the above results, we investigate the parameterized complexity of the problem.
When the stable matching is not feasible, there is at least one lower-quota hospital that is deficient.
\emph{Deficiency}~\cite{HIM16} of an $\HRLQ$ instance with respect to a stable matching $M$ is defined as follows.
\begin{definition}\label{def:deficiency}
	Let $G=(\RR\cup \HH, E)$ be an \HRLQ\ instance and $M$ be a stable matching in $G$. Then deficiency($M$) = $\sum\limits_{h \in \HH}{\max\{0, q^-(h) - |M(h)|\}}$.
\end{definition}
We show that $\MAXEFM$ and $\MINUR$ are W[1]-hard when deficiency is the parameter. We also show a polynomial size kernel and present FPT algorithms for the $\MAXEFM$ problem. The respective parameters are defined in section~\ref{sec:pc_results}.
\begin{theorem}\label{thm:pc_results}
	The following hold:
	\begin{enumerate}[label={(\Roman*)}]
		\item The $\MAXEFM$ and $\MINUR$ are W[1]-hard when deficiency is the parameter. The hardness holds even when residents preference lists are of length at most two or hospital quotas are $0$ or $1$.\label{pc_p1}
		\item The $\MAXEFM$ has a polynomial size kernel.\label{pc_p2}
		\item The $\MAXEFM$ admits $\FPT$ algorithms for several interesting parameters.\label{pc_p3}
	\end{enumerate}
\end{theorem}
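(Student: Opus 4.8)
The plan is to treat the three parts independently, reusing machinery already developed in the excerpt wherever possible.

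\textbf{Part (I): W[1]-hardness with deficiency as the parameter.} I would give a parameterized reduction from \textsc{Multicolored Independent Set}, which is W[1]-hard parameterized by the size $k$ of the sought set. Building on the gadget structure behind Theorem~\ref{thm:strong-inapprox}, the reduction introduces one ``selector'' lower-quota hospital per colour class, which must recruit exactly one resident representing a vertex of that class, together with ``conflict'' lower-quota hospitals that become deficient precisely when two selected residents correspond to adjacent vertices; a feasible envy-free matching that leaves few residents unmatched then exists if and only if the graph has a multicoloured independent set of size $k$. The essential design constraint is that the \emph{total} deficiency of the constructed $\HRLQ$ instance be bounded by a function of $k$ (e.g.\ $O(k)$ or $O(k^2)$) while the numbers of residents and hospitals stay polynomial, so that deficiency is a legitimate parameter of the reduced instance. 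I would then note that the same construction can be realised either with every resident list of length at most two (pushing the combinatorial width into hospital upper-quotas) or with every hospital quota in $\{0,1\}$ (pushing it into list lengths), which yields the two stated restrictions, each consistent with Theorem~\ref{thm:strong-inapprox}\ref{parta}--\ref{partb}. I expect this to be the main obstacle of the whole theorem: one must simultaneously keep deficiency $\le f(k)$ and ensure that it is envy-freeness (an asymmetric, one-sided condition), not merely feasibility, that forces the intended independent-set selection, which requires carefully arranging preference lists so that every ``wrong'' selection creates an unavoidable envy-pair.

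\textbf{Part (II): polynomial kernel.} I would design reduction rules that shrink the instance while preserving the optimum of $\MAXEFM$ (with respect to the parameters defined in section~\ref{sec:pc_results}): (i) delete residents and zero-lower-quota hospitals that provably cannot influence any optimal feasible envy-free matching -- intuitively, hospitals that can never be the most preferred still-available hospital of a resident who is forced away from her choice, and residents whose deletion can neither create nor destroy an envy-pair nor affect feasibility; and (ii) truncate preference lists past the position beyond which no envy-pair can ever be triggered. After exhaustive application, the surviving residents, hospitals, list lengths and quotas are all bounded by a polynomial in the parameter, giving the kernel. The delicate point here is soundness of the deletion rules: around each deleted vertex I would need an exchange argument showing that an optimal solution of the reduced instance can always be lifted back to an optimal solution of the original.

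\textbf{Part (III): FPT algorithms.} For the parameters under which tractability is claimed I would use bounded-depth branching driven by the observation that, in any matching, blocking behaviour can only be caused by residents who are ``forced'' into a less preferred lower-quota hospital, and the number of such residents is controlled by the parameter (for instance total lower quota, or deficiency together with the maximum upper quota). The algorithm enumerates, for each lower-quota hospital, the residents it recruits from outside the set it would fill in the lower-quota-free stable matching; there are $f(\text{parameter})$ many such guesses. Each guess fixes those assignments, removes the involved residents and the now-saturated lower quotas, and leaves a residual $\HR$-type instance on which one runs the polynomial-time procedure that extends a feasible matching to a maximum envy-free matching -- the exact version of the subroutine underlying Theorems~\ref{thm:CL} and~\ref{thm:app_maximal}, which becomes exact once lower quotas are gone. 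Returning the best matching over all guesses solves $\MAXEFM$; for structural parameters such as treewidth I would instead use a standard dynamic program over a tree decomposition, tracking for each hospital-separator the number of recruited residents and the ``envy frontier''. The obstacle in each case is to prove that some optimal matching deviates only boundedly (in the parameter) from the stable assignment on every lower-quota hospital, so that the enumeration both has $f(\text{parameter})$ size and is guaranteed to hit an optimum.
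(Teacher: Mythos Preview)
For Part~(I) you are making the problem much harder than it is. The paper's entire argument fits in three lines: the $\IS$-to-$\MAXEFM$ reduction of Section~\ref{sec:hardness} (Fig.~\ref{fig:reduction_is}) already has the property that the unique stable matching in $G'$ leaves only the hospital $x$ deficient, with deficiency exactly $k$. Since $\IS$ is W[1]-hard parameterized by the solution size $k$, and the reduction is polynomial with the new parameter equal to the old one, it is an FPT reduction as written. No Multicolored Independent Set, no selector or conflict gadgets, no ``main obstacle'' --- the existing construction does everything. The same observation applies verbatim to the $0/1$-quota variant (Fig.~\ref{fig:reduction_is1}), where the deficiency is again~$k$. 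Your plan is not wrong, but it re-invents machinery that is already sitting in the paper.

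For Part~(II) your description is too generic to match the paper or to evaluate on its own. The paper does not kernelize with respect to deficiency (it cannot, by Part~(I)); instead it fixes three different parameters --- the maximum matching size $\ell$, the deepest position $p$ of any lower-quota hospital in a resident list, and the maximum number $t$ of non-$\LQ$ hospitals shared by two residents' lists --- and builds the kernel by taking the vertex cover $X$ induced by a stable matching and applying an explicit edge-marking scheme in the style of~\cite{AGRSZ18} that bounds $|E(X,I)|$ by $O(\ell(p + t\ell))$. Your proposed rules (``delete vertices that cannot influence an optimum'', ``truncate lists past where envy can be triggered'') do not name a parameter and do not give a size bound, so they are not yet a proof sketch.

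For Part~(III) you are close in spirit: the paper indeed enumerates assignments of residents to lower-quota hospitals and extends each candidate by a polynomial-time subroutine. The crucial subroutine, however, is not the one behind Theorem~\ref{thm:CL} or Theorem~\ref{thm:app_maximal} but Lemma~\ref{lem:sea_maxm}: Algorithm~\ref{algo:sea} returns a \emph{maximum} (not merely maximal) envy-free matching containing a given feasible envy-free $M$. With that lemma in hand the FPT algorithms are immediate for the parameter pair $(q,\ell)$, for $(d,s)$, and for $|R'|$; no treewidth dynamic programming is used or needed, and your proposed obstacle (bounding the deviation from the stable assignment) never arises because the enumeration is over all assignments, not over deviations.
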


\noindent{\bf {Results on relaxed stability: }} 
We prove that the $\MAXRSM$ problem is $\NP$-hard and is also hard to approximate, but has a better approximation behavior than the $\MAXEFM$ problem.
\begin{theorem}\label{thm:inapprox_rsm}
	The $\MAXRSM$ problem is $\NP$-hard and cannot be approximated within a factor of $\frac{21}{19}$ unless $\Poly=\NP$. The result holds even when all quotas are at most one.
\end{theorem}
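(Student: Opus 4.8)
The plan is to prove Theorem~\ref{thm:inapprox_rsm} by a gap-preserving reduction to $\MAXRSM$ from a suitably restricted version of the maximum-cardinality stable matching problem with ties and incomplete lists (MAX-SMTI), which is $\NP$-hard to approximate within a factor $\frac{21}{19}$ even when ties are one-sided, of length two, and all preference lists have bounded length (and is, plausibly, the same source as the reduction behind Theorem~\ref{thm:strong-inapprox}). Before designing the reduction I would first record the shape of relaxed stability when all quotas lie in $\{0,1\}$: a hospital $h$ with $q^-(h)=q^+(h)=1$ imposes no constraint on the unique resident it is matched to, so that resident may freely lie in blocking pairs; a hospital with $q^-(h)=0$ behaves exactly as under ordinary stability (none of its residents may block); and, independently, no unmatched resident may lie in a blocking pair. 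Thus, in this regime, a feasible relaxed stable matching is a stable matching in which one ``exempted'' resident per lower-quota hospital is allowed to block.

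The reduction itself: from a MAX-SMTI instance, turn men into residents and women into hospitals, taking the variant whose ties sit on the side mapped to hospitals, since lower quotas live on hospitals. A woman with no tie becomes a $[0,1]$ hospital and hence simply carries an ordinary stability constraint. For each woman $w$ with a tie $\{r_i,r_j\}$ at some rank, I would splice in a small gadget consisting of a $[1,1]$ hospital together with one or two dummy residents, arranged so that precisely the two tie-consistent ways of placing $r_i,r_j$ around $w$ extend to a feasible relaxed stable matching --- the potential blocking pair created by the tie being absorbed by the exempted resident of the gadget's lower-quota hospital --- whereas every tie-inconsistent placement forces a blocking pair through a $[0,1]$ hospital or through an unmatched resident, both forbidden. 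The dummy residents are forced onto the gadgets' lower-quota hospitals.

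Given such a gadget I would establish the correspondence: the restriction of any feasible relaxed stable matching of the constructed instance to the original man--woman edges is a stable matching of the MAX-SMTI instance of the same cardinality, and every stable matching of the MAX-SMTI instance lifts back; hence the size of a maximum feasible relaxed stable matching is an explicit function of $\mathrm{OPT}_{\mathrm{SMTI}}$. The constructed instance is feasible (route all forced/dummy residents to their $[1,1]$ hospitals and complete the matching arbitrarily), so the paper's standing assumption holds. Provided the gadgets are calibrated so that this reduction is gap-preserving, a $\frac{21}{19}$-approximation for $\MAXRSM$ would yield one for MAX-SMTI, a contradiction unless $\Poly=\NP$; $\NP$-hardness of the exact problem follows a fortiori.

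The crux is the gadget design in the second step. It must at once (i) keep the whole instance feasible with all quotas in $\{0,1\}$; (ii) be leak-proof, creating no blocking pair except those absorbed by its own $[1,1]$ hospital or routed through a genuine MAX-SMTI edge; (iii) honour the extra relaxed-stability clause that no \emph{unmatched} resident blocks --- a condition absent from plain stability and easy to trip over with the dummy residents; and (iv) be gap-preserving, so the additive contribution and vertex blow-up of the gadgets must be controlled tightly enough that the $\frac{21}{19}$ ratio survives rather than degrading to something weaker. A further subtlety to guard against is that, unlike stable matchings, relaxed stable matchings can be strictly larger than a stable matching (witness Fig.~\ref{fig:min_max_rsm}); the gadget must close this escape route so that the maximum relaxed stable matching size is exactly the claimed function of $\mathrm{OPT}_{\mathrm{SMTI}}$ and not larger.
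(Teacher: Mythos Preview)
Your plan differs from the paper's route and, as written, has a genuine gap.

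\textbf{What the paper does.} The paper does \emph{not} reduce from MAX-SMTI. It reduces directly from Minimum Vertex Cover, building for each vertex $v_i$ of $G$ a fixed gadget with three residents $r_1^i,r_2^i,r_3^i$ and three hospitals $h_1^i,h_2^i,h_3^i$ (only $h_3^i$ has lower quota $1$), with cross-gadget edges between $r_1^i$ and $h_2^j$ for each edge $(v_i,v_j)\in E$. A case analysis over the seven possible ``patterns'' a gadget can be matched in shows $|OPT(G')|=3|V|-|VC(G)|$, and then the Dinur--Safra gap for MVC is plugged in exactly as in Halld\'orsson et al. to obtain $\frac{21}{19}$. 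So the paper re-runs the HIMY07 template at the level of MVC, with a bespoke gadget for relaxed stability, rather than going through MAX-SMTI.

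\textbf{Where your plan stalls.} Two issues. First, the proof's entire content is the gadget, and you do not construct one; you only list its desiderata. Second, and more substantively, the route through MAX-SMTI is unlikely to preserve the exact $\frac{21}{19}$ ratio. Any gadget simulating a tie must contain a $[1,1]$ hospital, which is matched in \emph{every} feasible matching; with $g$ ties you shift both the ``yes'' and ``no'' optima by the same additive $c\cdot g$. The HIMY07 hard instances have $g=\Theta(N)$ ties, so the gap becomes
\[
\frac{\frac{21-\varepsilon}{27}N + c g}{\frac{19+\varepsilon}{27}N + c g},
\]
which is strictly smaller than $\frac{21}{19}$ for any $c>0$. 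You flag this (``gap-preserving'') but give no mechanism for $c=0$, and it is hard to see one: the $[1,1]$ hospital must be filled. This is precisely why the paper bypasses MAX-SMTI and redoes the MVC reduction with a gadget tailored to relaxed stability, so that the $3|V|-|VC(G)|$ identity holds on the nose. If you want your route to succeed you would need either a tie gadget with zero net additive contribution, or to abandon MAX-SMTI and go back to MVC as the paper does.
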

We complement the above negative result with the following:
\begin{theorem}\label{thm:approx-algo}
	Any feasible $\HRLQ$ instance always admits a relaxed-stable matching. Moreover, there is a polynomial-time algorithm that outputs a $\frac{3}{2}$-approximation to
the maximum size relaxed stable matching.

\end{theorem}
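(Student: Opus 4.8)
The statement has two halves --- guaranteed existence of a feasible relaxed stable matching, and a polynomial-time $\tfrac{3}{2}$-approximation for $\MAXRSM$ --- and I would attack them in that order. For existence, first delete all lower quotas and compute, in linear time by Gale--Shapley, a stable matching $M$ of the resulting $\HR$ instance; being stable, $M$ is trivially relaxed stable but may be infeasible. I would then restore feasibility by a \emph{promotion} procedure: while some hospital $h$ is deficient, pick the resident $r$ that $h$ most prefers among those currently not assigned to $h$ and not already promoted, reassign $r$ from $M(r)$ (possibly $\bot$) to $h$, and mark $r$. This opens a vacancy at $r$'s former hospital, which I resolve by a Gale--Shapley-style improvement chain: repeatedly move in a resident who strictly prefers the currently vacant hospital to its present assignment; the chain only helps the residents it touches and ends either at an unmatched resident (which only increases the size) or at a hospital whose deficiency is handled in a later round. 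Throughout I maintain two invariants: (i) for every $h$, at most $q^-(h)$ residents of $M(h)$ are marked; and (ii) every resident that participates in a blocking pair of $M$ is marked. Invariant (i) holds because a hospital stops being topped up once it reaches $q^-(h)$ residents and, using feasibility of the instance, no hospital is topped up twice; invariant (ii) is the crux, and follows from observing that a freshly created blocking pair can only be due to the vacancy opened during a promotion --- which the improvement chain is designed to eliminate --- or to the relocation of a marked resident, which (ii) tolerates. The output is feasible, relaxed stable, and of size at least $|M|$.

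\noindent\emph{The algorithm.} Starting from the matching $M_0$ produced above, I repeatedly augment: while there is an alternating path from an unmatched resident to an under-subscribed hospital whose application keeps the matching relaxed stable (feasibility is automatic, since along such a path every intermediate hospital both loses and gains exactly one resident), apply one such augmentation. Each round is checkable and applicable in polynomial time and strictly increases $|M|$, so the algorithm halts in polynomial time; call the output $M$.

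\noindent\emph{The ratio $\tfrac{3}{2}$.} Let $M^{*}$ be a maximum-size feasible relaxed stable matching and decompose $F = M \oplus M^{*}$ into alternating paths and cycles in the capacitated sense, so that $|M^{*}| - |M|$ is at most the number of $M$-augmenting components of $F$. There is no augmenting component of length $1$: it would consist of a single edge $(r,h)$ with $r$ unmatched in $M$ and $h$ under-subscribed in $M$, so that $(r,h)$ blocks $M$ with $r$ unmatched --- impossible for a relaxed stable $M$. Hence it remains to rule out $M$-augmenting components with exactly one $M$-edge, i.e.\ length-$3$ paths $r_0 - h_1 - r_1 - h_2$ with $(r_0, h_1), (r_1, h_2) \in M^{*}$ and $(r_1, h_1) \in M$; if every augmenting component has at least two $M$-edges, the Hopcroft--Karp counting argument (with $k = 2$) yields $|M| \ge \tfrac23 |M^{*}|$, i.e.\ $|M^{*}| \le \tfrac32 |M|$. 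For such a length-$3$ path $P$ I would argue that $M' = M \oplus P$ is feasible ($h_1$ loses $r_1$ but gains $r_0$, so no quota drops) and can be turned into a feasible relaxed stable matching of size $|M| + 1$ by re-routing the boundedly many residents that the swap newly places in blocking pairs, using the marks of $M$ and of $M^{*}$ --- which contradicts the algorithm's stopping condition, completing the bound.

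\noindent\emph{Main obstacle.} The decisive step is precisely this last claim: legalising a single length-$3$ augmentation while keeping the number of blocking-pair participants below every hospital's lower quota. This is where the cardinality objective and the relaxed-stability constraint collide most tightly, and I expect the argument to require a careful case analysis of how $M^{*}(h_1)$, $M^{*}(h_2)$ and their marked subsets interleave with the corresponding sets of $M$. The existence half, by contrast, is essentially disciplined bookkeeping around the Gale--Shapley promotion process and the two invariants above.
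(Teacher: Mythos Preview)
Your no-length-$1$/no-length-$3$ strategy for the ratio is the right skeleton, but the approximation half has a genuine gap that your ``Main obstacle'' paragraph flags without resolving. First, the algorithm you propose --- repeatedly apply an augmenting path that preserves relaxed stability --- is not obviously polynomial: you must search over augmenting paths of unbounded length, and deciding whether \emph{some} such path yields a relaxed stable matching is itself a nontrivial problem you leave open. More seriously, your termination argument does not close. To contradict your stopping condition you would need, given the length-$3$ path $P\subseteq M\oplus M^{*}$, to exhibit a relaxed-stability-preserving \emph{augmenting path} from $M$. Instead you argue that $M\oplus P$ can be ``re-routed'' into \emph{some} relaxed stable matching $M''$ of size $|M|+1$. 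Even granting that, $M''$ need not be reachable from $M$ by a single augmenting path, so the algorithm could legitimately have halted at $M$; the two statements are not equivalent, and bridging them is precisely the step you defer. (Your existence argument has a similar soft spot: the improvement chain can pull a resident out of a hospital that is sitting exactly at its lower quota, so ``no hospital is topped up twice'' is not justified as written.)

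The paper sidesteps all of this by engineering the algorithm so that short-path exclusion is a \emph{direct} consequence of its mechanics, with no post-hoc legalisation. It starts from a minimal feasible matching $M_0$, labels residents in $M_0$ as level-$0$ and all others as level-$1$, and then runs resident-proposing Gale--Shapley in which every hospital prefers any level-$1$ resident over any level-$0$ resident (ties broken by the original list); a displaced level-$0$ resident becomes level-$1$ and restarts proposing. Feasibility holds since $|M(h)|\ge|M_0(h)|=q^-(h)$; relaxed stability holds since level-$1$ residents never block and each hospital keeps at most $q^-(h)$ level-$0$ residents. For the ratio: an unmatched resident is level-$1$ and has proposed everywhere, so no length-$1$ path; on a length-$3$ path $r_c$--$h_i$--$r_d$--$h_j$ (with $r_c$ unmatched and $h_j$ under-subscribed in $M$) one deduces that $r_d$ is level-$1$ and $h_i>_{r_d}h_j$, hence $(r_d,h_i)$ blocks $OPT$ --- so in the cloned instance $r_d$ must occupy a lower-quota copy of $h_j$ in $OPT$, while that same copy, being unmatched in $M$, is a non-lower-quota copy, a contradiction. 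The level machinery plays the role your ``marks'' were meant to play, but is tied to a concrete proposal process rather than an open-ended augmentation search, and this is exactly what makes both polynomial running time and the $\tfrac32$ bound go through cleanly.
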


\noindent We summarize our results in Table~\ref{tab:summary_results}.
\begin{table}[!ht]
	\begin{tabular}{|p{0.15\textwidth}|p{0.22\textwidth}|p{0.23\textwidth}|p{0.3\textwidth}|}
\hline
	{\bf Problem} & {\bf Hardness and Inapproximability} & {\bf Approximation and parameterized results} & {\bf Restricted settings}\\
\hline
		\rule{0pt}{2.3ex} $\MAXEFM$ & $\frac{21}{19}$-inapproximability , W[1]-hard w.r.t. deficiency &  $(\ell_1\cdot\ell_2)$-approximation, Polynomial kernel, FPT for several parameters & P-time for CL-restriction,\\
& & & $\ell_1$-approximation for $0/1$ quotas\\
\hline
	$\MINUR$ & $\alpha$-inapproximability for any $\alpha>0$, W[1]-hard w.r.t. deficiency& --& P-time for CL-restriction\\
\hline
	\rule{0pt}{2.3ex} \rule[-1.2ex]{0pt}{0pt} $\MAXRSM$ & $\frac{21}{19}$-inapproximability & $\frac{3}{2}$-approximation & --  \\
\hline
\end{tabular}
	\caption {Summary of our results}
	\label{tab:summary_results}
\end{table}
\raggedbottom

\noindent{\bf Related work:}
Various notions of optimality in the $\HRLQ$ setting have been studied in \cite{FITUY15,HIM16,Yokoi20,NN17,MNNR18}. 
Hamada~et~al.~\cite{HIM16} consider computing feasible matchings with minimum number of blocking pairs or blocking residents. However
both these objectives are $\NP$-hard even under severe restrictions. 
A trade-off between envy-freeness and non-wastefulness is considered in \cite{FITUY15}.
Another notion of optimality, namely {\em popularity} in the $\HRLQ$ problem has been considered in \cite{NN17}.
Popularity can be regarded as {\em overall stability}. It was shown in \cite{NN17} that
 a matching which is popular amongst feasible matchings always
exists. On the flip-side, a popular matching is not guaranteed to be either envy-free or even relaxed stable.
Strategy-proof mechanisms for the lower quota setting are presented in~\cite{FITUY15}.
In a different setting with lower quotas, in which hospitals either fulfill required lower quotas or are closed 
is studied in~\cite{BFIM10}. 
Lower quotas are also studied by \cite{Huang10} and \cite{FK16} in the context of classified stable matchings ($\CSM$).
Parameterized complexity for the problem of computing maximum size stable matching with ties and incomplete lists (without lower quotas) is studied in~\cite{AGRSZ18}.

\vspace{0.2cm}
\noindent{\em Organization of the paper: }
In section~\ref{sec:maxefm_hardness}, we present our results related to $\NP$-hardness and hardness of approximation of $\MAXEFM$ and $\MINUR$ problems.
In section~\ref{sec:maxefm_positive}, we present our algorithmic results for $\MAXEFM$ and $\MINUR$ 
and our approximation results for the $\MAXEFM$ problem.
In section~\ref{sec:pc_results}, we present our parameterized complexity results for the $\MAXEFM$ problem.
In section~\ref{sec:maxrsm}, we present hardness of approximation of $\MAXRSM$ followed by an approximation algorithm. 
In section~\ref{sec:conl}, we conclude and discuss open problems.

\section{Envy-freeness: Hardness and Inapproximability}\label{sec:maxefm_hardness}
In this section we prove $\NP$-hardness of the $\MAXEFM$ and $\MINUR$ problems for arbitrary preference lists. 
We also prove that $\MINUR$ problem cannot be approximated for any $\alpha >0$ and that
$\MAXEFM$ problem cannot be approximated for a factor within $\frac{21}{19}$ unless $\Poly = \NP$.
\subsection{Hardness Results for $\MAXEFM$ and $\MINUR$}\label{sec:hardness}
In order to show the hardness result, we show a reduction from Independent Set ($\IS$) - a well-known $\NP$-complete problem. 
Let $\langle G = (V,E), k\rangle$ be an instance of the
$\IS$ problem where $|V| = n$ and $|E| = m$. The goal in $\IS$
is to decide whether $G$ has an independent set of size  $k$ i.e. a subset of $k$ vertices that are pairwise non-adjacent. 
We create an instance $G' = (\RR \cup \HH, E')$ of the $\MAXEFM$ problem as follows. 
For every vertex $v_i \in V$, we have a vertex-resident $r_i \in \RR$; for every edge $e_j \in E$,
we have an edge-resident $r_j' \in \RR$. Thus $|\RR| = m+n$. 
The set $\HH$ consists of $n+1$ hospitals, one hospital per vertex ($h_i$ for vertex $v_i$) in $G$ and an additional hospital $x$.
The hospital $x$ has both lower-quota and upper-quota as $k$. 
A hospital $h_i$ has zero lower-quota and an upper-quota equal to $1+|E_i|$ where $E_i$ denotes the set of edges incident on $v_i$ in $G$. 
Let $\EE_i$ denote the set of edge-residents corresponding to edges in $E_i$.

\noindent{\bf Preference lists:} The preferences {(which also represent the underlying edge set $E'$)} of the residents and the hospitals can be found in Fig.~\ref{fig:reduction_is}.
A vertex-resident $r_i$ has 
$h_i$ followed by $x$.
An edge-resident $r_j'$ has 
the two hospitals (denoted by $h_{j1}$ and $h_{j2}$) corresponding to the end-points $v_{j1}, v_{j2}$ of the edge $e_j$ in any order.
A hospital $h_i$ has the resident $r_i$ followed by the edge-residents in $\EE_i$ in any strict order.
Finally the hospital $x$ has 
all the $n$ vertex-residents in any strict order.

\noindent {\bf Stable Matching in $G'$:} It is straightforward to verify that a stable matching in $G'$ does not match any resident to $x$, thus making it infeasible. We remark that this property is necessary, otherwise as we prove (see Lemma~\ref{lem:stbl_feas}) that if a stable matching $M_s$ is feasible,
then $M_s$ is itself a $\MAXEFM$.
\begin{figure}[!ht]
	\begin{center}
\begin{minipage}{0.3\columnwidth}
\begin{align*}
    \forall i \in [n],\ r_i &: h_i, x\\
    \forall j \in [m],\ r'_j &: h_{j1}, h_{j2}
\end{align*}
\end{minipage}%
\hfill
\begin{minipage}{0.6\columnwidth}
\begin{align*}
    \forall i \in [n],\ [0,|E_i|+1]\ h_i &: r_i, \EE_i\\
    [k,k]\ x &: r_1, r_2, \ldots, r_n
\end{align*}
\end{minipage}%
	\end{center}
\caption{Preference lists in the reduced instance $G'$ of $\MAXEFM$ from instance $\langle G, k \rangle$ of $\IS$.}
\label{fig:reduction_is}
\end{figure}

\begin{lemma}\label{lem:lem1_is}
$G$ has an independent set of size $k$ iff $G'$ has an envy-free matching of size $m+n$.
\end{lemma}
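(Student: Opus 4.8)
The plan is to prove both directions by exploiting the structure of the gadget: the hospital $x$ is the only lower-quota vertex, it needs exactly $k$ vertex-residents, and a vertex-resident $r_i$ can be "safely" moved off its top choice $h_i$ only if $h_i$ is not left under-subscribed in a way that creates envy from some edge-resident in $\EE_i$. First I would establish the forward direction: given an independent set $S$ of size $k$ in $G$, define $M$ by matching $r_i$ to $x$ for each $v_i \in S$, matching every other vertex-resident $r_i$ (with $v_i \notin S$) to $h_i$, and matching each edge-resident $r_j'$ to one of its two hospitals $h_{j1}, h_{j2}$ chosen so that it goes to an endpoint not in $S$ — which is possible precisely because $S$ is independent, so at least one endpoint of every edge lies outside $S$. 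One then checks $|M| = n + m$ (every resident is matched) and that $M$ is feasible ($x$ gets exactly $k$). Envy-freeness follows by a short case analysis: a vertex-resident $r_i$ with $v_i \in S$ only envies toward someone matched to $h_i$, but $h_i$ prefers $r_i$ to every edge-resident, and $r_i$ is $h_i$'s top choice, so no envy; a vertex-resident $r_i$ with $v_i \notin S$ is already matched to its top choice $h_i$; an edge-resident $r_j'$ could only envy toward residents at $h_{j1}$ or $h_{j2}$, but whichever endpoint hospital it is matched to is its (tied-for-)best option and the other endpoint's hospital $h_{j\iota}$ either has $v_{j\iota} \in S$ (so $h_{j\iota}$ holds only $r_{j\iota}$-type load / is filled to capacity in the right way, and prefers $r_{j\iota}$ and other edge-residents — need to verify $r_j'$ gains nothing) or $h_{j\iota}$ is under capacity but then it is non-blocking. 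The delicate point here is checking that moving $r_i$ to $x$ does not let the edge-residents in $\EE_i$ envy into $h_i$: since $h_i$ has upper-quota $1 + |E_i|$ and after removing $r_i$ it holds at most $|E_i|$ edge-residents, it may be under-subscribed, but envy requires a resident $r' \in M(h_i)$ with $r_j' >_{h_i} r'$ — under-subscription alone does not create an envy-pair (that's a blocking pair, not an envy-pair), so this is fine; this is exactly where envy-freeness, being weaker than stability, is used.

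For the reverse direction, suppose $G'$ has an envy-free matching $M$ of size $n + m$, so every resident is matched. Every edge-resident $r_j'$ is matched to $h_{j1}$ or $h_{j2}$, and every vertex-resident $r_i$ is matched to $h_i$ or $x$. Let $S = \{v_i : M(r_i) = x\}$. Since $M$ is feasible and $x$ has lower-quota $k$ and upper-quota $k$, we get $|M(x)| = k$, hence $|S| = k$ (note $x$'s only neighbors are vertex-residents, so $M(x) \subseteq \{r_1,\dots,r_n\}$). It remains to show $S$ is independent. Suppose not: some edge $e_j = (v_{j1}, v_{j2})$ has both endpoints in $S$, so $M(r_{j1}) = M(r_{j2}) = x$, meaning neither $r_{j1}$ nor $r_{j2}$ is matched to $h_{j1}$ or $h_{j2}$ respectively. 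The edge-resident $r_j'$ is matched to one of these, say $M(r_j') = h_{j2}$. Consider the other one, $h_{j1}$: its top choice $r_{j1}$ is matched to $x$, i.e., $r_{j1} >_{h_{j1}} M(r_j')$-irrelevant — rather, I look for an envy-pair. Actually I want to produce an envy-pair to derive a contradiction: $r_{j1}$ prefers $h_{j1}$ to $x$ (since $h_{j1} >_{r_{j1}} x$ in its list), so $r_{j1}$ envies any resident worse than $r_{j1}$ that $h_{j1}$ holds; and $h_{j1}$'s matched residents are all edge-residents from $\EE_{j1}$, each of which $h_{j1}$ ranks below $r_{j1}$. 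So as long as $h_{j1}$ holds at least one resident, $(r_{j1}, \text{that resident})$ is an envy-pair, contradicting envy-freeness. If instead $h_{j1}$ holds no resident, then $r_j'$ — which prefers $h_{j1}$ at least as much as $h_{j2}$ (they are tied or $h_{j1}>_{r_j'}h_{j2}$) — hmm, if they're tied this is not strict; I should instead note $r_{j1}$ still envies via under-subscription only if that counts, which it doesn't for envy-pairs. The cleanest fix: since $M$ matches all $n+m$ residents and the hospitals $h_{j1}$'s capacity allows it, if $h_{j1}$ held no one, then all $|E_{j1}|$ edge-residents in $\EE_{j1}$ plus $r_{j1}$ are matched elsewhere, but each edge-resident in $\EE_{j1}$ has only two hospitals and can be pushed to its other endpoint — this counting needs care. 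The main obstacle I anticipate is precisely this last sub-case: ruling out $S$ non-independent cleanly. I expect the intended argument routes through the already-announced Lemma~\ref{lem:stbl_feas} or a direct observation that perfect matching of all residents forces, for each $i$ with $v_i\in S$, hospital $h_i$ to be fully loaded with its $|E_i|$ edge-residents (otherwise some edge-resident is displaced with nowhere to go by a counting/Hall-type argument), which then immediately yields the envy-pair $(r_i, r')$ for any $r'\in M(h_i)$ when $v_i \in S$ — wait, that would contradict $M$ being envy-free. So in fact the correct deduction is: $v_i \in S$ forces $M(h_i) = \emptyset$ (envy-freeness forbids $h_i$ holding anyone while its top choice $r_i$ is away at a less-preferred hospital), and then a global counting argument shows the $m$ edge-residents can all be matched only if the set $S$ of "closed" hospitals is independent. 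I would carry out: (1) $v_i\in S \Rightarrow M(h_i)=\emptyset$ via the envy-pair argument; (2) hence every edge-resident must be matched to an endpoint hospital outside $S$; (3) therefore $S$ touches no edge on both sides, i.e. $S$ is independent; combined with $|S|=k$ this finishes the proof. Step (1) is the crux and the rest is bookkeeping.
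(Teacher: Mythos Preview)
Your proposal is correct and follows essentially the same approach as the paper. The paper proves the reverse direction by contrapositive (assume no independent set of size $k$, show every feasible envy-free matching leaves some edge-resident unmatched), while you argue directly (assume an envy-free matching of size $m+n$, show the set $S$ of vertex-residents at $x$ is independent); the core observation in both is exactly your step~(1): if $v_i\in S$ then envy-freeness forces $M(h_i)=\varnothing$, since $r_i$ would envy any resident placed there, and consequently any edge $e_j$ with both endpoints in $S$ leaves $r'_j$ with nowhere to be matched. Your exploratory detours in the reverse direction can be cut --- the three-step summary at the end is already the complete argument --- and in the forward direction the cleanest rule (which the paper also leaves implicit) is to send $r'_j$ to its first-choice endpoint whenever that vertex lies outside $S$, and otherwise to the second; then every edge-resident is either at its top choice or its top-choice hospital is empty, which disposes of inter-edge-resident envy without further case analysis.
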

\begin{proof}
Let $S \subseteq V$ be an independent set of size $k$ in $G$. We construct an envy-free matching
of size $m+n$ in $G'$. 
If $v_i \in S$, match the resident $r_i$ to the hospital $x$. 
When $r_i$ is matched to $x$, any edge-resident $r'_j$ such that edge $e_j$ is incident on $v_i$ cannot be matched to $h_i$, otherwise, $r_i$ envies $r'_j$.
If $v_i \notin S$, match the resident $r_i$ to the hospital $h_i$. 
Since $S$ is an independent set, at least one end-point of every edge is not in $S$. Thus, for an edge $e_j=(v_{j1}, v_{j2})$, 
the corresponding edge-resident $r_j'$ can be matched to at least one of $h_{j1}$ or $h_{j2}$ without causing envy. 
Thus, every vertex-resident and every edge-resident is matched and we have 
an envy-free matching of size $m+n$.

For the other direction, let us assume that $G$ does not have an independent set of size $k$. Consider any envy-free matching $M$ in $G'$. Due to the lower-quota of $x$,
exactly $k$ vertex-residents must be matched to $x$ in $M$. Let $V' = \{v_i \in V \mid M(r_i) = x\}$. Then, $|V'| = k$. Since $V'$ is not an independent set, there exists 
	an edge $e_j = (v_{j1}, v_{j2})$ such that $v_{j1} \in V', v_{j2} \in V'$ that is the residents $r_{j1}$ and $r_{j2}$ are matched to $x$ in $M$. 
This implies that the edge-resident $r_j'$ must be unmatched in $M$, thus $|M| < m+n$. 
\end{proof}

Thus, $\MAXEFM$ is $\NP$-hard. 
This implies that $\MINUR$ is also $\NP$-hard. 
We observe the following for the $\MINUR$ problem. When $G$ has an independent set of size $k$, there are zero residents unmatched in an optimal envy-free matching of $G'$, 
whereas when $G$ does not admit an independent set of
size $k$, every envy-free matching leaves at least one resident unmatched. This immediately implies that there is no $\alpha$-approximation algorithm for $\MINUR$ problem for any $\alpha > 0$. 
Finally, note that in the reduced instance shown in Fig.~\ref{fig:reduction_is}, every resident has exactly two hospitals in its preference list.  
This establishes Theorem~\ref{thm:strong-inapprox}\ref{part1}\ref{parta}.

Above $\NP$-hardness and inapproximability results hold even when the hospital quotas are at most one.
Below we prove this result, however the resident preference lists are no longer of length at most two.
We modify above reduction from $\IS$ problem as follows.
We define the vertex-residents $r_i$, edge-residents $r_j'$ and sets $E_i$, $\EE_i$ as done earlier. Let $q_i = |E_i| +1$ for all vertices $v_i \in V$. For every vertex $v_i \in V$, let $H_i = \{h_i^1, h_i^2, \ldots, h_i^{q_i}\}$ be the set of hospitals corresponding to $v_i$. Let $X = \{x_1, x_2, \ldots, x_k\}$ be also a set of $k$ hospitals. Every hospital in set $H_i$ has zero lower-quota and an upper-quota equal to $1$. Every hospital $x_i \in X$ has both lower and upper-quota equal to $1$.

\noindent{\bf Preference lists:} The preferences of the residents and the hospitals can be found in Fig.~\ref{fig:reduction_is1}. We fix an arbitrary ordering on sets $X$, $H_i$, $\EE_i$. A vertex-resident $r_i$ has the set $H_i$ followed by set $X$. An edge-resident $r_j'$ has two sets of hospitals (denoted by $H_{j1}$ and $H_{j2}$) corresponding to the end-points $v_{j1}, v_{j2}$ of the edge $e_j$ in any order. Every hospital $h \in H_i$ has the vertex -resident $r_i$ as its top-choice followed by the edge-residents in $\EE_i$ in any strict order. Finally the hospitals in $X$ have in their preference lists all the $n$ vertex-residents in any strict order.

\noindent {\bf Stable Matching in $G'$:} It is straightforward to verify that a stable matching in $G'$ does not match any resident to any hospital in set $X$, thus making it infeasible. Recall that this property is necessary.

\begin{figure}[!ht]
\begin{center}
\begin{minipage}{0.3\columnwidth}
\begin{align*}
    \forall i \in [n],\ r_i &: H_i, X\\
    \forall j \in [m],\ r'_j &: H_{j1}, H_{j2}
\end{align*}
\end{minipage}%
\hfill
\begin{minipage}{0.6\columnwidth}
\begin{align*}
	\forall i \in [n],\ t \in [q_i],\ [0,1]\ h_i^t &: r_i, \EE_i\\
	\forall j \in [k],\ [1,1]\ x_j &: r_1, r_2, \ldots, r_n
\end{align*}
\end{minipage}%
	\end{center}
    \caption{Reduced instance $G'$ of $\MAXEFM$ from instance $\langle G, k \rangle$ of $\IS$.}
    \label{fig:reduction_is1}
\end{figure}

Note that every vertex $v_i$ has $q_i$ many hospitals - each with an upper quota of $1$, so the set of hospitals in $H_i$ together have enough quota to get matched with the corresponding vertex-resident $r_i$ and all the edge-residents corresponding to the edges incident on $v_i$. Lemma~\ref{lem:lem_is1} proves the correctness of the reduction. Hence, $\MAXEFM$ and $\MINUR$ are $\NP$-hard even if hospital quotas are at most one.
\begin{lemma}\label{lem:lem_is1}
$G$ has an independent set of size $k$ iff $G'$ has an envy-free matching of size $m+n$.
\end{lemma}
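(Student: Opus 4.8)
The plan is to follow the scheme of Lemma~\ref{lem:lem1_is}, adapted to the block gadget: each vertex $v_i$ is now represented by a block $H_i$ of $q_i=|E_i|+1$ unit-capacity hospitals rather than a single hospital of capacity $|E_i|+1$, and $x$ is replaced by the $k$ unit-capacity hospitals of $X$. For the forward direction, suppose $S\subseteq V$ is an independent set with $|S|=k$. First I would match the $k$ vertex-residents $\{r_i : v_i\in S\}$ to $X$ using a stable matching $M_X$ of the complete bipartite sub-instance they induce with $X$ (both sides have size $k$ and are fully mutually acceptable, so $M_X$ is perfect); stability here prevents any two of these residents from envying each other at $X$. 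This choice forces every hospital in a block $H_i$ with $v_i\in S$ to stay empty, since $r_i$ is the top choice of each hospital of $H_i$ and prefers all of $H_i$ to $X$, so any resident placed in $H_i$ would be envied by $r_i$. It then remains to match, without envy, the vertex-residents $r_i$ with $v_i\notin S$ together with all $m$ edge-residents to the hospitals $\bigcup_{v_i\notin S} H_i$, and I would do this by taking a stable matching $M'$ of the sub-instance $G''$ induced on exactly these residents and hospitals.

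Two facts finish this direction. First, $M'$ saturates all of its residents: an unmatched vertex-resident $r_i$ would block with its top choice $h_i^1$; and if an edge-resident $r'_j$ were unmatched then, picking an endpoint $v_i\notin S$ of $e_j$ (one exists since $S$ is independent), all $|E_i|+1$ unit-capacity hospitals of $H_i$ would have to be occupied, each by a resident its hospital prefers to $r'_j$, hence by $r_i$ or a member of $\EE_i\setminus\{r'_j\}$; but there are only $|E_i|$ such candidates (recall $|\EE_i|=|E_i|$) for $|E_i|+1$ distinct slots, a contradiction. Second, $M:=M_X\cup M'$ is envy-free in $G'$: envy among residents matched by $M'$ would be a blocking pair inside $G''$; the blocks $H_i$ with $v_i\in S$ are empty, so no one envies there; each $r_i$ with $v_i\notin S$ sits at its overall top choice $h_i^1$; and only vertex-residents list $X$, while a vertex-resident matched to $X$ can be envied only by a vertex-resident outside $S$, who does not prefer $X$ to $h_i^1$. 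Hence $|M| = k + (n-k) + m = m+n$.

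For the converse, let $M$ be a feasible envy-free matching of $G'$ of size $m+n$, so every resident is matched. Since each $x_j$ has lower (and upper) quota $1$ and only vertex-residents are acceptable to hospitals of $X$, exactly $k$ vertex-residents are matched into $X$; put $V'=\{v_i : M(r_i)\in X\}$, so $|V'|=k$. If $V'$ is not independent, some edge $e_j=(v_{j1},v_{j2})$ has both endpoints in $V'$, so $r_{j1}$ and $r_{j2}$ are matched into $X$, while $r'_j$ is matched, say, into the block $H_{j1}$. But $r_{j1}$ prefers every hospital of $H_{j1}$ to its assignment in $X$ and is the top choice of every hospital in $H_{j1}$, so $(r_{j1}, r'_j)$ is an envy-pair with respect to $M$, a contradiction. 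Thus $V'$ is an independent set of size $k$ in $G$.

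The step I expect to be the main obstacle is the envy-freeness check in the forward direction: because a vertex is now a whole block of unit-capacity hospitals, edge-residents lying in a common block, or in the two blocks at the endpoints of an edge, could envy one another depending on the arbitrary internal orderings of the hospitals' preference lists. Routing all edge-residents through a stable matching of the reduced instance $G''$ is exactly what neutralizes this, and the identity $|\EE_i|=|E_i|=q_i-1$ is what guarantees that this stable matching leaves no resident unmatched; the remainder is a routine adaptation of the proof of Lemma~\ref{lem:lem1_is}.
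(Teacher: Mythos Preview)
Your proposal is correct and follows essentially the same approach as the paper: both split the forward direction into a stable matching between the $k$ chosen vertex-residents and $X$, and a stable matching between the remaining residents and the non-chosen blocks $\bigcup_{v_i\notin S}H_i$, and both argue the converse by showing that the two blocks at the endpoints of any ``bad'' edge must be empty, stranding the corresponding edge-resident. Your pigeonhole argument that the second stable matching saturates every edge-resident (the $|E_i|$ candidates versus $|E_i|+1$ slots count) is in fact a bit more explicit than the paper's corresponding step; the only thing you use without stating its one-line justification is that each $r_i$ with $v_i\notin S$ lands at $h_i^1$, which follows immediately from stability since $r_i$ is $h_i^1$'s top choice.
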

\begin{proof}
	Let $S \subseteq V$ be an independent set of size $k$ in graph $G$. We construct an envy-free matching in $G'$ which matches all the residents in $\RR$. Let $T$ be the set of residents corresponding to the vertices $v_i \in S$ i.e. $T = \{r_i \mid v_i \in S\}$. Match $T$ with $X$ using Gale and Shapley stable matching algorithm~\cite{GS62}. Let $T'$ be the set of residents corresponding to vertices $v_i$ such that $v_i \notin S$ i.e. $T' = \{r_1, r_2, \ldots, r_n\} \setminus T$. Let $H'$ be the set of hospitals appearing in sets $H_i$ such that $v_i \in S$ i.e. $H' = \bigcup\limits_{i: v_i \in S}{H_i}$. Match $T' \cup \{r_1', \ldots, r_m'\}$ with $\HH \setminus H' \setminus X$ using Gale and Shapley stable matching algorithm.

We now prove that the matching is envy-free. No pair of residents in $T$ form an envy-pair because we computed a stable matching between $T$ and $X$. No pair of residents in $T' \cup \{r_1', \ldots, r_m'\}$ form an envy-pair because we computed a stable matching between this set and $H \setminus H' \setminus X$. Since, all hospitals in $H'$ are forced to remain empty, no resident in set $T$ can envy a resident in set $\{r_1', \ldots, r_m'\}$. A resident in $T'$ is matched to a higher preferred hospital than any hospital in $X$, hence such resident cannot envy any resident in $T$. Thus, the matching is envy-free.

We now prove that the matching size is $m+n$. Every vertex-resident $r_i$ is matched either with some hospital in $X$ or some hospital in $H_i$. Since, $S$ is an independent set, at least one end point of every edge is not in $S$. So for every edge $e_t=(v_{t1}, v_{t2})$, there is at least one hospital in sets $H_{t1}$, $H_{t2}$ that can get matched with the edge-resident $r_t'$ without causing envy. Thus, every edge-resident is also matched. Thus, we have an envy-free matching of size $m+n$.

For the other direction, let us assume that $G$ does not have an independent set of size $k$. Consider an arbitrary envy-free matching $M$ in $G'$. Due to the unit lower-quota of every $x_i \in X$, exactly $k$ vertex-residents must be matched to hospitals in $X$. Let $S \subseteq V$ be the set of vertices $v_i$ such that the corresponding vertex-resident $r_i$ is matched to some hospital in $X$ in $M$, i.e. $S = \{v_i \mid M(r_i) \in X\}$. So, $|S| = k$. Since, $S$ is not an independent set, there exists at least two vertex-residents $v_s$ and $v_t$ matched to some hospital in $X$ such that the edge $e_j = (v_s, v_t) \in E$. Due to the preference lists of the hospitals, all the hospitals in both $H_s$ and $H_t$ sets must remain empty in $M$ to ensure envy-freeness. This implies that the edge-resident $r_j'$ must be unmatched. This implies that $|M| < m+n$. This completes the proof of the lemma.
\end{proof}
This establishes Theorem~\ref{thm:strong-inapprox}\ref{part1}\ref{partb}.
From Theorem~\ref{thm:strong-inapprox}\ref{part1} the $\NP$-hardness holds for $\HRLQ$ instances
in which the residents have preference list of length at most $2$ {\em or} hospital quotas are at most one. In the case when {\em both}
the restrictions hold, we show in section~\ref{sec:012R} that $\MAXEFM$ admits a polynomial time algorithm.
Now, we prove our claim that a stable matching, when feasible is a maximum size envy-free matching.
\begin{lemma}\label{lem:stbl_feas}
	A stable matching, when feasible is an optimal solution of $\MAXEFM$.
\end{lemma}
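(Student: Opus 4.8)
The plan is to show two things about a stable matching $M_s$ that happens to be feasible: first, that it is itself envy-free, and second, that no envy-free matching can be strictly larger than it. Together these give that $M_s$ is an optimal solution of $\MAXEFM$.

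For the first part, I would argue by contraposition: any envy-pair is a blocking pair. Suppose $(r, r')$ is an envy-pair w.r.t. $M_s$ with $M_s(r') = h$, $(r,h) \in E$, $h >_r M_s(r)$ and $r >_h r'$. Since $r' \in M_s(h)$, the hospital $h$ is matched to at least one resident, namely $r'$, whom $r$ beats in $h$'s list; hence $(r,h)$ is a blocking pair w.r.t. $M_s$ by Definition~\ref{def:stable}. This contradicts stability of $M_s$. So $M_s$ is envy-free, and since it is also feasible, it is a feasible envy-free matching.

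For the second part, I would use the Rural Hospitals Theorem~\cite{Roth86} together with a comparison argument. Let $N$ be any feasible envy-free matching in the $\HRLQ$ instance; I want $|N| \le |M_s|$. The key observation is that an envy-free matching in the $\HRLQ$ instance is in particular envy-free in the underlying $\HR$ instance $G$ obtained by ignoring lower quotas. I then want to invoke (or quickly re-derive) the fact that in an $\HR$ instance, every matching that is envy-free is dominated in size by the stable matching: more precisely, the number of residents matched in any envy-free matching is at most the number matched in a stable matching. A clean way to see this is to note that envy-freeness already implies that if a resident $r$ is unmatched in $N$, then every hospital $h$ on $r$'s list is fully subscribed in $N$ with residents all preferred to $r$ by $h$ (otherwise $(r, h)$ would either be a wasteful-type violation giving envy, or $r$ would envy some $r' \in N(h)$ with $r >_h r'$); combined with the Rural Hospitals Theorem, which fixes the occupancy $|M_s(h)|$ of each hospital across all stable matchings, one deduces that the set of residents $N$ leaves unmatched is at least as large as the set $M_s$ leaves unmatched. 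Formally: every hospital $h$ has $|N(h)| \le q^+(h) = $ (its stable occupancy is $\le q^+(h)$), but more carefully one shows $\sum_h |N(h)| \le \sum_h |M_s(h)|$ by charging each resident matched in $N$ but not in $M_s$ to a distinct resident matched in $M_s$ but not in $N$, using that $M_s$ is resident-optimal among stable matchings and an augmenting-path / alternating-path argument in $N \triangle M_s$.

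The main obstacle I expect is making the size comparison in the second part fully rigorous: the cleanest route is to reduce to a known property of stable matchings in $\HR$ (that a stable matching has maximum size among envy-free matchings in an instance with no lower quotas, which follows from the fact that unmatched residents in an envy-free matching see all their hospitals full with better residents, and the Rural Hospitals Theorem pins down those full counts). If such a statement is not directly citable, I would prove it via the alternating-path argument on $N \triangle M_s$: any connected component is a path or even cycle; a path ending at a resident unmatched in $M_s$ but matched in $N$ would, tracing edges, produce either a blocking pair or an envy-pair for $M_s$ or for $N$, contradiction; hence $|N| \le |M_s|$. Since $M_s$ is itself feasible and envy-free, this shows $M_s$ is a maximum-size feasible envy-free matching, i.e.\ an optimal solution of $\MAXEFM$.
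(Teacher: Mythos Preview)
Your first attempt at part two contains a genuine error: envy-freeness does \emph{not} force every hospital on an unmatched resident's list to be fully subscribed with better residents. Envy-freeness forbids only envy-pairs, not wastefulness; if $r$ is unmatched in $N$ and $h$ is under-subscribed in $N$, there is no resident $r'$ for $r$ to envy, so no violation occurs. The paper's own Fig.~\ref{fig:min_max_efms} gives a concrete counterexample: $N_1=\{(r_1,h_2)\}$ is envy-free, yet $r_2,\ldots,r_n$ are unmatched while $h_1$ (on all their lists) is empty. So the sentence ``otherwise $(r,h)$ would either be a wasteful-type violation giving envy\ldots'' is simply false, and the Rural Hospitals Theorem cannot rescue this line: it pins down occupancies only across \emph{stable} matchings, not across arbitrary envy-free ones.

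Your fallback alternating-path argument is the right idea and is essentially what the paper does. Two remarks. First, with hospital capacities exceeding one, $N\triangle M_s$ does not decompose into simple paths and even cycles, so the sentence ``any connected component is a path or even cycle'' needs care (clone hospitals, or abandon the symmetric-difference framing). The paper avoids this by running a direct chasing argument: start from a resident $r_1$ matched in $M_e$ but unmatched in $M_s$; stability of $M_s$ forces $h_1=M_e(r_1)$ full in $M_s$ with residents all preferred to $r_1$; pick one such $r_2\in M_s(h_1)\setminus M_e(h_1)$; envy-freeness of $M_e$ forces $r_2$ matched in $M_e$ to some $h_2>_{r_2}h_1$; iterate. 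The paper then argues this process can never terminate (even handling revisited hospitals), contradicting finiteness. Second, the paper phrases the conclusion as ``every resident unmatched in $M_s$ is unmatched in every envy-free matching,'' which is slightly stronger than $|N|\le|M_s|$ and is exactly what the chasing argument yields.
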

\begin{proof}
	We prove this by showing that an unmatched resident in a stable matching is also unmatched in every envy-free matching.
	Let $M_e$ be an envy-free matching. Since, the set of residents matched in a stable matching is invariant of the matching (by Rural Hospital Theorem~\cite{Roth86}), let's pick an arbitrary stable matching $M_s$. Suppose for the sake of contradiction that resident $r_1$ is matched to hospital $h_1$ in $M_e$ and unmatched in $M_s$. Then, hospital $h_1$ must be full in $M_s$ and $\forall r' \in M_s(h_1), r' >_{h_1} r_1$. 
	In $M_e$ at least one of the residents from $M_s(h_1)$ is not matched to $h_1$. Let that resident be $r_2$. Then envy-freeness of $M_e$ implies that $r_2$ is matched in $M_e$ such that $M_e(r_2) = h_2 >_{r_2} h_1$. 
	By similar argument as earlier, hospital $h_2$ must be full in $M_s$ and $\forall r' \in M_s(h_2), r' >_{h_2} r_2$. 
This process must terminate since there are finite number of residents and each is matched to at most one hospital. But, we prove that such process cannot terminate, implying that the claimed $r_1$ does not exist. 
Since $M_e$ is envy-free, once the process hits a resident $r_i$, it must find a higher preferred hospital $h_i$ than $h_{i-1}$. 
While at a hospital, the process always finds a new resident. 
While at a resident, it may hit some hospital more than once. We prove that in the latter case also, eventually it must find a distinct resident.
	
	Assume that for some resident $r_i$, we have $M_e(r_i) = h_k \in \{h_1, h_2, \ldots, h_{i-2}\}$. 
Hospital $h_k$ is matched to $r_i$ and $r_{k}$ in $M_e$ and matched to $r_{k+1}$ in $M_s$. If $r_{k+1}$ is the only resident matched to hospital $h_k$ in $M_s$, then $(r_k, h_k)$ and $(r_i, h_k)$ block $M_s$. Thus, there must exist another resident $r'$ distinct from $r_1$ to $r_i$ such that $r' >_{h_k} r_i$ and $r' >_{h_k} r_k$ and $r' \in M_s(h_k)$. Thus, we showed that even at the repeated hospital $h_k$, the process must find a distinct resident.
\end{proof}


\subsection{Inapproximability of $\MAXEFM$}
\label{sec:inapprox}

In this section, we show a reduction from the Minimum Vertex Cover ({$\MVC$}) to the $\MAXEFM$ which proves
inapproximability {when hospital quotas are at most one. We note that this result subsumes the $\NP$-hardness result proved in section~\ref{sec:hardness} when hospital quotas are at most one. Nevertheless
the $\NP$-hardness result proved in section~\ref{sec:hardness} additionally hold for the instance when resident list is of length at most two. It also shows strong inapproximability for the $\MINUR$ problem and is also useful in showing W[1]-hardness when deficiency is the parameter (Section~\ref{sec:pc_results}).}

Let $G = (V, E)$ be an instance of {$\MVC$} problem. The goal of {$\MVC$} problem is to find a minimum size vertex cover i.e. a subset $V'$ of vertices such that each edge has at least one end-point included in $V'$.
Our reduction is inspired by the reduction showing inapproximability of the maximum size weakly stable matching 
problem in the presence of ties and incomplete lists ({\sf MAX SMTI}) by Halld\'{o}rsson~et~al.~\cite{HIMY07}. 
The template of our reduction in this section (and also in section~\ref{sec:maxrefm}) is similar to  that in \cite{HIMY07};
however the actual gadgets in both the sections bear no resemblance to the one in \cite{HIMY07}.

\noindent{\bf Reduction:}
Given a graph $G = (V,E)$, which is an instance of the {$\MVC$} problem, we construct an instance $G'$ of the {$\MAXEFM$} problem.
Thus $G'$ is an $\HRLQ$ instance. 
Corresponding to each vertex $v_i$ in $G$, $G'$ contains a gadget with three residents $r_1^i, r_2^i, r_3^i$, and four hospitals $h_1^i, h_2^i, h_3^i, h_4^i$. 
All hospitals have an upper-quota of $1$ and 
$h_3^i$ has a lower-quota of $1$.
Assume that the vertex $v_i$ has $d$ neighbors in $G$, namely  $v_{j_1}, v_{j_2}, \dots, v_{j_d}$. The preference lists of the three residents
and four hospitals, are as in Fig.~\ref{fig:hardness}. We impose an arbitrary but fixed ordering of the neighbors of $v_i$ in $G$ which is used as 
a strict ordering of neighbors in the preference lists of resident $r_3^i$ and hospital $h_1^i$ in $G'$. Note that $G'$ has $N = 3|V|$ residents and $\frac{4N}{3}$ hospitals. 
\begin{figure}[!ht]
\begin{minipage}{0.4\textwidth}
\begin{align*}
    {r_1^i} &: h_1^i\\
    {r_2^i} &: h_2^i, h_1^i, h_3^i\\
    {r_3^i} &: h_4^i, h_2^i, h_1^{j_1}, \dots, h_1^{j_d}, h_3^i
\end{align*}
\end{minipage}%
\hfill
\begin{minipage}{0.5\textwidth}
\begin{align*}
    [0,1]\text{ } {h_1^i} &: r_2^i, r_3^{j_1}, \dots, r_3^{j_d}, r_1^i\\
    [0,1]\text{ } {h_2^i} &: r_2^i, r_3^i\\
    [1,1]\text{ } {h_3^i} &: r_3^i, r_2^i\\
    [0,1]\text{ } {h_4^i} &: r_3^i
\end{align*}
\end{minipage}%
\caption{Preferences of residents and hospitals corresponding to a vertex $v_i$ in $G$ for $\MAXEFM$.}
\label{fig:hardness}
\end{figure}

\begin{lemma}
	The instance $G'$ does not admit any stable and feasible matching.
\end{lemma}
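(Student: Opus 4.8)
The plan is to show that every stable matching in $G'$ leaves hospital $h_3^i$ deficient for at least one vertex $v_i$, by tracing how the Gale–Shapley process behaves on each gadget. The key observation is that $h_3^i$ has a unit lower quota, but $h_3^i$ sits at the very bottom of both $r_2^i$'s and $r_3^i$'s preference lists; the only residents who could fill $h_3^i$ are $r_2^i$ and $r_3^i$ themselves. So $h_3^i$ is feasible in a stable matching $M$ if and only if $M$ matches one of $r_2^i,r_3^i$ to $h_3^i$. I would argue that this never happens: intuitively, $r_2^i$ strictly prefers $h_2^i,h_1^i$ over $h_3^i$, and $r_3^i$ strictly prefers $h_4^i,h_2^i$ and all the $h_1^{j_t}$ over $h_3^i$; I will show that in any stable matching $r_2^i$ and $r_3^i$ always get one of those better options, so neither is ever forced down to $h_3^i$.

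First I would pin down the local structure. Consider a fixed gadget $i$. Hospital $h_4^i$ is acceptable only to $r_3^i$ and has nobody else on its list, so if $r_3^i$ is matched in any stable matching it must be to $h_4^i$ or something it prefers to $h_4^i$ — but $h_4^i$ is $r_3^i$'s top choice, so actually $r_3^i$ is matched to $h_4^i$ in every stable matching, since otherwise $(r_3^i,h_4^i)$ blocks ($h_4^i$ is under-subscribed as no other resident can fill it). This already rules out $M(r_3^i)=h_3^i$. Next, with $r_3^i$ parked at $h_4^i$, consider $r_2^i$: its top choice is $h_2^i$, whose only other acceptable resident is $r_3^i$; but $h_2^i$ prefers $r_2^i$ to $r_3^i$, and in any case $r_3^i$ is matched to its top choice $h_4^i$ and would never move to $h_2^i$. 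Hence in any stable matching $h_2^i$ is not matched to $r_3^i$, so if it is under-subscribed then $(r_2^i,h_2^i)$ blocks; therefore $M(r_2^i)=h_2^i$ in every stable matching. In particular $M(r_2^i)=h_3^i$ is impossible as well. Combining: in every stable matching $M$ of $G'$, no resident is matched to $h_3^i$ for any $i$, so $|M(h_3^i)|=0<1=q^-(h_3^i)$ and $M$ is infeasible.

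The remaining point to check is that $G'$ does admit \emph{some} stable matching at all (so that "no stable and feasible matching" is the right reading rather than vacuous) — this is immediate since $G'$ is an $\HR$ instance obtained by ignoring lower quotas, and by \cite{GS62} a stable matching always exists. I would also double-check the cross-gadget interactions do not rescue $h_3^i$: the only edges leaving gadget $i$ are $r_3^i$'s entries $h_1^{j_t}$ and the symmetric entries $r_3^{j_t}$ on $h_1^i$'s list; these only concern the $h_1$ hospitals and the $r_3$ residents, and the argument above shows $r_3^i$ is matched to $h_4^i$ regardless of what happens at those $h_1$ hospitals, so the local analysis is insensitive to the global graph structure.

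The main obstacle is making the "$r_2^i$ is matched to $h_2^i$ in every stable matching" step airtight: one must be careful that $h_2^i$ cannot end up matched to $r_3^i$ via some chain of rejections elsewhere. The clean way around this is the order in which I established things — first $r_3^i\mapsto h_4^i$ in every stable matching (which is unconditional, since $h_4^i$ has a degree-one list), and only then conclude that $h_2^i$'s only possible partner is $r_2^i$. Once that ordering is respected, the rest is a short blocking-pair check, and the Rural Hospitals Theorem (already invoked in the excerpt) guarantees the conclusion holds uniformly across all stable matchings.
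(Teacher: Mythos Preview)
Your argument is correct, but it takes a longer route than the paper. The paper simply exhibits the matching $M_s=\{(r_1^i,h_1^i),(r_2^i,h_2^i),(r_3^i,h_4^i)\mid i=1,\ldots,n\}$, observes that it is stable because every resident receives her \emph{first} choice (so no blocking pair can exist), notes that $M_s$ leaves every $h_3^i$ empty and hence is infeasible, and then invokes the Rural Hospitals Theorem to conclude that \emph{all} stable matchings of $G'$ are infeasible.

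Your approach instead argues directly, gadget by gadget, that in \emph{every} stable matching one must have $r_3^i\mapsto h_4^i$ and then $r_2^i\mapsto h_2^i$, so $h_3^i$ is always empty. This is a valid blocking-pair analysis and in fact does not really need the Rural Hospitals Theorem at all (your final paragraph invokes it, but your earlier steps already establish the claim for an arbitrary stable matching). What the paper's approach buys is brevity: once you notice that all first choices are simultaneously attainable, stability of $M_s$ is immediate and the Rural Hospitals Theorem finishes the job in one line. What your approach buys is self-containment: you never need the invariance of hospital occupancies across stable matchings.
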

\begin{proof}
The matching $M_s = \{(r_1^i,h_1^i), (r_2^i, h_2^i), (r_3^i, h_4^i)\mid i = 1, \ldots, n\}$ is stable in $G'$ since every resident gets the first choice.
Since $M_s$ leaves $h_3^i$ deficient for each $i$, it is not feasible. By the 
	Rural Hospitals Theorem~\cite{Roth86}
	, we conclude that $G'$ does not admit any stable and feasible matching.
\end{proof}
\begin{lemma}
\label{lem:red-correct}
	Let $G'$ be the instance of the {$\MAXEFM$} problem constructed as above from an instance $G = (V,E)$ of the {$\MVC$} problem. If $VC(G)$ denotes a minimum vertex
	cover of $G$ and $OPT(G')$ denotes a maximum size envy-free matching in $G'$, then $|OPT(G')| = 3|V| - |VC(G)|$.
\end{lemma}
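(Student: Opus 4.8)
The plan is to establish the equality $|OPT(G')| = 3|V| - |VC(G)|$ by proving two inequalities, one for each direction, via explicit constructions together with a structural analysis of envy-free matchings on a single gadget. First I would analyze, gadget by gadget, what an envy-free matching can look like. Fix a vertex $v_i$ and its gadget. Since $h_3^i$ has lower-quota $1$, some resident must occupy $h_3^i$; the only candidates are $r_2^i$ and $r_3^i$. The key local observation is a case split: either $r_3^i$ is matched to $h_3^i$ (its last choice), or $r_2^i$ is matched to $h_3^i$. In the first case $r_3^i$ envies no one matched to a hospital it prefers unless those hospitals are themselves "blocked off"; in particular if $r_3^i \in M(h_3^i)$ then, to avoid $r_3^i$ envying, $h_4^i$ must be empty and $h_2^i$ must not be matched to $r_2^i$ (else $r_3^i$ would be second on $h_2^i$'s list but $r_2^i$ is first, so actually $r_3^i$ cannot envy $r_2^i$ there) — I would work out that the worst case forces at most $2$ residents of the gadget matched. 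In the second case, $r_2^i \in M(h_3^i)$; then $r_2^i$ prefers $h_2^i$ and $h_1^i$ to $h_3^i$, so to avoid $r_2^i$ envying, whoever sits in $h_2^i, h_1^i$ must be preferred by those hospitals to $r_2^i$ — but $r_2^i$ is the top choice of both $h_2^i$ and $h_1^i$, so $h_2^i$ and $h_1^i$ must be empty of anyone, i.e. $h_2^i$ is unmatched and $h_1^i$ is unmatched. This is exactly the "expensive" configuration, and it is the one that propagates: $h_1^i$ being empty means the neighbors' residents $r_3^{j}$ lose a hospital option.

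For the forward direction ($|OPT(G')| \ge 3|V| - |VC(G)|$), given a minimum vertex cover $C = VC(G)$, I would construct an envy-free matching $M$ of size $3|V|-|C|$ as follows. For $v_i \notin C$: match $(r_1^i, h_1^i), (r_2^i, h_2^i), (r_3^i, h_4^i)$ — all three residents matched, all to first choices, so no envy is generated from within. For $v_i \in C$: match $(r_1^i, h_1^i), (r_2^i, h_3^i)$ and leave $r_3^i$ unmatched (or match $r_3^i$ to $h_4^i$ — I need to check which keeps envy-freeness; matching $r_3^i$ to $h_4^i$ is fine and keeps it matched, but then $r_2^i \in M(h_3^i)$ forces $h_2^i$ empty, which is consistent; however I must recount). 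The intended count is: each covered vertex loses exactly one matched resident, so $|M| = 3|V| - |C|$. I then verify envy-freeness globally: the only cross-gadget edges are $(r_3^j, h_1^i)$ for $v_j$ a neighbor of $v_i$; $r_3^j$ could envy the occupant of $h_1^i$, but in our $M$ the hospital $h_1^i$ is always matched to $r_1^i$ — wait, $r_1^i$ is last on $h_1^i$'s list, so $r_3^j$ (which appears before $r_1^i$) would envy. So the construction must instead leave $h_1^i$ matched to some $r_3^j$ whenever possible, or matched to $r_2^i$. I would repair this: for $v_i \in C$, match $(r_2^i, h_1^i), (\text{?}, h_3^i), \dots$ — since $v_i \in C$ covers all its edges, every neighbor $j$ has its edge covered and $r_3^j$ can be routed elsewhere; putting $r_2^i$ (the top choice of $h_1^i$) into $h_1^i$ kills all envy toward $h_1^i$. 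Then $h_3^i$ needs $r_3^i$, and $h_2^i, h_4^i$ stay empty, $r_1^i$ stays unmatched — again one resident lost per covered vertex. I would carefully write the final construction so that (a) every $h_1^i$ with $v_i \in C$ holds $r_2^i$, (b) every $h_3^i$ holds either $r_2^i$ or $r_3^i$, (c) for $v_i \notin C$ all three residents are matched, and then check all envy-pairs vanish.

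For the reverse direction ($|OPT(G')| \le 3|V| - |VC(G)|$), given an envy-free feasible $M$, I define $C = \{v_i : M \text{ loses a resident in gadget } i\}$, i.e. $|M(\text{gadget } i)| \le 2$, and argue $C$ is a vertex cover and $|M| \le 3|V| - |C|$. Here the local case analysis pays off: I show that if gadget $i$ has all three residents matched then necessarily $h_1^i$ is empty or... no — rather, I show that for every edge $(v_i,v_j)$, at least one of gadgets $i,j$ loses a resident. Suppose both gadgets $i$ and $j$ keep all three residents. Full gadget $i$ forces (by the local analysis) $r_2^i \in M(h_2^i)$ or similar, and in particular forces $h_1^i$ to be matched to someone who is not envied by $r_3^j$; tracing through, a full gadget $i$ must have $h_1^i$ matched to $r_1^i$ (since $r_2^i$ is needed for $h_2^i$ to keep $r_2^i$ matched without envy, and $r_3^i$ is at $h_4^i$), but then $r_3^j \in M(h_2^j)$ or $M(h_4^j)$ — if $r_3^j$ is at $h_4^j$ it does not envy into $h_1^i$ only if $h_1^i$'s occupant beats $r_3^j$, but $r_1^i$ loses to $r_3^j$, contradiction. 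So the edge is covered. Hence $C$ is a vertex cover, $|C| \ge |VC(G)|$, and $|M| = \sum_i |M(\text{gadget }i)| \le 3|V| - |C| \le 3|V| - |VC(G)|$.

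The main obstacle is the local gadget analysis — pinning down exactly which configurations of a single gadget are envy-free given the external constraint (whether the shared hospitals $h_1^i$ and the neighbors' $h_1^j$ are occupied and by whom), and showing cleanly that "all three residents matched in gadget $i$" forces a state of $h_1^i$ incompatible with a neighbor gadget also being full. Getting the case split right, and making sure the forward construction genuinely produces an envy-free matching (not merely a large one), is where all the real work lies; the counting and the vertex-cover bookkeeping are then routine.
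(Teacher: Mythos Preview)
Your high-level strategy---prove both inequalities by (i) building an envy-free matching from a minimum vertex cover and (ii) extracting a vertex cover from any envy-free matching via a per-gadget case analysis---is exactly the paper's approach. The execution, however, has real errors.

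First, your forward construction for $v_i\notin C$, namely $(r_1^i,h_1^i),(r_2^i,h_2^i),(r_3^i,h_4^i)$, is \emph{infeasible}: it leaves $h_3^i$ empty, violating its lower quota. The only feasible way to match all three residents of gadget $i$ is $(r_1^i,h_1^i),(r_2^i,h_2^i),(r_3^i,h_3^i)$, which is what the paper uses. Second, neither of your attempted constructions for $v_i\in C$ works. In your repair you put $r_2^i$ at $h_1^i$ and $r_3^i$ at $h_3^i$; but then $r_3^i$ sits at its \emph{last} choice and, for any neighbour $v_j\notin C$ (whose gadget is fully matched with $r_1^j$ at $h_1^j$), $r_3^i$ envies $r_1^j$ at $h_1^j$. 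The paper's fix is different and cleaner: for $v_i\in C$ take $(r_2^i,h_3^i),(r_3^i,h_4^i)$, leaving $h_1^i$ and $h_2^i$ empty and $r_1^i$ unmatched. Now $r_3^i$ is at its top choice (so it never envies out), and $h_1^i$ is empty (so no $r_3^j$ can envy into it). Envy-freeness across an edge $(v_i,v_j)$ then follows because at least one endpoint is in $C$.

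Your reverse direction has the right shape but the wrong ``full gadget'' picture: you repeatedly place $r_3^j$ at $h_4^j$ and then argue about envy, but $h_4^j$ is $r_3^j$'s \emph{top} choice, so a resident there envies no one. The correct observation (forced by feasibility, as above) is that a fully matched gadget must be $(r_1^i,h_1^i),(r_2^i,h_2^i),(r_3^i,h_3^i)$; if two adjacent gadgets $i,j$ are both in this pattern, then $r_3^i$ (at $h_3^i$, its last choice) envies $r_1^j$ at $h_1^j$ since $h_1^j$ ranks $r_3^i$ above $r_1^j$. That is the contradiction you want. Your local envy checks also have a few preference directions reversed (you note one yourself mid-sentence); re-read the lists for $h_1^i,h_2^i,h_4^i$ carefully before redoing the case split.
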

\begin{proof}
We first prove that $|OPT(G')| \ge 3|V| - |VC(G)|$.
Given a minimum vertex cover $VC(G)$ of $G$, we construct an envy-free matching 
	$M$ for $G'$ as follows: $M=\{(r_2^i,h_3^i), (r_3^i,h_4^i)\mid v_i\in VC(G)\}\cup \{(r_1^i,h_1^i), (r_2^i,h_2^i), (r_3^i,h_3^i)\mid v_i\notin VC(G)\}$.
Thus, for a vertex $v_i$ in the vertex cover, $M$ leaves the resident $r_1^i$ unmatched, thereby matching only two residents in the gadget corresponding
to $v_i$. For a vertex $v_i$ that is not in the vertex cover, $M$ matches all the three residents in the gadget corresponding to $v_i$. 
Hence $|OPT(G')| \geq |M|=2|VC(G)|+3(|V|-|VC(G)|)=3|V|-|VC(G)|$. 
\begin{claim}
$M$ is envy-free in $G'$. 
\end{claim}
\begin{claimproof}
It is straightforward to verify that there is no envy-pair consisting of two residents 
associated with the same vertex $v_i \in G$. 
Now, without loss of generality, assume that $r_3^i$ envies a resident matched to hospital $h_1^j$.
By construction of our preference lists, $(v_i, v_j)$ is an edge in $G$. Thus, at least one of $v_i$ or $v_j$ must belong to $VC(G)$.
If $v_i\in VC(G)$, then by the construction of $M$,  $r_3^i$ is matched to its top choice hospital $h_4^i$ in $M$ and hence $r_3^i$
 cannot participate in an envy-pair. Also, $h_1^i$ is left unmatched, hence $r_3^j$ can not form an envy-pair with $M(h_1^i)$.
\end{claimproof}

\begin{figure}
  \begin{center}
  \begin{subfigure}{.2\textwidth}
    \begin{tikzpicture}
    \node (A1) at (0,0.6) {};
    \node (A) at (0,0) {$r_1^i$};
    \node (B) at (0,-0.6) {$r_2^i$};
    \node (C) at (0,-1.2) {$r_3^i$};
    \node (a) at (1.5,0) {$h_1^i$};
    \node (b) at (1.5,-0.6) {$h_2^i$};
    \node (c) at (1.5,-1.2) {$h_3^i$};
    \node (x) at (1.5,-1.8) {$h_4^i$};
    \node (c1) at (1.5,-2.4) {};
    \draw[dotted] (-0.2,0.3) rectangle (1.7,-2);
    \path [-] (A) edge node[left] {} (a);
    \path [-] (B) edge node[left] {} (b);
    \path [-] (C) edge node[left] {} (c);
    \end{tikzpicture}
    \caption{Pattern $1$}
    \label{fig:sub2.1}
  \end{subfigure}%
  \begin{subfigure}{.2\textwidth}
    \begin{tikzpicture}
    \node (A1) at (0,0.6) {};
    \node (A) at (0,0) {$r_1^i$};
    \node (B) at (0,-0.6) {$r_2^i$};
    \node (C) at (0,-1.2) {$r_3^i$};
    \node (a) at (1.5,0) {$h_1^i$};
    \node (b) at (1.5,-0.6) {$h_2^i$};
    \node (c) at (1.5,-1.2) {$h_3^i$};
    \node (x) at (1.5,-1.8) {$h_4^i$};
    \node (c1) at (1.5,-2.4) {};
    \draw[dotted] (-0.2,0.3) rectangle (1.7,-2);
    \path [-] (B) edge node[left] {} (a);
    \path [-] (C) edge node[left] {} (c);
    \end{tikzpicture}
    \caption{Pattern $2$}
    \label{fig:sub2.2}
  \end{subfigure}%
  \begin{subfigure}{.2\textwidth}
    \begin{tikzpicture}
    \node (A1) at (0,0.6) {};
    \node (A) at (0,0) {$r_1^i$};
    \node (B) at (0,-0.6) {$r_2^i$};
    \node (C) at (0,-1.2) {$r_3^i$};
    \node (a) at (1.5,0) {$h_1^i$};
    \node (b) at (1.5,-0.6) {$h_2^i$};
    \node (c) at (1.5,-1.2) {$h_3^i$};
    \node (x) at (1.5,-1.8) {$h_4^i$};
    \node (c1) at (1.5,-2.4) {};
    \draw[dotted] (-0.2,0.3) rectangle (1.7,-2);
    \path [-] (A1) edge node[left] {} (a);
    \path [-] (B) edge node[left] {} (b);
    \path [-] (C) edge node[left] {} (c);
    \end{tikzpicture}
    \caption{Pattern $3$}
    \label{fig:sub2.3}
  \end{subfigure}%
  \begin{subfigure}{.2\textwidth}
    \begin{tikzpicture}
    \node (A1) at (0,0.6) {};
    \node (A) at (0,0) {$r_1^i$};
    \node (B) at (0,-0.6) {$r_2^i$};
    \node (C) at (0,-1.2) {$r_3^i$};
    \node (a) at (1.5,0) {$h_1^i$};
    \node (b) at (1.5,-0.6) {$h_2^i$};
    \node (c) at (1.5,-1.2) {$h_3^i$};
    \node (x) at (1.5,-1.8) {$h_4^i$};
    \node (c1) at (1.5,-2.4) {};
    \draw[dotted] (-0.2,0.3) rectangle (1.7,-2);
    \path [-] (B) edge node[left] {} (c);
    \path [-] (C) edge node[left] {} (c1);
    \end{tikzpicture}
    \caption{Pattern $4$}
    \label{fig:sub2.4}
  \end{subfigure}%
  \begin{subfigure}{.2\textwidth}
    \begin{tikzpicture}
    \node (A1) at (0,0.6) {};
    \node (A) at (0,0) {$r_1^i$};
    \node (B) at (0,-0.6) {$r_2^i$};
    \node (C) at (0,-1.2) {$r_3^i$};
    \node (a) at (1.5,0) {$h_1^i$};
    \node (b) at (1.5,-0.6) {$h_2^i$};
    \node (c) at (1.5,-1.2) {$h_3^i$};
    \node (x) at (1.5,-1.8) {$h_4^i$};
    \node (c1) at (1.5,-2.4) {};
    \draw[dotted] (-0.2,0.3) rectangle (1.7,-2);
    \path [-] (B) edge node[left] {} (c);
    \path [-] (C) edge node[left] {} (x);
    \end{tikzpicture}
    \caption{Pattern $5$}
    \label{fig:sub2.5}
  \end{subfigure}%
  \end{center}
  \caption{Five patterns possibly caused by $v_i$}
  \label{fig:reverse}
\end{figure}

Now we prove that $OPT(G')\leq 3|V|-|VC(G)|$.
Let $M=OPT(G')$ be a maximum size envy-free matching in $G'$. 
Consider a vertex $v_i \in V$ and the corresponding residents and hospitals in $G'$. Note that $h_3^i$ must be matched in $M$ for $i=1,\ldots,n$.
Hence following two cases arise. Refer Fig.~\ref{fig:reverse} for the patterns mentioned below.

 {\em Case 1: $M(h_3^i)=r_3^i$.} Then either $M(r_2^i)=h_2^i, M(r_1^i)=h_1^i$ which is pattern $1$  or $M(r_2^i)=h_1^i$ and $r_1^i$ is unmatched (pattern $2$), 
or $M(r_2^i)=h_2^i$ and $M(h_1^i)=r_3^j$ for some $(v_i,v_j)\in E$ (pattern $3$).

 {\em Case 2: $M(h_3^i)=r_2^i$.} Then $(r_1^i,h_1^i)\notin M$, otherwise $r_2^i$ has a justified envy towards $r_1^i$. Also, $(r_3^i,h_2^i)\notin M$ otherwise $r_2^i$ 
has a justified envy towards $r_3^i$. Hence $M(r_3^i)=h_4^i$ (pattern $5$) or $M(r_3^i)=h_1^j$ for some $(v_i,v_j)\in E$ (pattern $4$).

\noindent {\bf Vertex cover $C$ of $G$ corresponding to $M$:}
Using $M$, we now construct the set $C$ of vertices in $G$ which constitute a vertex cover of $G$. If $v_i$ is matched as pattern~1 then $v_i \notin C$, else $v_i \in C$. 
From the following claim, it follows that $C$ is a vertex cover of $G$.
\begin{claim}
If $(v_i,v_j)\in E$, then the gadgets corresponding to both of them can not be matched in pattern $1$ in any envy-free matching $M$.
\end{claim}
\begin{claimproof}
	Let, if possible, there exist an edge $(v_i,v_j)\in E$ such that the gadgets corresponding to both $v_i$ and $v_j$ are matched in pattern~1 in $M$. Thus $M(r_3^i)=h_3^i$ and $M(h_1^j)=r_1^j$. But then $r_3^i$ has justified envy towards $r_1^j$ (via hospital $h_1^j$), contradicting the envy-freeness of $M$.
\end{claimproof}

\noindent {\bf Size of $C$: }
Each gadget could be matched in any of the patterns. Patterns~3 and pattern~4 occur in pairs for a pair of vertices $v_i, v_j$, that is, $M(h_1^i)=r_3^j$ or vice-versa. 
It can be verified that there is no envy-pair among the six residents corresponding to the vertices $v_i, v_j$ matched as pattern~3 and pattern~4 respectively. 
We say that  pattern~3 contributes $2.5$ edges to $M$ and pattern~4 contributes $1.5$ edges. Hence together they contribute to an average matching size of 2. Only pattern $1$ contributes $3$ edges to $M$.
Now it is straightforward to see that  $|OPT(G')|=2|C| + 3(|V|-|C|) = 3 |V| - |C|$. Thus $|VC(G)|\leq |C| = 3|V|-|OPT(G')|$. This completes the proof of the lemma.
\end{proof}

Now we prove the hardness of approximation for the {$\MAXEFM$} problem. 
We assume without loss of generality that an approximation algorithm for the {MAXEFM} problem computes a maximal envy-free matching.
{Lemma~\ref{lem:hardness-approx}} is analogous to Theorem~3.2 and Corollary~3.4 from~\cite{HIMY07}. {Proof of Lemma~\ref{lem:hardness-approx} uses the result of Lemma~\ref{lem:red-correct}}. For the sake of completeness, we give the proof in Appendix~\ref{app:missing_proofs}.
This establishes Theorem~\ref{thm:strong-inapprox}\ref{part2}.

\begin{lemma}
\label{lem:hardness-approx}
It is \NP-hard to approximate the {$\MAXEFM$} problem within a factor of $\frac{21}{19} - \delta$, for any constant $\delta > 0$, even when the quotas of all hospitals are either 0 or 1.
\end{lemma}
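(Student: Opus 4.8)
The plan is to combine the exact size relation from Lemma~\ref{lem:red-correct}, namely $|OPT(G')| = 3|V| - |VC(G)|$, with the known inapproximability of \MVC\ restricted to bounded-degree graphs, exactly as in the template of Halld\'orsson et al.~\cite{HIMY07}. First I would recall that \MVC\ is \NP-hard to approximate within some constant; more precisely, there is a constant $\epsilon>0$ and a class of graphs (say cubic or bounded-degree graphs) on which it is \NP-hard to distinguish instances with $|VC(G)| \le a|V|$ from those with $|VC(G)| \ge b|V|$ for suitable constants $a<b$. The specific numbers $\frac{21}{19}$ come from plugging the sharpest such gap available for the relevant graph class into the affine transformation $n \mapsto 3|V| - n$; I would cite the same source as \cite{HIMY07} (Berman--Karpinski style bounds for bounded-degree \MVC), since our reduction produces exactly the same linear relationship between the optimum of the target problem and $|VC(G)|$ that their reduction produces for {\sf MAX SMTI}.

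The key steps, in order: (1) State the gap version of bounded-degree \MVC: for every $\delta>0$ it is \NP-hard to decide whether a graph $G$ on $N_0$ vertices has $|VC(G)|$ at most $c_1 N_0$ or at least $c_2 N_0$, where the constants $c_1<c_2$ are exactly those yielding the ratio $\frac{21}{19}$ after the transformation. (2) Apply the construction of Fig.~\ref{fig:hardness} to $G$, obtaining $G'$ with $N = 3|V|$ residents and all hospital quotas in $\{0,1\}$, and invoke Lemma~\ref{lem:red-correct} to get $|OPT(G')| = 3|V| - |VC(G)|$ exactly. (3) Translate the two cases: in the ``small cover'' case $|OPT(G')| \ge 3|V| - c_1|V| = (3-c_1)|V|$, and in the ``large cover'' case $|OPT(G')| \le (3-c_2)|V|$; hence an approximation algorithm with ratio strictly better than $\frac{3-c_1}{3-c_2}$ would distinguish the two cases, contradicting $\Poly\neq\NP$. (4) Verify that the constants have been chosen so that $\frac{3-c_1}{3-c_2} = \frac{21}{19}$ (or rather $\ge \frac{21}{19}-\delta$ for every $\delta>0$, absorbing the additive slack in the \MVC\ gap), which is the arithmetic that \cite{HIMY07} already carries out for the identical affine relation. (5) Finally, justify the remark made just before the lemma statement that we may assume the approximation algorithm returns a \emph{maximal} envy-free matching: given any envy-free matching one can greedily add edges until maximal without decreasing size and without destroying envy-freeness (this is the content of the maximal-extension algorithm of Krishnapriya et al.~\cite{MNNR18}), so restricting to maximal outputs loses nothing.

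The main obstacle I expect is step (4): pinning down precisely which bounded-degree \MVC\ hardness constants to import so that the affine image $\frac{3-c_1}{3-c_2}$ lands exactly on $\frac{21}{19}$, and making sure the additive error terms in the gap-\MVC\ statement (which scale like $o(|V|)$ or like an arbitrarily small $\delta|V|$) propagate through to give the ``$-\delta$ for any constant $\delta>0$'' form rather than a clean equality. This is not conceptually hard but requires being careful about which version of the Berman--Karpinski-type bound is being used and matching it to the one used in \cite{HIMY07}; since our size formula in Lemma~\ref{lem:red-correct} is $3|V|-|VC(G)|$, identical in form to the \cite{HIMY07} reduction, the arithmetic is literally the same and the proof is deferred to Appendix~\ref{app:missing_proofs} on those grounds.
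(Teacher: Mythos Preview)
Your plan is essentially the paper's proof: use Lemma~\ref{lem:red-correct} to convert a gap-hardness statement for \MVC\ into a gap for $\MAXEFM$, then read off the inapproximability ratio. The only correction is the source of the \MVC\ gap: the paper (following \cite{HIMY07}) invokes Dinur--Safra~\cite{DS02} on \emph{general} graphs, not Berman--Karpinski on bounded-degree graphs, and no degree restriction is needed since the reduction in Fig.~\ref{fig:hardness} works for arbitrary $G$. Concretely, one substitutes $p=\tfrac{1}{3}$ into the Dinur--Safra gap $(1-p+\epsilon)|V|$ versus $(1-\max\{p^2,4p^3-3p^4\}-\epsilon)|V|$, which via $|OPT(G')|=3|V|-|VC(G)|$ and $N=3|V|$ yields $|OPT(G')|\ge\tfrac{21-\epsilon}{27}N$ versus $|OPT(G')|<\tfrac{19+\epsilon}{27}N$; a $(\tfrac{21}{19}-\delta)$-approximation then separates the two cases for $\epsilon$ small enough relative to $\delta$. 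Your step~(5) about maximality is harmless but not actually used in the argument.
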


\section{Envy-freeness: Algorithmic  results}\label{sec:maxefm_positive}
Given the $\NP$-hardness of $\MAXEFM$ and $\MINUR$ problems, we turn our attention to special cases
of $\HRLQ$ instances which are tractable. One such restriction is the $\CL$-restriction.
Next, we prove an approximation guarantee of any maximal envy-free matching. 
\subsection{Polynomial time algorithm for the $\CL$-restricted instances}\label{sec:cl}
In this section, we consider the $\MAXEFM$ problem on {\sf CL}-restricted {$\HRLQ$} instances with general quotas. 
We first note that every $\HRLQ$ instance with \CL-restriction admits
a feasible envy-free matching. This follows from  the characterization result of Yokoi~\cite{Yokoi20} for instances that admit a feasible envy-free matching.
We now
present a simple modification to the standard Gale and Shapley algorithm~\cite{GS62} that computes a maximum size
envy-free matching. 
Our algorithm (Algorithm~\ref{alg:hrlq_cl}) is based on the $\ESDA$ algorithm presented in~\cite{FITUY15}. In~\cite{FITUY15}, only empirical results without theoretical guarantees on the size of the output matching are presented. Their work also assumes that the underlying graph is complete. We prove that Algorithm~\ref{alg:hrlq_cl} produces maximum size envy-free matching assuming only the $\CL$-restriction. 
We start with an empty matching $M$. Throughout the algorithm, we maintain two parameters:
\begin{itemize}
\item  $d$ :  denotes the deficiency of the matching $M$, that is, 
the sum of deficiencies of all hospitals with positive lower-quota.
\item $k$: the number of unmatched residents w.r.t. $M$.
\end{itemize}
In every iteration, an unmatched resident $r$ who has not yet exhausted its preference list, proposes to the most preferred hospital $h$.
If $h$ is deficient w.r.t. $M$, $h$ accepts $r$'s proposal. 
If $h$ is not deficient, then we consider two cases. Firstly, assume
$h$ is under-subscribed w.r.t. $M$. In this case $h$ accepts the $r$'s proposal only if there are enough unmatched residents to 
satisfy the deficiency of the other hospitals, that is, $k > d$. Next assume that $h$ is fully-subscribed. In this case,
$h$ rejects the least preferred resident in $M(h) \cup r$.
This process continues until some unmatched resident has not exhausted its preference list.
\begin{algorithm}[!ht]
	\SetAlgoNoLine
	\SetAlgoNoEnd
	\KwIn{An $\HRLQ$ instance $G = (\RR \cup \HH, E)$ with $\CL$-restriction}
	\KwOut{Maximum size envy-free matching}
        let $M = \phi$; $\ \ \  d = \sum\limits_{h:q^-(h) > 0}{q^-(h)}$; $\ \ \ k  = |\RR|$\;
	\While{there is an unmatched resident $r$ which has at least one hospital not yet proposed to}{
        $r$ proposes to the most preferred hospital $h$\;
	\If{$|M(h)| < q^-(h)$}{
        	$M = M \cup \{(r, h)\}$\;
		reduce $d$ and $k$ each by $1$\;
	}
	\Else{
		\If{$|M(h)| == q^+(h)$}{
			let $r$' be the least preferred resident in $M(h) \cup r$\;
			$M(h) = M(h) \cup r \setminus r'$\;
		}
		\eIf{$|M(h)| < q^+(h)$ and $k == d$}{ \label{forcereject}
			let $r'$ be the least preferred resident in $M(h) \cup r$\;
			$M(h) = M(h) \cup r \setminus r'$\;
		}
		{
                        \tcp{we have $|M(h)| < q^+(h)$ and $k > d$}
            		$M = M \cup \{(r, h)\}$\;
            		reduce $k$ by $1$\;
		}
	}
	}
	return $M$\;
\caption{$\MAXEFM$ in $\CL$-restricted $\HRLQ$ instances.}
\label{alg:hrlq_cl}
\end{algorithm}

We observe the following about the algorithm. 
Since the input instance is feasible, we start with $k \ge d$ and this inequality is maintained throughout the algorithm. 
If no resident is rejected due to $k = d$ 
in line~\ref{forcereject}, 
then our algorithm degenerates to the Gale and Shapley algorithm~\cite{GS62}
and hence 
outputs a stable matching.
Algorithm~\ref{alg:hrlq_cl} is an adaptation of Gale and Shapley algorithm~\cite{GS62} and runs 
in linear time in the size of the instance.
Lemma~\ref{lem:cl_ef} proves the correctness of our algorithm and establishes Theorem~\ref{thm:CL}.

\begin{lemma}\label{lem:cl_ef}
Matching $M$ computed by Algorithm~\ref{alg:hrlq_cl} is feasible and maximum size envy-free.
\end{lemma}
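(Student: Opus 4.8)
The plan is to establish, in this order: a handful of invariants maintained by the run; feasibility of $M$; envy-freeness of $M$; and maximality of $|M|$, the last being the crux.

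I would first record the invariants. Since the instance is feasible, $|\RR|\ge\sum_h q^-(h)$, so the run starts with $k\ge d$. A short case analysis over the outcomes of a single proposal (acceptance into a deficient hospital; acceptance into an under-subscribed non-deficient hospital, which the algorithm performs only when $k>d$; a replacement when $h$ is fully subscribed; a forced rejection when $k=d$) shows that each step changes $k-d$ by $0$ or by $-1$, and a $-1$ occurs only out of a state with $k>d$; hence $k-d$ stays nonnegative and is non-increasing. The same analysis shows $k$ is non-increasing and, for every hospital $h$, that the load $|M(h)|$ never decreases (a replacement or a forced rejection swaps one resident for another; an acceptance only adds). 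Two consequences I will reuse: once $k=d$ the equality persists, so after the first forced rejection we have $k=d$ until termination; and a resident never proposes to the same hospital twice, so the run halts after $O(|E|)$ proposals. Feasibility is then immediate: if some $h$ with $q^-(h)>0$ were deficient in $M$, load-monotonicity would make $h$ deficient throughout, and since a deficient hospital accepts every proposal, every resident that ever proposed to $h$ would still lie in $M(h)$; if some resident $r$ were unmatched it would have exhausted its list, which by the $\CL$-restriction contains $h$, forcing $r\in M(h)$, a contradiction; hence $k=0$, so $d\le k=0$, contradicting the deficiency of $h$.

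Next, envy-freeness. Suppose $(r,r')$ is an envy-pair with $M(r')=h$, $h>_r M(r)$, and $r>_h r'$. Since $h>_r M(r)$, resident $r$ proposed to $h$ at some earlier step (residents propose down their lists) and is not matched to $h$ in $M$. Consider the step at which $h$ parts with $r$: either $r$ is rejected on proposing, or $r$ is accepted and later replaced or force-rejected; in every case $r$ is the least preferred among $M(h)$ together with the current proposer, so immediately afterwards $h$ prefers everyone in $M(h)$ to $r$. I would then show this persists to the end. If $h$ parted with $r$ in a replacement, then $h$ is, and remains, fully subscribed, and each later replacement only raises its worst resident. If $h$ parted with $r$ in a forced rejection, then $k=d$ holds from that step onward, so $h$ never performs a (free) acceptance again and only carries out replacements or forced rejections, again only raising its worst resident. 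In either case $M(h)$ consists solely of residents $h$ prefers to $r$, contradicting $r'\in M(h)$ with $r>_h r'$. Hence $M$ is envy-free.

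Finally, maximality. If no forced rejection ever fires, the algorithm coincides with the resident-proposing Gale--Shapley procedure, so $M$ is the resident-optimal stable matching of the underlying $\HR$ instance; it is feasible by the above, so Lemma~\ref{lem:stbl_feas} makes it a maximum-size envy-free matching. Otherwise a forced rejection fires, and by the invariants $k=d$ holds from that point until termination. In this regime I would argue that no unmatched resident can exhaust its list: its list contains every hospital with positive lower quota (the $\CL$-restriction); whenever $k=d>0$ at least one such hospital is deficient and hence accepts any proposal; and a resident rejected by a lower-quota hospital leaves that hospital non-deficient forever, since loads never decrease. So a resident rejected by all lower-quota hospitals would force $d=0$, hence $k=0$, contradicting that this very resident is still unmatched. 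Tracing an unmatched resident's proposals --- possibly through a chain of replacements, each of which leaves the number of matched residents unchanged --- therefore ends with someone being admitted to a deficient hospital, which strictly decreases $k$. Since $k$ is non-increasing and the loop cannot terminate while an unmatched resident still has an un-proposed hospital, the run ends with $k=0$: $M$ matches every resident and is trivially maximum. The main obstacle is precisely this last case --- showing that the ``$k=d$ regime'' created by a forced rejection never strands a resident. This is where the $\CL$-restriction is essential (without it a displaced resident could run out of hospitals before reaching a deficient one), and where load-monotonicity and the monotonicity of $k-d$ do the real work; everything else is a routine adaptation of the standard deferred-acceptance analysis.
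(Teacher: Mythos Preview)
Your proof is correct and follows essentially the same three-part skeleton as the paper (feasibility, envy-freeness, then the dichotomy ``no forced rejection $\Rightarrow$ stable'' versus ``forced rejection $\Rightarrow$ $k=d=0$ at termination''), invoking Lemma~\ref{lem:stbl_feas} in the stable branch exactly as the paper does. The main difference is one of rigor rather than strategy: you explicitly isolate and verify the invariants (load monotonicity, $k\ge d$, persistence of $k=d$) and use them to argue persistence of ``everyone in $M(h)$ beats $r$'' after $h$ parts with $r$, whereas the paper asserts these facts more tersely; your maximality argument also re-derives $d=0$ inside the $k=d$ branch rather than simply citing the already-proved feasibility, which is redundant but harmless. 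The ``tracing a chain of replacements'' sentence is unnecessary once you have the contradiction that an unmatched resident exhausting its list forces $d=0=k$, but it does not introduce any error.
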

\begin{proof}
We first prove that the output is feasible. Assume not. Then at termination, $d > 0$, that is, there is at least one
hospital $h$ that is deficient w.r.t. $M$. 
It implies that $k \geq 1$. Thus there is some resident $r$ unmatched w.r.t. $M$. Note that $r$ could not have been rejected by every hospital
	{with positive lower-quota} 
since $h$ appears in the preference list of $r$ and $h$ is deficient at termination.
This contradicts the termination of our algorithm and proves the feasibility of our matching.

Next, we prove that $M$ is envy-free. Suppose for the sake of contradiction, $M$ contains an envy-pair $(r',r)$ 
	such that $(r,h) \in M$ where $r' >_h r$ and $h >_{r'} M(r')$. This implies that
 $r'$ must have proposed to $h$ and $h$ rejected $r'$. If $h$ rejected $r'$ because $|M(h)| = q^+(h)$, $h$ is matched 
with better preferred residents than $r'$, a contradiction to the fact that $r' >_h r$. 
If $h$ rejected $r'$ because $k = d$, then there are two cases. 
Either $r$ was matched to $h$ when $r'$ proposed to $h$.
In this case, in line~\ref{forcereject} our algorithm rejected the least preferred resident in $M(h)$.
This contradicts that $r' >_h r$. Similarly if $r$ proposed to $h$ later, since $k = d$,
the algorithm rejected the least preferred resident again contradicting the presence of any envy-pair.

Finally, we show that $M$ is a maximum size envy-free matching.
We have $k \geq d$ at the start of the algorithm. If during the algorithm, $k=d$ at some point, then at 
the end of the algorithm we have $k=d=0$, implying that, we have an $\RR$-perfect matching and hence the maximum size matching. 
Otherwise, $k > d$ at the end of the algorithm and then we output a stable matching which is maximum size envy-free by Lemma~\ref{lem:stbl_feas}. 
\end{proof}

\subsection{Approximation guarantee of a maximal envy-free matching}
As mentioned earlier, Krishnapriya et al.~\cite{MNNR18} present an algorithm to compute a maximal envy-free matching that extends a given envy-free matching.
However, their results are empirical and no theoretical guarantees are known about the size of a maximal envy-free matching.
In this section we present approximation guarantee of a maximal envy-free matching. 
Below we prove the first part of the Theorem~\ref{thm:app_maximal} for the restricted instance where hospital quotas are at most $1$.
\begin{proof}[Proof of Theorem~\ref{thm:app_maximal}\ref{p1}]
Let $M$ be a maximal envy-free matching and $OPT$ be a $\MAXEFM$. Let $R_{OPT}$ and $R_M$ denote the set of residents matched in $OPT$ and $M$ respectively.
Let $X_1$ be the set of residents matched in both $M$ and $OPT$. Let $X_2$ be the set of residents matched in $OPT$ but not matched in $M$.
        Thus, $|R_{OPT}| = |X_1| + |X_2|$.
        Since $X_1 = R_{OPT} \cap R_M \subseteq R_M$, so $|X_1| \leq |R_M|$.
        Our goal is to show that 
        $|X_2| \le |R_M| \cdot (\ell_1 - 1)$. Once we establish that,
	it is immediate that a maximal envy-free matching is an $\ell_1$-approximation.

We show that for every resident $r \in X_2$ we can associate
a unique hospital $h_r$ such that $h_r$ is unmatched in $M$
and there exists a resident $r'$ in the neighbourhood of $h_r$ such that $r'$ is matched in $M$.
Denote the set of such hospitals as $Y_2$. Note that due to the uniqueness assumption $|X_2| = |Y_2|$.
Since each resident has a preference list of length at most $\ell_1$, any $r'$ who is matched in $M$ can have at most $\ell_1-1$ neighbouring hospitals which are unmatched in $M$.
Thus $|X_2| = |Y_2| \le |R_M| \cdot (\ell_1 - 1)$ which establishes the
approximation guarantee.
To finish the proof we show a unique hospital $h_r$ with desired properties that can be associated with each $r \in X_2$.
Let $r \in X_2$ such that $h = OPT(r)$.
We have following two exhaustive cases.
    
{\em Case 1:} If $h$ is unmatched in $M$, then due to maximality of $M$, there must exist a resident $r'$ matched in $M$ such that adding $(r, h)$ causes envy to $r'$.
        Thus, $h$ has a neighboring resident $r'$ matched in $M$, and we let $h_r = h$.

{\em Case 2:} If $h$ is matched in $M$, then since $M$ and $OPT$ are both envy-free, there must exist a path $\langle r, h, r_1, h_1$ $, \ldots, r_i, h_i \rangle$ such that $(r, h) \in OPT$, for each $k = 1, \ldots, i$, we have $(r_k, h_k) \in OPT$, $(r_1, h) \in M$, for each $k = 2, \ldots, i$, we have $(r_k, h_{k-1}) \in M$ and $h_i$ is unmatched in $M$.
    Thus, $h_i$ has a neighboring resident $r_i$ matched in $M$, and we let $h_r = h_i$.

\noindent {\em Uniqueness guarantee:} For any $r \in X_2$ for which case 1 applies, the associated $h_i$ is unique since hospital quotas are at most $1$. For two distinct $r, r' \in X_2$ such that for both case 2 applies, the paths mentioned above are disjoint since all hospital quotas are at most $1$, which guarantees uniqueness within case 2. The $h_i$ associated in case 2 cannot be associated in case $1$ to $OPT(h_i)$ since $OPT(h_i) = r_i \notin X_2$.
This completes the proof of existence of the unique hospital.
\end{proof}

Now, we prove the second part of the Theorem~\ref{thm:app_maximal} for the unrestricted quotas.
\begin{proof}[Proof of Theorem~\ref{thm:app_maximal}\ref{p2}]
	We will use the $R_M, R_{OPT}, X_1, X_2$ sets as defined earlier in the proof of Theorem~\ref{thm:app_maximal}\ref{p1}. 
It is clear that $|X_1| \leq |R_M|$.
We will show that $|X_2| \leq |R_M| \cdot (\ell_1 \cdot \ell_2 - 1)$. Once we establish that,
it is immediate that a maximal envy-free matching is an $(\ell_1 \cdot \ell_2)$-approximation.
We show that for every resident $r \in X_2$ we can associate
a unique edge $(h_r, r_t)$ such that $h_r$ is under-subscribed in $M$
and there exists a resident $r_t$ in the neighbourhood of $h_r$ such that $r_t$ is matched in $M$.
Denote the set of such hospitals as $Y_2$. Because each hospital can have at most $\ell_2$ edges (and all of them could be matched in $OPT$), thus $|X_2| \leq |Y_2| \cdot \ell_2$.
Since each resident has a preference list of length at most $\ell_1$, any $r'$ who is matched in $M$ can have at most $\ell_1$ neighbouring hospitals which are under-subscribed in $M$.
Resident $r'$ is matched to one of these hospitals, thus, $|X_2| \le |R_M| \cdot (\ell_1 \cdot \ell_2 - 1)$ which establishes the approximation guarantee.
To finish the proof we show a unique edge $(h_r, r_t)$ at a hospital $h_r$ with desired properties that can be associated with each $r \in X_2$.
Let $r \in X_2$ such that $h = OPT(r)$.
We have following two exhaustive cases.

{\em Case 1:} If $h$ is under-subscribed in $M$, then due to maximality of $M$, there must exist a resident $r'$ matched in $M$ such that adding $(r, h)$ causes envy to $r'$.
        Thus, $h$ has a neighboring resident $r'$ matched in $M$, i.e. we let $(h_r, r_t) = (h,r')$.

{\em Case 2:} If $h$ is fully-subscribed in $M$, then since $M$ and $OPT$ are both envy-free, there must exist a path $\langle r, h, r_1, h_1$ $, \ldots, r_i, h_i \rangle$ (hospitals can repeat along this path) such that $(r, h) \in OPT$, for each $k = 1, \ldots, i$, we have $(r_k, h_k) \in OPT$, $(r_1, h) \in M$, for each $k = 2, \ldots, i$, we have $(r_k, h_{k-1}) \in M$ and all hospitals $h, h_1, \ldots, h_{i-1}$ are fully-subscribed in $M$ and $h_i$ is under-subscribed in $M$.
    Thus, $h_i$ has a neighbouring resident $r_i$ matched in $M$, i.e. we let $(h_r, r_t) = (h_i,r_i)$.

\noindent {\em Uniqueness guarantee:} For any $r \in X_2$ for which case 1 applies, the associated $(h, r')$ edge is unique. For $r \in X_2$ such that case 2 applies for $r$, at each hospital $h' \in \{h, h_1, \ldots h_{i-1}\}$, if $h'$ has $k$ $OPT$-edges incident on it, then $h'$ must have at least $k$ $M$-edges incident on it such that all $M$-edge partners are higher preferred than all $OPT$-edge partners. Thus, if $h'$ is shared across paths starting at multiple residents in $X_2$, there must exist a unique $M$-edge that extends the specific path for which leads to a unique $OPT$-edge to the next hospital, otherwise $h'$ is under-subscribed in $M$, a contradiction. This guarantees uniqueness of edge within case 2. The $(h_i, r_i)$ associated in case 2 cannot be associated in case $1$ to $r_i$ since $r_i \notin X_2$.
This completes the proof of existence of the unique edge at a desired hospital and also the proof of the lemma.
\end{proof}

\subsection{Polynomial time algorithm for $\MAXEFM$ for a restricted setting}\label{sec:012R}
We present an algorithm (Algorithm~\ref{algo:maxl-envy-free}) that computes a maximum envy-free matching when the resident lists are of length at most two
and hospitals quotas are at most one. 
At a high-level Algorithm~\ref{algo:maxl-envy-free} works as follows: It starts with any feasible envy-free matching (possibly output of Yokoi's $\EFHRLQ$ Algorithm~\cite{Yokoi20}) 
and computes an \emph{envy-free augmenting path} with respect to the current matching. An augmenting path $P$ with respect to an envy-free matching $M$ is envy-free if $M \oplus P$ is envy-free. To compute such a path, the algorithm deletes a set of edges from the graph.
To define these deletions, we need the following definition from \cite{MNNR18}.
\begin{definition} \cite{MNNR18}
	Let $h$ be any hospital in $G$, a {\em threshold resident} $r'$ for $h$, if one exists,
	is the most preferred resident of $h$ such that $h >_{r'} M(r')$. If no such resident exists, we assume a unique dummy resident $d_h$ at the end of $h$'s preference list to be the threshold resident for hospital $h$.
\end{definition}

\begin{algorithm}[!ht]
	\SetAlgoNoLine
	\SetAlgoNoEnd
	\KwIn {An $\HRLQ$ instance $G = (\RR \cup \HH, E)$}
	\KwOut {A maximal envy-free matching in $G$}
         Let $M$ be any feasible envy-free matching in $G$ (assume that $M$ exists)\;
	\Repeat{true}
	{
     	Compute the threshold resident $r'$ for all $h\in \HH$ w.r.t. $M$\; 
	Let $G' = (\RR \cup \HH , E')$ be an induced sub-graph of $G$, where $E' = E \setminus (E_1 \cup E_2)$,\newline 
	    \text{ }\text{ }\text{ } $E_1$ = \{$(r,h) |$  $(r, h) \in E \setminus M$, $r' >_h r$\},\newline
		\text{ }\text{ }\text{ } $E_2$ = \{$(r,h) |$  $(r, h) \in E \setminus M $, $r >_h r'$ and $M(r) >_r h$\}\label{step:del}\;
	   \If {there exists an augmenting path $P$ w.r.t. $M$ in $G'$}{
		  $M_P = M \oplus P$\;
		  $M=M_P$\;
	   	}
		\Else
		{
		 Exit the loop.
		 }
	
	}
 Return $M$\;
\caption{Algorithm to compute a maximal envy-free matching}
\label{algo:maxl-envy-free}
\end{algorithm}
In every iteration of our algorithm, we compute the threshold resident for every hospital. Observe that unless the threshold
resident $r'$ is matched to  $h$ or to a hospital $h' >_{r'} h$, no resident $r'' <_{h} r'$ can be matched to $h$. Thus,
in our algorithm, we delete the set of edges $E_1$ (see Step~\ref{step:del}) which correspond to lower preferred residents than
the threshold resident for a hospital. If $r >_h r'$ such that $M(r) \neq h$ and $M(r) >_r h$ then we delete edge $(r,h)$ (set of $E_2$ edges).
Note that both these deletions ensure that
after augmentation, a resident never gets demoted (matched to a lower preferred hospital).
It is easy to see that the algorithm has at most $n$ iterations each taking $O(m+n)$ time, thus the overall running time
is $O(mn)$.

As the proof of Theorem~\ref{thm:strong-inapprox}\ref{part1} mentions, the $\NP$-hardness applies to the $\HRLQ$ instances
in which either the residents have preference list of length at most $2$ or hospital quotas are at most one.
The $\HRLQ$ instance in Fig.~\ref{fig:cex_a} has quotas at most one but a resident with a three-length list. The example instance
in Fig.~\ref{fig:cex_b} has lengths of all residents at most two but the quotas are not at most one. 
In both the examples, we have an initial envy-free matching $M$ and a larger envy-free matching $M'$. However, there is no envy-free augmenting path w.r.t. $M$.
In contrast, when we impose the restriction that all quotas are at most one {\em and} every resident's list is of length at most two (denoted as $\TR$ restriction),
we have the desired envy-free augmenting paths. We prove this guarantee below.
\begin{figure}
	\begin{subfigure}[b]{0.45\textwidth}
	\begin{minipage}{0.5\columnwidth}
		\begin{align*}
    			r_1 &: h_1,h_4 \\
			r_2 &: h_2,h_3,h_4 \\
		    	r_3 &: h_3
		\end{align*}
	\end{minipage}%
	\begin{minipage}{0.5\columnwidth}
		\begin{align*}
			[0,1]\ h_1 &: r_1 \\
			[0,1]\ h_2 &: r_2 \\
			[0,1]\ h_3 &: r_2,r_3 \\
			[1,1]\ h_4 &: r_1,r_2
		\end{align*}
	\end{minipage}\\
	\begin{center}
	$M = \{(r_1,h_1), (r_2, h_4)\}$\hfill$M' = \{(r_1,h_4), (r_2,h_2), (r_3,h_3)\}$
	\end{center}
	\subcaption{Quotas at most one, but resident preference lists longer than two.}
	\label{fig:cex_a}
	\end{subfigure}\hfill%
	\begin{subfigure}[b]{0.45\textwidth}
	\begin{minipage}{0.5\columnwidth}
		\begin{align*}
    			r_1 &: h_1, h_2 \\
			r_2 &: h_1\\
		    	r_3 &: h_3, h_2\\
		\end{align*}
	\end{minipage}%
	\begin{minipage}{0.5\columnwidth}
		\begin{align*}
			[0,2]\ h_1 &: r_1, r_2 \\
			[1,1]\ h_2 &: r_1, r_3 \\
			[0,1]\ h_3 &: r_3 \\
		\end{align*}
	\end{minipage}
	\begin{center}
	$M = \{(r_1, h_2), (r_3, h_3)\}$\hfill$M' = \{(r_1, h_1), (r_2, h_1), (r_3, h_2)\}$
	\end{center}
	\subcaption{Resident preference lists at most length two, but quotas more than $1$}
	\label{fig:cex_b}
	\end{subfigure}%
	\caption{$\HRLQ$ instances and envy-free matchings $M$ and $M'$} 
	\label{fig:cex}
\end{figure}

\begin{lemma}\label{lem:structural}
If $M$ and $M^*$ are envy-free matchings in a $\TR$ instance and $|M^*| > |M|$, then $M$ admits an envy-free augmenting path. 
\end{lemma}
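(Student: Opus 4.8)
The plan is to analyze the symmetric difference $M \oplus M^*$ and extract a useful augmenting path from it. Since both $M$ and $M^*$ are matchings (with hospital quotas at most one, hence each vertex on both sides has degree at most one in each matching), $M \oplus M^*$ decomposes into vertex-disjoint paths and even cycles. Because $|M^*| > |M|$, at least one of these components is an augmenting path $P$ with respect to $M$: it starts and ends at vertices unmatched in $M$, alternates between $M^*$-edges and $M$-edges, and has strictly more $M^*$-edges than $M$-edges. The task is to show that \emph{some} augmenting path we can build is envy-free, i.e. $M \oplus P$ is envy-free; a raw component of $M \oplus M^*$ need not itself do the job (as the counterexamples in Fig.~\ref{fig:cex} caution), so the $\TR$ restriction must be exploited carefully.

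First I would set up the structure of such a path $P = \langle r_0, h_1, r_1, h_2, r_2, \ldots, h_t \rangle$ where $r_0$ is unmatched in $M$, $h_t$ is unmatched (under-subscribed, hence empty since quotas are at most one) in $M$, the edges $(r_{k-1}, h_k)$ lie in $M^*$ and the edges $(r_k, h_k)$ lie in $M$. The key observations to establish, using that each resident has a list of length at most two: (i) since $r_0$ is matched in $M^*$ but unmatched in $M$, and $r_0$ has at most two hospitals in its list, the hospital $h_1 = M^*(r_0)$ is one of at most two choices; (ii) for each internal resident $r_k$ on the path, $r_k$ is matched in $M$ to $h_k$ and in $M^*$ to $h_{k+1}$, and since $r_k$'s list has length at most two, these are its only two options, so along the path each resident simply ``switches'' between its two list entries. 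This rigidity is what the longer-list counterexample in Fig.~\ref{fig:cex_a} lacks.

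Next I would argue that after augmenting along $P$, no envy-pair is created. An envy-pair $(r', r)$ with $M'(r) = h$, $h >_{r'} M'(r')$, $r' >_h r$ can only arise because of a \emph{change} at $h$ or at $r'$; I would case-split on whether $h$ or $r'$ is touched by $P$. For a hospital $h$ on the path, its occupant changes from $r_k$ (its $M$-partner) to $r_{k-1}$ (its $M^*$-partner), and since $M^*$ is envy-free, anyone who would now envy $r_{k-1}$ at $h$ already envied in $M^*$ unless that resident's own assignment improved under $M$ versus $M^*$ — but along a $\TR$ path no resident is demoted (each resident keeps or moves up among its $\le 2$ choices, because $r_0$ gains a match and each $r_k$ moves from $h_k$ to $h_{k+1}$, and envy-freeness of the matchings forces $h_{k+1} >_{r_k} h_k$, else $(r_k, h_{k+1})$ would have been an envy-pair). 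The hospital $h_t$, newly filled by $r_{t-1}$, is handled by envy-freeness of $M^*$ directly. For a resident $r'$ on the path whose match improves, $r'$ can only \emph{stop} envying, never start. If neither $h$ nor $r'$ is on $P$, nothing relevant changed. Combining these, $M \oplus P$ is envy-free.

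The main obstacle I anticipate is the case analysis showing that no resident on $P$ is demoted and that the endpoint hospital $h_t$ (and any hospital that is empty in $M$ but gets filled) does not trigger a fresh envy-pair with some off-path resident; here one must use the envy-freeness of $M^*$ together with the length-two restriction to rule out a resident $r'$ off the path who did not envy in $M^*$ but envies $r_{t-1}$ at $h_t$ after augmentation. Since $h_t$ is empty in $M$, if such an $r'$ has $h_t >_{r'} M(r')$ then $(r', M(r')\text{-occupant})$ structure must be examined: $r'$ has at most two hospitals, so $h_t$ and $M(r')$ exhaust its list, and I would show $r'$ must then lie on $P$ or form an envy-pair already in $M^*$, a contradiction. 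If it turns out that an arbitrary component of $M \oplus M^*$ does not directly work, the fallback is to choose $P$ minimally (shortest augmenting path, or one minimizing some potential) and derive a contradiction from minimality — but I expect the rigidity forced by $\ell_1 \le 2$ and all quotas $\le 1$ to make the direct argument go through.
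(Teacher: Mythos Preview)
Your direct argument does not go through: it is \emph{not} true that an arbitrary augmenting component of $M\oplus M^*$ yields an envy-free augmentation, and the case you flag as ``the main obstacle'' cannot be disposed of by contradiction. Take the $\TR$ instance with residents $r_0,r'$ and hospitals $h_1,h$, preference lists $r_0:h_1$; $r':h,h_1$; $h_1:r',r_0$; $h:r'$, and let $M=\emptyset$, $M^*=\{(r_0,h_1),(r',h)\}$. Both are envy-free and $|M^*|>|M|$. The component $P=\langle r_0,h_1\rangle$ of $M\oplus M^*$ gives $M\oplus P=\{(r_0,h_1)\}$, in which $r'$ (unmatched) justifiably envies $r_0$ at $h_1$. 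Here $r'$ is off $P$, is unmatched in $M$, and is matched in $M^*$ to a strictly better hospital, so $r'$ causes no envy in $M^*$; thus your proposed dichotomy ``$r'$ lies on $P$ or already envies in $M^*$'' is simply false. Your ``shortest path'' fallback does not help either, since both components here have length one.

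The missing idea is that the envy-free augmenting path need not be a component of $M\oplus M^*$ at all. What the length-two restriction actually buys is that any off-path resident $r$ who envies along $P$ must be \emph{unmatched} in $M$ (its two list entries are $h_i$ on $P$ and $M^*(r)$, and neither can be $M(r)$). The paper then \emph{re-routes}: pick the hospital $h_i$ on $P$ closest to the free endpoint $h_n$ at which such an envious unmatched $r$ exists (and among those, the one $h_i$ prefers most), and take $P'=\langle r,h_i,r_{i+1},\ldots,r_n,h_n\rangle$. By the choice of $h_i$ and $r$, no further off-path envy can arise after augmenting along $P'$. In the toy instance above this yields $P'=\langle r',h_1\rangle$, which is not in $M\oplus M^*$ but whose augmentation $\{(r',h_1)\}$ is envy-free. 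Your plan needs this re-routing step; without it the argument has a genuine gap.
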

\begin{proof}
Consider the symmetric difference $M\oplus M^*$.
 There must exist an augmenting path $P=\langle r_1, h_1, r_2, h_2, \dots, r_n, h_n\rangle$ w.r.t. $M$ where
$r_1$ is unmatched and $h_n$ under-subscribed in $M$. Further, for each $i=1,\ldots, n$, $M^*(r_i) = h_i$.
We note that $r_1$ prefers $M^*$ over $M$ (being matched versus being unmatched) and since $M^*$ is envy-free, 
it must be the case that $h_1$ prefers $r_2$ over $r_1$ (else $r_1$
envies $r_2$ w.r.t. $M^*$.) Since $M$ is also envy-free, we conclude that every resident $r_i$ in $P$ prefers $M^*(r_i)$ over $M(r_i)$; 
and every hospital $h_i$ prefers $r_{i+1}$ over $r_i$.

If $M \oplus P$ is envy-free, we are done. Therefore assume
that $M' = M \oplus P$ is not envy-free. Let $r$ have justified envy towards $r'$ w.r.t. $M'$. We first note that if both $r$ and $r'$ belong to $P$,
then $M^*$ is not envy-free. Similarly if both $r$ and $r'$ do not belong to $P$, then $M$ is not envy free. Thus exactly one of $r$ or $r'$ belong to $P$.

We now claim that no resident $r_i$, for $i=1, \ldots, n$ belonging to the path $P$ can have justified envy to any resident outside the path. 
Note that every resident in $P$ gets promoted in $M'$ as compared to $M$. Thus if some $r_i = r$ envies $r'$ w.r.t. $M'$, the same envy pair exists w.r.t. to $M$, a contradiction.
Thus it must be the case that a resident $r$ not belonging to $P$ envies a resident $r' = r_i$ belonging to $P$. We argue that $r$ must be unmatched in $M$.
First note that since $r$ envies $r_i$, 
 there exists an edge $(r, h_i) $ in the graph. Note that
the edge $(r, h_i)$ neither belongs to $M$ nor to $M^*$, because every $h_i$ except $h_n$ is matched in both and $M$ and $M^*$ along residents in $P$. Consider
	$M^*(r) = h$, then there exists an edge $(r, h)$. Since $h_i \neq h$ 
and the length of preference list of $r$ is at most two, we conclude that $r$ must be unmatched in $M$. Thus any resident $r$ that envies a resident $r'$ along $P$ is unmatched in $M$.

We now show that if $M \oplus P$ is not envy-free, we can construct another path $P'$ starting at such an unmatched  resident such that $M \oplus P'$ is envy-free.
Recall the augmenting path $P=\langle r_1, h_1, r_2, h_2, \dots, r_n, h_n\rangle$ w.r.t. $M$. Let $h_i$ be the hospital closest to $h_n$ along this path such
that there exists an unmatched resident $r$ such that $(r, h_i)$ is in the graph and $r$ envies $r_i$ w.r.t. $M'$. If there are multiple such residents, pick the one
that is most preferred by $h_i$. Now consider the path $P' = \langle r, h_i, r_{i+1}, \ldots, r_n, h_n\rangle$. Note that by the choice of the hospital $h_i$,
$M \oplus P'$  is envy-free. This gives us the desired path $P'$.
\end{proof}

\begin{lemma}\label{lem:e1e2_envyfree}
Matching $M$ produced by Algorithm~\ref{algo:maxl-envy-free} is an envy-free matching.
\end{lemma}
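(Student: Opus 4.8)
The plan is to show that the matching $M$ returned by Algorithm~\ref{algo:maxl-envy-free} is envy-free by an invariant argument: I will prove that the matching maintained by the algorithm is envy-free at the start of every iteration of the \textbf{Repeat} loop, and hence at termination. The base case is immediate since the algorithm initializes $M$ to a feasible envy-free matching (whose existence we assume, e.g.\ via Yokoi's characterization~\cite{Yokoi20}). For the inductive step, I assume $M$ is envy-free at the start of an iteration and must show that $M_P = M \oplus P$ is envy-free, where $P$ is an augmenting path found in the pruned graph $G' = (\RR \cup \HH, E')$ with $E' = E \setminus (E_1 \cup E_2)$.

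\textbf{First I would} record the key monotonicity property enforced by the edge deletions: after augmenting along any path $P$ in $G'$, no resident is demoted, i.e.\ for every resident $r$, $M_P(r) \ge_r M(r)$ (with $M_P(r) = M(r)$ for residents off the path). This is because an augmenting path in $G'$ uses non-matching edges only from $E' \setminus M$, and the deletion of $E_2$ has removed exactly those non-matching edges $(r,h)$ with $M(r) >_r h$; so whenever the path matches $r$ to a new hospital $h$, we have $h \ge_r M(r)$. Residents that were unmatched and become matched are trivially promoted. \textbf{Next I would} analyze a putative envy-pair $(r, r')$ with respect to $M_P$, where $M_P(r') = h$, $(r,h) \in E$, $h >_r M_P(r)$, and $r >_h r'$. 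The goal is to derive a contradiction with either the envy-freeness of $M$ or with the fact that the edge deletions already ruled $(r,h)$ out of consideration.

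\textbf{The main case analysis} splits on whether $(r,h)$ survives in $E'$ or was deleted. If $(r,h) \in E_1$, then $r' >_h r$ for the threshold resident $r'$ of $h$ w.r.t.\ $M$ --- but by monotonicity the threshold can only have improved, so $h$'s occupant in $M_P$ is at least as preferred (by $h$) as the old threshold, contradicting $r >_h M_P(r') $... here I must be careful to relate the new occupant $M_P(h)$ to the old threshold; I would argue that $M_P(h)$ is either $M(h)$'s occupant (who is $\ge_h$ threshold, hence $\ge_h r$) or a resident introduced along $P$, and $P$ lives in $G'$ so its edge into $h$ was not in $E_1$, meaning that resident is $\ge_h$ threshold $>_h r$. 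If $(r,h) \in E_2$, then $M(r) >_r h$, and by monotonicity $M_P(r) \ge_r M(r) >_r h$, contradicting $h >_r M_P(r)$. The remaining case is $(r,h) \in E'$: then since $h >_r M_P(r) \ge_r M(r)$ and $r >_h M_P(r')$, I want to show $(r,h)$ already blocks $M$ in the envy sense. If $r = r'$ is impossible; if $M_P(r') = M(r')$ (i.e.\ $r'$ is off $P$) then $(r, r')$ is an envy-pair w.r.t.\ $M$ too (using $h >_r M_P(r) \ge_r M(r)$ and $r >_h r' = M(h)$-occupant, noting $h$'s occupant is unchanged), contradicting the inductive hypothesis; if $r'$ is on $P$, then $h = M_P(r')$ is an edge of $P$ in $G'$, so the step of $P$ entering $h$ used a resident $r''$ with $r'' \ge_h$ (threshold of $h$), and since $(r,h) \notin E_1$ we also have $r \ge_h$ threshold --- I would then trace through whether $r >_h r''$ forces the envy to have been exploitable by an augmenting path through $h$, contradicting the choice of $P$; \textbf{I expect this last sub-case --- an envying resident $r$ outside $P$ pointing at a hospital lying on $P$ --- to be the main obstacle}, and handling it cleanly will likely require invoking the $\TR$ restriction (resident lists of length at most two, quotas at most one) exactly as in the proof of Lemma~\ref{lem:structural}, so that $r$'s only other option $M^*$-style neighbor is forced and $r$ must in fact be unmatched, at which point the edge-deletion design makes $(r,h)$ part of a better augmenting path, contradicting either maximality of the chosen $P$ or termination of the loop.

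\textbf{Finally}, once the inductive step is established, the lemma follows: the loop terminates (as noted, in at most $n$ iterations since each augmentation strictly increases $|M|$), and at every step --- in particular at termination --- $M$ is envy-free, which is the claim.
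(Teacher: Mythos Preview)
Your inductive framework and monotonicity observation are correct and match the paper's approach. However, you have a genuine misconception about the ``main obstacle'' sub-case: it does \emph{not} require the $\TR$ restriction, and in fact Lemma~\ref{lem:e1e2_envyfree} is stated and proved in the paper for general $\HRLQ$ instances. The $\TR$ restriction is only invoked later, in Lemma~\ref{thm:e1e2_01_rl2}, to guarantee maximum size.

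The missing observation in your last sub-case is this. Suppose $(r,r')$ is an envy-pair w.r.t.\ $M_P$ with $M_P(r')=h$, $r>_h r'$, $h>_r M_P(r)$, and $(r',h)$ lies on $P$ (so $(r',h)\in E'\setminus M$). By monotonicity you already have $h>_r M_P(r)\ge_r M(r)$, hence $h>_r M(r)$. This makes $r$ a candidate for the threshold of $h$, so the threshold $r^*$ satisfies $r^*\ge_h r$. But then $r^*\ge_h r>_h r'$, which says precisely that $(r',h)\in E_1$, contradicting $(r',h)\in E'$. No appeal to list lengths, quotas, or the structure of $P$ is needed.

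The paper's proof is shorter because it splits on the edge $(r',h)$ rather than on $(r,h)$: either $(r',h)\in M_{i-1}$ (then envy-freeness of $M_{i-1}$ forces $M_{i-1}(r)>_r h$, and monotonicity finishes), or $(r',h)\notin M_{i-1}$ (then either $M_{i-1}(r)>_r h$ and monotonicity finishes, or $h>_r M_{i-1}(r)$ and the threshold argument above puts $(r',h)$ in $E_1$). Splitting on $(r',h)$ is the cleaner decomposition; your split on $(r,h)$ can be made to work, but it obscures the one-line resolution of the case you flagged as hard.
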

\begin{proof}
We first argue that $M$ is envy-free. Assume not.
Consider the first iteration (say $i$-th iteration) during the execution of the algorithm in which an envy-pair is introduced. Call
the matching at the end of iteration $i$ as $M_i$.
Let that envy-pair be $(r, r')$  w.r.t. $M_i$  where $M_i(r') = h$,  and $r >_h r'$ and $h >_r M_i(r)$. 
We consider two cases depending on whether or not the edge $(r', h)$ belongs to $M_{i-1}$.

	\noindent {\em Case 1:} Assume $(r',h) \in M_{i-1}$. In this case, $M_{i-1}(r) >_r h$, 
otherwise $r$ envies $r'$ w.r.t. $M_{i-1}$, a contradiction. 
Now since the algorithm never demotes any resident, it implies that
$M_i(r) >_r h$, contradicts that that $r$ envies $r'$ w.r.t. $M_i$.

	\noindent {\em Case 2:}  Assume $(r',h) \notin M_{i-1}$. If $M_{i-1}(r) <_r h$ 
then either $r$ or a higher preferred resident than $r$ is a threshold for $h$. Thus the edge $(r', h) \in E_1$ and hence gets deleted.
	Else $M_{i-1}(r) >_r h$. 
	In this case by the same argument as above, $r$ does not envy $r'$.

Thus, $M$ is an envy-free matching in $G$.
\end{proof}

\begin{lemma}\label{thm:e1e2_01_rl2}
Algorithm~\ref{algo:maxl-envy-free} produces maximum size envy-free matching if every resident's preference list has length at most $2$ and all the quotas are at most $1$.
\end{lemma}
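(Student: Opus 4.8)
The plan is to combine the two lemmas already proved in this subsection, namely Lemma~\ref{lem:e1e2_envyfree} (the output of Algorithm~\ref{algo:maxl-envy-free} is always envy-free) and Lemma~\ref{lem:structural} (in a $\TR$ instance, any envy-free matching strictly smaller than another envy-free matching admits an envy-free augmenting path). First I would observe that the algorithm terminates: each successful iteration augments along a path and therefore increases $|M|$ by one, and $|M|$ is bounded by $|\RR|$, so there are at most $n$ iterations before the algorithm exits the repeat-loop because no augmenting path exists in the pruned graph $G'$. By Lemma~\ref{lem:e1e2_envyfree} the returned matching $M$ is envy-free, so it only remains to show $M$ has maximum size among envy-free matchings.

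Suppose for contradiction that some envy-free matching $M^*$ satisfies $|M^*| > |M|$. By Lemma~\ref{lem:structural}, since both $M$ and $M^*$ are envy-free in a $\TR$ instance, $M$ admits an envy-free augmenting path $P$ — i.e.\ $M \oplus P$ is envy-free and larger than $M$. The crux of the argument, and the step I expect to be the main obstacle, is to show that such a path $P$ in $G$ corresponds to an (ordinary) augmenting path in the pruned subgraph $G'$ that the algorithm searches; equivalently, that none of the edges of an envy-free augmenting path get deleted into $E_1$ or $E_2$ in Step~\ref{step:del}. For the $E_1$ deletions: an edge $(r,h)$ of $P$ with $(r,h)\notin M$ is used to match $r$ to $h$ in $M\oplus P$, and envy-freeness of $M\oplus P$ forces every resident $r''$ with $r'' >_h r$ to be matched to something at least as good as $h$; in particular the threshold resident $r'$ of $h$ (w.r.t.\ $M$) cannot be strictly preferred to $r$ by $h$, otherwise $r'$ would envy $r$ after augmentation — hence $(r,h)\notin E_1$. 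For the $E_2$ deletions: an edge $(r,h)$ deleted into $E_2$ has $M(r) >_r h$, i.e.\ $r$ strictly prefers its current partner to $h$; but along an augmenting path every resident gets promoted, so no augmenting edge can have this property — hence $(r,h)\notin E_2$. Care is needed for the non-matching edges of $P$ incident to the first (unmatched) resident and for the way the threshold resident is defined when $h$ is under-subscribed, but the $\TR$ restriction (preference lists of length at most two, all quotas at most one) keeps the case analysis short, exactly as in Lemma~\ref{lem:structural}.

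Putting these together: whenever an envy-free matching $M^*$ strictly larger than the current $M$ exists, Lemma~\ref{lem:structural} yields an envy-free augmenting path $P$ in $G$, and by the edge-preservation argument above all edges of $P$ survive in $G'$, so $P$ is an augmenting path present in $G'$. Therefore the algorithm would not have exited the loop — it would have found an augmenting path (this $P$ or another one in $G'$) and augmented. By Lemma~\ref{lem:e1e2_envyfree}, every such augmentation keeps the matching envy-free, so the loop only terminates when $M$ is a maximum-size envy-free matching. This contradicts $|M^*| > |M|$, and completes the proof.
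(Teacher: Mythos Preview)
Your proposal is correct and follows essentially the same route as the paper: assume a larger envy-free matching exists, produce an augmenting path $P$, and argue that none of the non-$M$ edges of $P$ fall into $E_1$ or $E_2$, so $P$ lives in $G'$ and the algorithm could not have exited the loop. The only difference worth noting is in the $E_1$ step. The paper picks $P$ from $M\oplus M^*$ and argues via envy-freeness of $M^*$: using the length-$\le 2$ restriction it observes that the threshold resident $r'$ of $h_i$ is adjacent only to $h_i$ and $M(r')$, and since $M^*(h_i)=r_i$ (quota $1$) the resident $r'$ cannot be at or above $h_i$ in $M^*$, contradicting envy-freeness of $M^*$. Your version argues via envy-freeness of $M\oplus P$ and says ``$r'$ would envy $r$ after augmentation''; this tacitly assumes $(M\oplus P)(r')=M(r')$, i.e.\ that $r'$ is not itself on the path $P$. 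Ruling that out is exactly where the $\TR$ restriction (length-$\le 2$ lists, quotas $\le 1$) is needed, and the paper makes this explicit while you only gesture at it. The $E_2$ argument is identical to the paper's.
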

\begin{proof}
Let $M$ be the output of Algorithm~\ref{algo:maxl-envy-free} on a restricted instance $G$.
Lemma~\ref{lem:e1e2_envyfree} shows that $M$ is envy-free. For the sake of contradiction,
assume that $M$ is not maximum size envy-free. Let $M^*$ be an envy-free matching in $G$ with $|M^*|>|M|$.
Consider the symmetric difference $M\oplus M^*$.
 There must exist an augmenting path $P=\langle r_1, h_1, r_2, h_2, \dots, r_n, h_n\rangle$ w.r.t. $M$ where
$r_1$ and $h_n$ are unmatched in $M$. Moreover $M\oplus P$ is envy-free by Lemma~\ref{lem:structural}. Further, for each $i=1,\ldots, n$, $M^*(r_i) = h_i$.
We note that $r_1$ prefers $M^*$ over $M$ (being matched versus being unmatched) and since $M^*$ is envy-free,
it must be the case that $h_1$ prefers $r_2$ over $r_1$ (else $r_1$
envies $r_2$ w.r.t. $M^*$.) Since $M$ is also envy-free, we conclude that every resident $r_i$ in $P$ prefers $M^*(r_i)$ over $M(r_i)$;
and every hospital $h_i$ prefers $r_{i+1}$ over $r_i$.
If all the $M^*$ edges in this path belong to $E'$ in the final iteration of Algorithm \ref{algo:maxl-envy-free}, then we arrive at a contradiction. Hence there exists some edge $(r_i, h_i) \in  M^* \cap P$ such that $(r_i,h_i)\not\in E'$; that is $(r_i, h_i)$ was deleted
either as an $E_1$ or $E_2$ edge in Step~\ref{step:del} of Algorithm~\ref{algo:maxl-envy-free}.

Suppose $(r_i,h_i)\in E_1$, then there exists a threshold resident for $h_i$, say $r'$ which $h_i$ prefers over $r_i$. Note
that, by definition of threshold resident, $r'$ is matched in $M$ to some hospital $h'$ that it prefers lower over $h_i$.
Since preference lists of residents are at most length two, $r'$ is not adjacent to any other hospital.
We now contradict that $M^*$ is envy-free by observing that
 $q^+(h_i) = 1$ and $M^*(r_i) = h_i$. Thus for $M^*$ to be envy-free, $r'$ must be matched in $M^*$ to some hospital that is higher preferred than $h_i$. However, no such hospital exists,  which implies that  $r'$ must remain unmatched in $M^*$ and thus envies $r_i$.
This implies that $M^*$ is not envy-free; a contradiction.
Now, let $(r_i,h_i)\in E_2$. It implies that $r_i$ was higher preferred by $h_i$ than its threshold resident in the last
iteration and $M(r_i)$ was higher preferred by $r_i$ than $h_i = M^*(r_i)$.
This is a contradiction to the fact that each $r_i$ in $P$ prefers its $M^*(r_i)$ over $M(r_i)$. This completes the proof. 
\end{proof}

\section{Envy-freeness: Parameterized complexity}\label{sec:pc_results}
In this section, we investigate parameterized complexity of the $\MAXEFM$ and $\MINUR$ problems. We refer the
reader to the comprehensive literature on parametric algorithms and complexity \cite{DF12,N06,FG06} for standard notation used in this section.
Since the difficulty 
of \MAXEFM\ lies in the instances where stable matchings are not feasible, we choose the parameters related to those hospitals which have a 
positive lower quota (denoted by $H_{LQ}$) 
In particular, the 
deficiency of a given $\HRLQ$ instance (see Definition \ref{def:deficiency} from Section \ref{sec:intro}) is a natural parameter. Unfortunately, the problem turns out to be W[1]-hard for this parameter.
\begin{proof}[Proof of Theorem~\ref{thm:pc_results}\ref{pc_p1}]
	Consider the parameterized version of $\IS$ problem i.e. a graph $G$ and solution size $k$ as the parameter. Let $(G', k')$ be parameterized reduced instance of $\MAXEFM$, where $k'$ is the deficiency of $G'$ and let $k'$ be the parameter. From the NP-hardness reduction given in section~\ref{sec:hardness}, we already saw that the stable matching in $G'$ has deficiency $k$. Then, with $k' = k$, the same reduction is a valid FPT reduction. It implies that that $\MAXEFM$ and $\MINUR$ are W[1]-hard, otherwise it contradicts to the W[1]-hardness of $\IS$~\cite{DF12}. 
\end{proof}

\subsection{A polynomial size kernel}
In this section, we give a kernelization result for $\HRLQ$ instances with hospital quotas either $0$ or $1$.
We consider the following three parameters. 
\begin{itemize}
\item $\ell$: The size of a maximum matching in a given $\HRLQ$ instance.
\item $p$: The highest rank of any \LQ\ hospital in any resident's preference list.
\item $t$: Maximum number of non-\LQ\ hospitals shared by the preference lists of any pair of residents.
\end{itemize}
Given the graph $G$ and $k$ we construct a graph $G'$ such that $G$ admits an envy-free matching of size $k$ iff $G'$ admits an envy-free matching of size $k$.

\noindent {\bf Construction of the graph $G'$:}
We start by computing a stable matching $M_s$ in $G (V, E)$. If $|M_s| < k$, we have a ``No" instance by Lemma~\ref{lem:stbl_feas}. 
If $|M_s| \geq k$ and $M_s$ is feasible, we have a ``Yes" instance. Otherwise, $|M_s| \geq k$ but it is infeasible. 
We know that  $|M_s| \leq \ell$. We construct the graph $G'$ as follows.

Let $X$ be the vertex cover computed by picking matched vertices in  $M_s$. Then, $|X| \leq 2\ell$. Since, $M_s$ is maximal, $I = V \setminus X$ is an independent set.
We now use the marking scheme below to mark edges of $G$ which will belong to the graph $G'$.

\noindent {\bf Marking scheme:} Our marking scheme is inspired by the marking scheme for the kernelization result in \cite{AGRSZ18}. Every edge with both end points in $X$ is marked. 
If $h \in X$ is a hospital, we mark all edges with other end-point in the independent set $I$ if the number of such edges
 are  at most $\ell+1$. Otherwise we mark the edges corresponding to the highest preferred $(\ell+1)$ residents of $h$. 
If $r \in X$ is a resident then we do following: Let $p_r$ denote the highest rank of any \LQ\ hospital in the preference list of $r$.
  Every edge between $r$ and an hospital at rank $1$ to $p_r$ is marked. There can be at most $p$ edges marked in this step. 
We now construct a set of hospitals $C_r$ corresponding to $r$. The set $C_r$ consists of non-$\LQ$ hospitals which are common to
the preference list of $r$ and some matched resident in $M_s$.  That is,
\begin{center}
 $C_r = \{h \in \HH \mid$     $q^-(h) = 0$ and $ \exists r' \in X, r' \neq r$ and $  h$ is in preference list of both $r$ and $r'\}$.
\end{center} Mark all edges of the form $(r, h)$ where $h \in C_r$, if not already marked. Now amongst the unmarked edges incident on $r$ (if any exists)
mark the edge to the highest preferred hospital $h$.
We are now ready to state the reduction rules using the above marking scheme.

\noindent {\bf Reduction rules:} We apply the following reduction rules as long as they are applicable.
\begin{enumerate}
    \item If $v \in G$ is isolated, delete it.
    \item If $(r, h)$ edge is unmarked, delete it. 
\end{enumerate}
Thus, we obtain an instance  $G' = (V', E')$ where $V' = X \cup I$ and $E' = E(X,X) \cup E(X,I)$, where $E(A,B)$ is the set of edges with one end point in $A$ and other in $B$.

Lemma~\ref{lem:3kernel} below bounds the size of the kernel $G'$.
\begin{lemma}\label{lem:3kernel}
The graph $G'$  has $poly(\ell, p, t)$-size.
\end{lemma}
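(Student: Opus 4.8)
The plan is to count the vertices and edges of $G'$ by going through the marking scheme one category at a time, since every surviving edge was marked during the construction. Recall that $X$ is the vertex set of a maximum matching $M_s$, so $|X| \le 2\ell$, and $I = V \setminus X$ is an independent set. The vertices of $G'$ that remain after deleting isolated vertices are exactly $X$ together with those vertices of $I$ that retain at least one marked edge, so it suffices to bound the number of marked edges; a vertex of $I$ can only survive if it is an endpoint of such an edge.

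First I would handle edges with both endpoints in $X$: there are at most $|X|^2 \le 4\ell^2$ of these, all marked. Next, for each hospital $h \in X$, the marking scheme keeps at most $\ell + 1$ edges to residents in $I$, contributing at most $|X|\cdot(\ell+1) \le 2\ell(\ell+1)$ edges and, crucially, introducing at most that many distinct vertices of $I$ on the hospital side. Then, for each resident $r \in X$: the ``rank $1$ to $p_r$'' rule marks at most $p_r \le p$ edges; the set $C_r$ consists of non-\LQ{} hospitals shared between $r$'s list and the list of some matched resident $r' \in X$, and since any fixed pair of residents shares at most $t$ non-\LQ{} hospitals and there are at most $|X| \le 2\ell$ choices of $r'$, we get $|C_r| \le 2\ell t$; finally at most one more edge is marked to the highest-preferred remaining hospital. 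So each $r \in X$ contributes at most $p + 2\ell t + 1$ marked edges.

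Summing, the total number of marked edges is at most $4\ell^2 + 2\ell(\ell+1) + 2\ell(p + 2\ell t + 1) = O(\ell^2 + \ell^2 t + \ell p)$, which is $\mathrm{poly}(\ell, p, t)$. Since every surviving vertex of $I$ is incident to at least one marked edge, $|V'| \le |X| + (\text{number of marked edges}) = O(\ell^2 t + \ell p)$, and $|E'|$ is bounded by the same quantity. Hence $G'$ has size $\mathrm{poly}(\ell, p, t)$, as claimed.

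The main obstacle — really the only subtle point — is making sure the bound on edges incident to vertices of $I$ is genuinely $\mathrm{poly}(\ell, p, t)$ and does not secretly depend on $|I|$ or on hospital degrees. On the hospital-in-$X$ side this is guaranteed by the explicit cap of $\ell+1$ per hospital; on the resident-in-$X$ side one must check that $C_r$, the ``rank up to $p_r$'' set, and the single extra highest-preferred edge together cover all marked edges at $r$, which is exactly how the marking scheme is phrased. I would also note in passing that residents in $I$ have no marked edges among themselves (as $I$ is independent in the bipartite sense, an edge of $I$ would join two vertices on the same side, which cannot happen), so no further cases arise. Once the per-category counts above are assembled, the polynomial bound follows immediately.
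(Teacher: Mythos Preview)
Your proof is correct and follows the same counting argument as the paper: bound edges inside $X$, then hospital-to-$I$ edges via the $\ell+1$ cap, then resident-to-$I$ edges via $p + |C_r| + 1$; your constants are merely a bit looser because you use $|X|\le 2\ell$ throughout, whereas the paper splits $X = X_H \cup X_R$ with $|X_H| = |X_R| \le \ell$. One minor slip in your closing aside: $I$ is independent not ``in the bipartite sense'' (indeed $I$ contains both unmatched residents and unmatched hospitals), but because $M_s$, being stable, is maximal---an edge inside $I$ would augment it; since this fact is already asserted in the construction of $G'$, it does not affect the validity of your argument.
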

\begin{proof}
We know that $|X| \leq 2 \ell$. 
Thus, $E(X, X) = O(l^2)$. 
Let $X_H \subset X$, be the set of hospitals in $X$ and $X_R \subset X = X \backslash X_H$ be the set of residents in $X$. 
Then, $|X_H| = |X_R| \leq \ell$. 
For a hospital $h \in X_H$, we have at most $\ell+1$ marked edges having its other end-point in the independent set $I$. 
For a resident $r \in X_R$, we retained edges with at most $p+t(\ell-1)+1$ hospitals in independent set $I$. Hence,
 $|E(X,I)| \leq |X_R|\cdot(p+t\ell-t+1) + |X_H|\cdot (\ell+1)= O(\ell(p+t\ell-t+1) + \ell^2)$. 
Since $I$ is independent set, $|I| = |E(X,I)|$. Thus, the size of $G'$ is $O(\ell^2 + \ell(p+t\ell-t+1))$.
\end{proof}

Safeness of first reduction rule is trivial. Lemma~\ref{lem:3lem1} and Lemma~\ref{lem:3lem2} prove that the second reduction rule is safe. So, $G'$ is a kernel.

\begin{lemma}\label{lem:3lem1}
If $G'$ has a feasible envy-free matching $M'$ such that $|M'| \geq k$ then $M'$ is feasible and envy-free in $G$.
\end{lemma}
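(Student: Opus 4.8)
The plan is to prove the two claims separately: feasibility is essentially bookkeeping, while the substance lies in envy-freeness, which I would establish by contradiction through a case analysis driven by the marking scheme. For feasibility, I would first note that $G'$ is obtained from $G$ by deleting edges and then isolated vertices, with all upper and lower quotas unchanged; since the input instance is feasible by assumption, no lower-quota hospital is isolated in $G$, and the marking rule retains, for each $r\in X$, every edge of $r$ of rank at most $p_r$ -- hence every edge from $r$ to a lower-quota hospital in its list -- so no lower-quota hospital is dropped when forming $G'$. As $M'\subseteq E'\subseteq E$ is a valid matching in $G$ and $|M'(h)|\ge q^-(h)$ holds in $G'$ for every hospital, the same holds in $G$, so $M'$ is feasible in $G$.

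For envy-freeness, suppose toward a contradiction that $(r,r')$ is an envy-pair in $G$ with respect to $M'$, say $M'(r')=h$, $(r,h)\in E$, $h>_r M'(r)$ and $r>_h r'$. Since preferences are identical in $G$ and $G'$ and $M'$ lives inside $G'$, had the edge $(r,h)$ survived into $G'$ then $(r,r')$ would already witness envy there, contradicting that $M'$ is envy-free in $G'$; hence $(r,h)$ is a deleted (unmarked) edge. As every edge internal to $X$ is marked and $I$ is independent, exactly one endpoint of $(r,h)$ lies in $I$, leaving two cases.

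If $r\in I$ (with $h$ a hospital of $X$), then $(r,h)$ being unmarked forces $h$ to have more than $\ell+1$ neighbours in $I$, only the $\ell+1$ best for $h$ retained, and $r$ not among them; call these $r_1,\dots,r_{\ell+1}$, so that $r_j>_h r>_h r'$ for each $j$ and each edge $(r_j,h)$ is in $E'$. I would then inspect each $r_j$ inside $G'$: it is not matched to $h$ (as $M'(h)=r'\neq r_j$), and if it were unmatched or matched to a hospital it ranks below $h$, then $(r_j,r')$ would be an envy-pair in $G'$ (using $q^+(h)=1$, $M'(h)=r'$, $r_j>_h r'$), which is impossible; hence every $r_j$ is matched in $M'$ to a hospital it strictly prefers to $h$. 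This exhibits $\ell+1$ distinct $M'$-matched residents, so $|M'|\ge \ell+1$, impossible since $M'\subseteq E$ and $\ell$ is the maximum matching size of $G$. If instead $h\in I$, then $h$ is a hospital left unmatched by the stable matching $M_s$, so stability forces every neighbour of $h$, in particular $r$ and $r'$, to be matched in $M_s$ and hence to lie in $X$; then $q^-(h)>0$ makes $(r,h)$ an edge of rank at most $p_r$ and hence marked, while $q^-(h)=0$ makes $h$ a non-lower-quota hospital shared by the distinct residents $r,r'\in X$, so $h\in C_r$ and $(r,h)$ is marked -- either way contradicting that $(r,h)$ was deleted. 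This exhausts the cases, so $M'$ has no envy-pair in $G$.

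I expect the counting step in the $r\in I$ case to be the main obstacle: the point is that each of the $\ell+1$ top neighbours of $h$ retained in $G'$ must, under $M'$, be matched to a hospital it prefers over $h$ (else envy already surfaces in $G'$), which then pushes $|M'|$ strictly above the maximum matching size of $G$. The care there is in the quantifier bookkeeping -- that these residents are pairwise distinct, are each preferred by $h$ over the incumbent $r'$, and genuinely survive into $G'$; by comparison the $h\in I$ case is a direct unwinding of the definitions of $p_r$ and $C_r$.
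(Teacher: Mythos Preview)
Your proof is correct and follows essentially the same approach as the paper's: both argue that the envy-witnessing edge $(r,h)$ must have been deleted, then split into the two cases according to which endpoint lies in $I$, using the hospital marking rule (pigeonhole against the $\ell+1$ retained neighbours) in one case and the resident marking rule ($p_r$ and $C_r$) in the other. Your $r\in I$ case is the contrapositive of the paper's---the paper directly picks one of the $\ell+1$ retained neighbours that is unmatched in $M'$ and exhibits envy in $G'$, whereas you argue all $\ell+1$ must be matched and overflow $\ell$---and in the $h\in I$ case you explicitly verify $r'\in X$ via stability of $M_s$, a step the paper uses but does not spell out.
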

\begin{proof}
Since, $M' \subseteq E' \subseteq E$, so feasibility in $G$ follows. 
Suppose for the contradiction that $M'$ is not envy-free in $G$. 
Then there exists a deleted edge $(x,y)$ such that it causes envy. 
By the claimed envy, $x$ prefers $y $ over $M'(x)$ and $y$ prefers $x$ over $M'(y)$ .

Suppose $x$ is a hospital. Since, $(x,y)$ is deleted, there are $\ell+1$ marked neighbors of $x$, 
all more preferred than $y$. Since size of maximum matching is at most $\ell$, 
there exists a marked neighbor of $x$, say $y'$ who is unmatched in $M'$. Since, $x$ prefers $y'$ over $y$ it implies, $x$ prefers
$y'$ over $M'(x)$  implying that in $G'$, $y'$ envies $M'(x)$ -- a contradiction since $M'$ is envy-free.

Suppose $x$ is a resident. Given that $(x, y)$ was deleted, $y$ is non-$\LQ$ hospital. Since 
$x$ participates in an envy pair, there are at least two residents $x$ and $M'(y)$ which have a common hospital
$y$ in their preference list. Thus by our marking scheme, $(x, y)$ is not deleted -- a contradiction.
\end{proof}
\begin{lemma}\label{lem:3lem2}
If $G$ has a feasible envy-free matching $M$ such that $|M| \geq k$ then there exists a feasible envy-free matching $M'$ in $G'$ such that $|M'| \geq k$.
\end{lemma}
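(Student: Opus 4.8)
The plan is to take the given feasible envy-free matching $M$ in $G$ with $|M|\ge k$ and \emph{reroute} it onto the surviving edge set $E'$ without changing its feasibility or envy-freeness and without decreasing its size. Call an edge $(r,h)\in M$ \emph{bad} if $(r,h)\notin E'$; if there are none we are done with $M'=M$, so assume a bad edge exists. The first task is to pin down the shape of a bad edge. Since every edge with both endpoints in $X$ survives and $I$ is independent, a bad edge has exactly one endpoint in $I$. Its $I$-endpoint cannot be a resident: if $(r,h)\in M$ with $r\in I$ and $h$ a hospital in $X$, then deletion of $(r,h)$ means $h$ has more than $\ell+1$ neighbours in $I$ and $r$ is not among the $\ell+1$ most preferred of them; as those $\ell+1$ residents are all preferred by $h$ over $r$ and at most $\ell$ residents are matched in $M$ (because $|M|\le\ell$), one of them, say $r^{*}$, is unmatched in $M$, and then $(r^{*},r)$ is an envy-pair w.r.t. $M$, contradicting envy-freeness. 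Hence every bad edge is $(r,h)$ with $r\in X$ a resident and $h\in I$ a hospital. Furthermore $h$ must be a non--lower-quota hospital, since every lower-quota hospital in $r$'s list sits at rank at most $p_r$ and is therefore marked; thus a bad edge at $r$ has $h$ non--lower-quota, $h$ at rank larger than $p_r$, $h\notin C_r$, and $h$ not equal to the single ``most preferred unmarked'' hospital of $r$. Two consequences: $h\notin C_r$ together with $h\in I$ (all of whose neighbours lie in $X$) forces $r$ to be the \emph{only} neighbour of $h$ in $G$; and since $h$ is not the most preferred unmarked hospital of $r$, the resident $r$ has a marked hospital $\hat h$ with $\hat h>_r M(r)=h$.

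The rerouting step then processes the bad edges one at a time. Fix a bad edge $(r,h)$. Because $h$ is non--lower-quota and its sole neighbour is $r$, deleting $(r,h)$ cannot break feasibility and cannot create envy through $h$; what remains is to re-accommodate $r$. Starting from the marked hospital $\hat h>_r h$, one runs an alternating/augmenting search inside $G'$ from $r$: attempt to match $r$ to its most preferred marked hospital; if that hospital is already occupied, envy-freeness of the current matching forces its occupant to be preferred over $r$, so the search continues from that occupant \emph{upward} in its list, exactly as in the envy-pair argument above (and whenever the search reaches a hospital in $X$, the top-$(\ell+1)$ marking provides a preferred, free resident just as before). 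One then argues that this process terminates, that every edge the search needs has survived into $E'$ (this is precisely what the rank-$1,\dots,p_r$ prefix, the shared-hospital set $C_r$, and the one spare edge per resident are engineered to guarantee), and that it introduces no new envy-pair; the matching obtained at the end lies in $E'$, is feasible and envy-free, and has size at least $|M|\ge k$. Together with Lemma~\ref{lem:3lem1} this proves safeness of the second reduction rule.

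The main obstacle is exactly this re-accommodation step: one has to show that the local repairs for different bad edges compose without clobbering one another and without ever introducing envy — neither among residents moved by the search nor between them and untouched residents. The clean way to do this is to fix an order on the bad edges and maintain ``feasible $+$ envy-free $+$ size $\ge k$'' as a loop invariant, relying throughout on envy-freeness of $M$ (no resident has justified envy toward the occupant of a hospital it prefers) and on the top-$(\ell+1)$, rank-$p_r$ and $C_r$ marking rules to certify that each edge a repair requires is present in $G'$. Verifying these three points — termination, availability of the needed edges in $E'$, and preservation of envy-freeness along overlapping repair paths — is where the real work lies.
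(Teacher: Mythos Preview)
Your structural analysis is correct: a bad edge must have its resident endpoint in $X$ and its hospital endpoint in $I$, that hospital is non--lower-quota, lies outside $C_r$, is not the spare, and consequently $r$ is its unique neighbour. You also correctly deduce that the spare hospital $\hat h$ exists with $\hat h >_r h$.

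The gap is what comes next. You launch an alternating/augmenting search to re-accommodate $r$ and explicitly leave termination, edge-availability, and non-interference unverified. None of this is necessary, because the very argument you applied to $h$ applies verbatim to $\hat h$. Before the spare-marking step $\hat h$ was itself unmarked, so $\hat h\in I$, $\hat h$ is non--lower-quota, and $\hat h\notin C_r$. Since $\hat h\in I$ all its neighbours are in $X$, and since $\hat h\notin C_r$ no resident of $X$ other than $r$ lists $\hat h$. Hence $r$ is the \emph{only} neighbour of $\hat h$ in $G$, so $\hat h$ is necessarily unmatched in $M$. The repair is therefore a single swap, $M':=(M\setminus\{(r,h)\})\cup\{(r,\hat h)\}$: feasibility holds because $h$ is non--lower-quota; size is unchanged; envy-freeness holds because no one but $r$ is adjacent to $\hat h$ (so no one can envy $r$ there) and $r$ strictly improved (so $r$ gains no new envy). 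Distinct bad edges involve distinct residents and their spare hospitals are private and hence pairwise distinct, so all swaps may be performed simultaneously.

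This is exactly the paper's argument, and it eliminates the ``real work'' you flagged. As a side remark, your way of ruling out the case $r\in I$ (via the $\ell+1$ marked neighbours of $h$ and the bound $|M|\le\ell$) is a valid alternative to the paper's route through Lemma~\ref{lem:stbl_feas}.
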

\begin{proof}
If all the edges in $M$ are present in $G'$ then $M' = M$ and we are done. Suppose not, then there exists an edge $ (x,y) \in M \setminus E'$. 
Let $x$ be a hospital, then since $(x, y)$ was deleted $y \in I$. Note that $y$ is unmatched in $M_s$ in $G$. By  Lemma~\ref{lem:stbl_feas} $y$ cannot be matched
in any envy-free matching, which contradicts that $(x, y) \in M$. Thus $x$ must be a resident.
Since $(x,y)$ is deleted, then there exists a hospital $h$ present only in preference list of $x$ such that $(x,y) \in E'$. 
By the marking scheme, $x$ prefers $h$ over $y$.  Thus, let $M' = M \setminus (x,y) \cup (x, h)$, which is envy-free since there is no other resident in the preference list of 
$h$ other than $x$.
\end{proof}
This establishes Theorem~\ref{thm:pc_results}\ref{pc_p2}.

\subsection{A maximum matching containing a given envy-free matching}
The \MAXEFM\ problem has a polynomial-time algorithm when either there are no \LQ\ hospitals or when all the \LQ\ hospitals have complete preference lists.
This fact suggests two parameters -- number of \LQ\ hospitals in a given instance ($q$), and maximum length of the preference list of 
any \LQ\ hospital $(\ell)$.
Our parameterized algorithm for the parameters $q$ and $\ell$ and other parameterized algorithms, described in Section \ref{sec:fpt}, make crucial use
of an algorithm to {\em extend} an envy-free matching $M$
to a maximum size envy free matching $M^*$, such that $M\subseteq M^*$.
This algorithm was presented in \cite{MNNR18} where it was proved that
it produces a {\em maximal} envy-free matching containing the given envy-free matching $M$. We present the algorithm of \cite{MNNR18} 
for completeness and prove that it outputs a {\em maximum} size envy-free matching containing $M$.

We recall their algorithm below as Algorithm~\ref{algo:sea}. However, unlike Algorithm 2 in~\cite{MNNR18}, where
they start with Yokoi's output~\cite{Yokoi20}, we start with any feasible envy-free matching $M$. 
Since $M$ need not be a minimum size envy-free matching, in line~\ref{line:sea_1}, we set $q^+(h)$ in $G'$ as $q^+(h) - |M(h)|$.
\begin{algorithm}[!ht]
\SetAlgoNoLine
\SetAlgoNoEnd
	\KwIn{Input : $G = (\RR \cup \HH, E)$, $M = $ a feasible envy-free matching in $G$}
Let $\RR'$ be the set of residents unmatched in $M$\\
Let $\HH'$ be the set of hospitals such that $|M(h)|<q^+(h)$ in $G$\\
Let $G'=(\RR'\cup\HH',E')$ be an induced sub-graph of $G$, where $E' = \{(r,h)\mid r\in\RR', h\in\HH',  \textrm{$h$ prefers $r$ over its threshold resident $r_h$} \}$\\
	Set $q^+(h)$ in $G'$ as $q^+(h) - |M(h)|$ in $G$\label{line:sea_1}\\
Each $h$ has the same relative ordering on its neighbors in $G'$ as in $G$\\
	Compute a stable matching $M_s$ in $G'$\label{step:GS}\\
Return $M^*=M\cup M_s$
\caption{Maximum size envy-free matching containing $M$~\cite{MNNR18}}\label{algo:sea}
\end{algorithm}
Lemma~\ref{lem:stbl_feas} proves that a stable matching is a maximum size envy-free matching in an $\HR$ instance.
This is used to prove that the output of the above algorithm is a maximum size envy-free matching containing $M$. 

\begin{lemma}\label{lem:sea_maxm}
The matching $M^*$ output by Algorithm~\ref{algo:sea} is a maximum size envy-free matching with the property $M\subseteq M^*$.
\end{lemma}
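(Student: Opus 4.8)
The plan is to prove the claim in two parts: first that $M^*$ is envy-free and $M \subseteq M^*$, and second that among all envy-free matchings containing $M$, $M^*$ has maximum size. The inclusion $M \subseteq M^*$ is immediate from $M^* = M \cup M_s$. For envy-freeness I would argue that no envy-pair can be internal to $M$ (given), internal to $M_s$ (since $M_s$ is stable in $G'$, hence envy-free there, and the relative orderings are preserved), or across the two — and the cross case is exactly where the edge-deletion in the construction of $G'$ does its work: any edge $(r,h)$ that a resident $r$ unmatched-in-$M$ could use to envy someone was retained in $E'$ precisely when $h$ prefers $r$ over its threshold resident $r_h$; if $h$ prefers $r_h$ over $r$, then $r$ cannot envy anyone matched to $h$ because everyone currently at $h$ (in $M$ or $M_s$) is preferred by $h$ over $r_h$, hence over $r$. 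One also checks that a resident already matched in $M$ gains no new envy, since $M$ was envy-free and $M_s$ only fills residual capacity $q^+(h) - |M(h)|$ with residents $h$ prefers over its threshold, i.e.\ over $M$-partners' ``worst acceptable'' bar; I would phrase this carefully using the threshold-resident definition.

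For the size optimality, the key is to reduce to the already-established fact (Lemma~\ref{lem:stbl_feas}) that a stable matching is a maximum-size envy-free matching in an $\HR$ instance. Suppose $N$ is any envy-free matching in $G$ with $M \subseteq N$ and $|N| > |M^*|$. Consider $N \setminus M$; I claim it is an envy-free matching in the $\HR$ instance $G'$ (with the reduced quotas $q^+(h) - |M(h)|$). The edges of $N \setminus M$ go between residents unmatched in $M$ (so in $\RR'$) and hospitals with residual capacity (so in $\HH'$), and each such edge $(r,h)$ must have $h$ preferring $r$ over its threshold resident $r_h$ — otherwise, since some resident at the threshold bar is matched in $M \subseteq N$ to a hospital it likes no better than $h$, that threshold resident would envy $r$ in $N$, contradicting envy-freeness of $N$. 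Hence $N \setminus M \subseteq E'$, it respects the residual quotas, and it is envy-free in $G'$. But then $|N \setminus M| \le |M_s|$ because $M_s$, being stable (hence a maximum-size envy-free matching) in $G'$ by Lemma~\ref{lem:stbl_feas}, is at least as large as any envy-free matching in $G'$. Therefore $|N| = |M| + |N \setminus M| \le |M| + |M_s| = |M^*|$, a contradiction.

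The main obstacle I anticipate is the cross-envy analysis in the envy-freeness part and, symmetrically, the claim that $N \setminus M$ is envy-free in $G'$: both hinge on correctly translating ``envy in the big instance'' into statements about the threshold resident and the residual quotas, and on verifying that the threshold resident computed once (w.r.t.\ $M$) still correctly controls who can be admitted to $h$ after $M_s$ is added on top. I would handle this by noting that adding $M_s$-edges to $h$ never lowers $h$'s ``bar'' — every resident $M_s$ assigns to $h$ is preferred by $h$ over $r_h$ — so no resident who was safely excluded by the threshold test becomes a source of envy. One small point to be careful about: residents that are matched in $M$ but to a hospital they would willingly leave could in principle be threshold residents for other hospitals; the definition of threshold resident already accounts for this via the condition $h >_{r'} M(r')$, so the argument goes through without modification.
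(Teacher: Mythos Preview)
Your maximality argument is essentially the paper's: assume an envy-free $N \supseteq M$ with $|N| > |M^*|$, show that every edge of $N \setminus M$ must lie in $E'$ (otherwise the threshold resident $r_h$ for $h$ would have justified envy toward $r$ in $N$), and then invoke Lemma~\ref{lem:stbl_feas} in the $\HR$ instance $G'$ to conclude $|N \setminus M| \le |M_s|$. The paper's proof is terser---it does not re-prove envy-freeness of $M^*$ (deferring that to \cite{MNNR18}) and leaves the ``$N\setminus M$ is envy-free in $G'$ and respects residual quotas'' step implicit---so your write-up is more self-contained, but the approach is the same.
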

\begin{proof}
Assume for the sake of contradiction that $M'$ is an envy-free matching in $G$ such that $M' = M \cup M_x$
and $|M'| > |M^*|$.
We first claim that $M_x \subseteq E'$. If not, then there exists $(r, h) \in M_x$ where $(r, h) \notin E'$. However,
note that $(r, h)$ does not belong to $E'$ implies that there is a threshold resident $r_h$ such that $r_h$ prefers
$h$ over $M(r_h)$ and  $h$ prefers $r_h $ over $r'$.
Thus, $r_h$ has justified envy towards $r$ w.r.t. $M'$ -- this contradicts the assumption that $M'$ is envy-free.

Recall that $G'$ is an $\HR$ instance and $M_s$ is a stable matching in $G'$. To complete the proof it suffices to note that a stable matching in $G'$ is a maximum size envy-free matching in $G'$ by Lemma \ref{lem:stbl_feas}.
\end{proof}

\subsection{$\FPT$ algorithms for $\MAXEFM$}\label{sec:fpt}
In this section, we give $\FPT$ algorithms 
for the \MAXEFM\ problem on several sets of parameters. Our first set of parameters is the number of \LQ\ hospitals $q$ and
the maximum length of the preference list of any \LQ\ hospital $\ell$.
The algorithm is simple: it tries all possible assignments $M_e$ of residents to $\LQ$ hospitals. If some assignment is not envy-free
we discard it. Otherwise we use Algorithm~\ref{algo:sea} to output a maximum size envy-free matching containing $M_e$.
Since our algorithm tries out all possible assignments to $\LQ$ hospitals, and the extension of $M_e$ is a maximum
cardinality envy-free matching containing $M_e$ (by Lemma~\ref{lem:sea_maxm}) it is clear that the algorithm outputs a maximum size
envy-free matching.

	\begin{lemma}
The $\MAXEFM$ problem is \FPT\ when the parameters are the number of \LQ\ hospitals ($q$) and length of the longest preference list of any \LQ\ hospital ($\ell$).
\end{lemma}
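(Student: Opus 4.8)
The plan is to solve the \MAXEFM\ problem by bounded-depth branching over how the $\LQ$ hospitals are filled, followed by one call to Algorithm~\ref{algo:sea} per branch. Since the preference list of each hospital $h\in H_{LQ}$ contains at most $\ell$ residents, the set $M(h)$ assigned to $h$ in any matching is a subset of $h$'s list, which has at most $\ell$ residents, so there are at most $2^{\ell}$ possibilities for it; choosing for each of the $q$ hospitals in $H_{LQ}$ a subset $M(h)$ of the appropriate size and discarding the choices that assign some resident to two hospitals yields at most $(2^{\ell})^{q}=2^{q\ell}$ candidate partial matchings $M_e$ supported on $H_{LQ}$, each of which is automatically feasible for the $\HRLQ$ instance because every non-$\LQ$ hospital has zero lower quota.

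For a fixed candidate $M_e$ I would form the residual $\HR$ instance obtained by deleting all residents matched by $M_e$ and all $\LQ$ hospitals, and then truncating each remaining resident $r$'s list so as to delete every hospital that $r$ ranks below some $\LQ$ hospital $h$ on $r$'s list whose assigned set $M_e(h)$ contains a resident that $h$ ranks below $r$ --- this truncation is exactly what prevents a remaining resident from being matched so as to envy a resident of $M_e$. I would then run Algorithm~\ref{algo:sea} starting from $M_e$ on this residual instance and test whether the matching it returns is envy-free in the original $G$; candidates whose completion fails the test are discarded, and among the survivors the largest completion is output. By Lemma~\ref{lem:sea_maxm}, Algorithm~\ref{algo:sea} produces a maximum-size envy-free matching containing $M_e$ in the residual instance; and since it does so by computing a \emph{stable} matching of that $\HR$ instance, Lemma~\ref{lem:stbl_feas} applies there: any resident matched in \emph{some} envy-free matching of the residual instance is matched in its stable matchings.

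Soundness is immediate, because the explicit envy-freeness test makes every returned matching envy-free, and feasibility is inherited from $M_e$. For completeness, fix an optimal envy-free matching $M^{opt}$ and let $M_e^{opt}=\{(r,h)\in M^{opt}: h\in H_{LQ}\}$. Then $(M^{opt}(h))_{h\in H_{LQ}}$ is one of the enumerated consistent candidates; no edge of $M^{opt}$ incident to a non-$\LQ$ hospital is removed by the corresponding truncation, because such a removal would, together with the residents in $M^{opt}(h)$, already exhibit an envy-pair in $M^{opt}$; hence the part of $M^{opt}$ outside $H_{LQ}$ is an envy-free matching of the residual instance that extends $M_e^{opt}$ and matches every resident the truncation ``forced''. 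Therefore (i) by Lemma~\ref{lem:sea_maxm} the completion computed on this candidate has size at least $|M^{opt}|$, and (ii) by Lemma~\ref{lem:stbl_feas} it matches all the forced residents, which eliminates the only envy-pairs that the truncation and the threshold deletions inside Algorithm~\ref{algo:sea} do not already preclude, so the completion is envy-free in $G$ and passes the test. Optimality of $M^{opt}$ then forces the algorithm's output to have size exactly $|M^{opt}|$.

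The running time is $2^{q\ell}$ candidates, each processed in time polynomial in $|G|$ (the disjointness test, the construction of the residual instance, one Gale--Shapley computation inside Algorithm~\ref{algo:sea}, and one envy-freeness test), i.e.\ $2^{O(q\ell)}\cdot\mathrm{poly}(|G|)$, which is $\FPT$ in the combined parameter $(q,\ell)$. The step I expect to be the main obstacle is the argument of the previous paragraph: establishing that the list-truncation, the threshold-based edge deletions inside Algorithm~\ref{algo:sea}, and the Rural-Hospitals invariance underlying Lemma~\ref{lem:stbl_feas} together guarantee that the completion of the ``right'' candidate $M_e^{opt}$ is simultaneously envy-free in $G$ and of maximum size. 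Bounding the number of candidates and the per-candidate cost is, by comparison, routine.
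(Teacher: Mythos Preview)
Your high-level strategy --- enumerate the at most $2^{q\ell}$ assignments $M_e$ of residents to the $\LQ$ hospitals, extend each one optimally, and return the best --- is exactly the paper's, and your running-time bound matches. Where you diverge from the paper is in how the extension step is carried out, and that is where a gap appears.

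The paper's extension step is much simpler than yours: it just checks that $M_e$ is (feasible and) envy-free, and if so feeds $(G,M_e)$ directly to Algorithm~\ref{algo:sea}. Lemma~\ref{lem:sea_maxm} then guarantees that the output is a maximum-size envy-free matching containing $M_e$ --- no residual instance, no truncation, no post-hoc envy test. Correctness is immediate: the restriction $M_e^{opt}$ of an optimal envy-free matching to the $\LQ$ hospitals is one of the enumerated candidates, it is envy-free (any sub-matching of an envy-free matching is), and Algorithm~\ref{algo:sea} on $(G,M_e^{opt})$ returns an envy-free feasible matching of size at least $|OPT|$.

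Your construction instead \emph{deletes} the $M_e$-matched residents before computing the extension. That deletion removes precisely the residents that must serve as threshold residents for the non-$\LQ$ hospitals they prefer to their $M_e$-partner. Your truncation only blocks envy in the direction ``residual resident envies an $M_e$-resident''; nothing blocks the reverse direction ``$M_e$-resident envies a residual resident''. Concretely, take $r_1: h_1, h_2$;\ $r_2: h_1$;\ $[0,1]\,h_1: r_1, r_2$;\ $[1,1]\,h_2: r_1$. The unique feasible candidate is $M_e=\{(r_1,h_2)\}$. Your residual instance consists of $r_2$ and $h_1$ with the single edge $(r_2,h_1)$; no truncation applies to $r_2$ (it has no $\LQ$ hospital on its list); the stable matching of the residual instance is $\{(r_2,h_1)\}$; and the completion $\{(r_1,h_2),(r_2,h_1)\}$ has $r_1$ envying $r_2$, so it fails your envy test and is discarded --- leaving no surviving candidate, whereas the correct output is $\{(r_1,h_2)\}$. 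In the paper's approach, Algorithm~\ref{algo:sea} applied to $(G,M_e)$ keeps $r_1$ in the picture, recognises $r_1$ as the threshold for $h_1$, removes $(r_2,h_1)$ from $G'$, and correctly returns $\{(r_1,h_2)\}$.

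In short: drop the deletions, the truncation, and the post-hoc test; simply discard $M_e$'s that are not envy-free and invoke Algorithm~\ref{algo:sea} on $(G,M_e)$ for the rest. Lemma~\ref{lem:sea_maxm} does all the work.
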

\begin{proof}
For an $\LQ$ hospital $h$, there are at most $2^{\ell}$ possible ways of assigning residents to $h$. Since the number of $\LQ$ hospitals
is $q$, our algorithm considers $2^{\ell \cdot q}$ many different matchings. Testing whether a matching $M_e$ is envy-free and to extend it to a maximum size envy-free matching containing $M_e$ using Algorithm~\ref{algo:sea} needs linear time. 
Thus we have an $O^*(2^{\ell \cdot q})$ time algorithm for the $\MAXEFM$ problem. Here $O^*$ hides polynomial terms in $n$ and $m$.
\end{proof}

	We give the following \FPT\ result when the quotas are at most one.
Let $R_d$ be the set of residents that are acceptable to at least one deficient hospital. Let $s = |R_d|$. We denote the deficiency of the given $\HRLQ$ instance by $d$. We prove that the $\MAXEFM$ problem is \FPT\ if parameters are $s$ and $d$.
	\begin{lemma}
The $\MAXEFM$ problem is \FPT\ when the number of deficient hospitals ($d$) and the total number of residents acceptable to deficient hospitals ($s$) are parameters.
	\end{lemma}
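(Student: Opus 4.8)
The plan is to brute-force how an optimal envy-free matching behaves on the deficient hospitals and then finish each guess with the extension machinery of Section~\ref{sec:maxefm_positive}. First compute a stable matching $M_s$ of $G$; if it is feasible, output it, since by Lemma~\ref{lem:stbl_feas} it is already an optimal $\MAXEFM$. Otherwise let $H_d$ be the set of hospitals deficient in $M_s$ (so $|H_d|=d$), and recall that $R_d$ is the set of residents acceptable to at least one hospital of $H_d$ (so $s=|R_d|$). A useful preliminary observation is that since every hospital of $H_d$ is under-subscribed in $M_s$, every resident of $R_d$ is matched in $M_s$ and prefers its $M_s$-partner to every hospital of $H_d$ on its list (else $M_s$ has a blocking pair). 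Since any feasible matching must saturate $H_d$, every feasible envy-free matching $M$ induces a matching $N=M\cap E(R_d,H_d)$ that saturates $H_d$ using distinct residents of $R_d$; the number of such $N$ is at most $s(s-1)\cdots(s-d+1)\le s^{d}$, which is a function of the parameters alone, so I would enumerate all of them.

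For each candidate $N$ I would first reject it if, viewed as a partial matching, it already contains an envy-pair whose envying endpoint also lies in $R_N$ (the residents matched by $N$). Then I build a residual $\HRLQ$ instance $G_N$ from $G$ by (i) deleting the residents $R_N$ and the hospitals $H_d$; (ii) for every $(r,h_0)\in N$ and every hospital $h''$ that $r$ prefers to $h_0$, performing an $E_1$-type deletion at $h''$ (as in Step~\ref{step:del} of Algorithm~\ref{algo:maxl-envy-free}), so that a committed resident can never envy later; and (iii) for every resident $r\notin R_N$ that some $h_0\in H_d$ ranks above $N(h_0)$, performing an $E_2$-type deletion on $r$'s list, so that $r$ cannot be matched below $h_0$ in any extension. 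On $G_N$ all lower-quota hospitals that were deficient in $M_s$ have been removed, and I would then either detect that no feasible matching exists (reject $N$) or seed Algorithm~\ref{algo:sea} with a feasible envy-free matching of $G_N$ (obtained, say, via Yokoi's procedure~\cite{Yokoi20}) to produce a maximum-size envy-free matching $M_N$ of $G_N$; as a final safeguard, reject $N$ if $N\cup M_N$ still admits an envy-pair through a deleted edge. The algorithm outputs the largest $N\cup M_N$ over all accepted candidates.

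For correctness, every accepted output is feasible ($H_d$ satisfied by $N$, the remaining lower quotas by $M_N$) and envy-free in $G$ by the rejection tests together with the $E_1/E_2$ deletions, so no output exceeds an optimum $M^{*}$. Conversely, taking $N^{*}=M^{*}\cap E(R_d,H_d)$, one checks that $N^{*}$ passes all tests and that no edge of $M^{*}\setminus N^{*}$ is deleted when building $G_{N^{*}}$ (a deleted edge would exhibit an envy-pair in $M^{*}$); hence $M^{*}\setminus N^{*}$ is a feasible envy-free matching of $G_{N^{*}}$, so $|M_{N^{*}}|\ge|M^{*}\setminus N^{*}|$ and $|N^{*}\cup M_{N^{*}}|\ge|M^{*}|$. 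Since there are at most $s^{d}$ candidates and every other step (a stable matching, Yokoi's procedure, Algorithm~\ref{algo:sea}) runs in polynomial time, the running time is $O^{*}(s^{d})$, establishing $\FPT$-membership for the parameters $s$ and $d$. The main obstacle I anticipate is the correctness of the residual step: one must argue that $G_N$, with exactly the $E_1/E_2$ deletions above, faithfully encodes the feasible envy-free matchings of $G$ extending $N$ — losing no valid extension and admitting no matching that, lifted back to $G$, creates envy through a deleted edge — and, relatedly, that a maximum feasible envy-free matching of $G_N$ is genuinely produced by the seed-and-extend routine. I expect this to require a careful case analysis of envy-pairs in the spirit of Lemmas~\ref{lem:e1e2_envyfree} and~\ref{lem:sea_maxm}, split according to whether each involved resident lies in $R_N$, is one of the ``forced'' residents of step (iii), or lies freely in $G_N$.
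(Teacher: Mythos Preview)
Your approach is essentially the paper's: both enumerate the at most $s^d$ assignments of residents in $R_d$ to the $d$ deficient hospitals, prune edges that would create envy involving the committed residents, solve the residual, and return the best branch. The paper phrases the enumeration as a bounded-depth branching tree (depth $d$, fan-out $\le s$) and at each leaf simply computes the stable matching of the pruned instance, discarding the branch if that matching is infeasible; your extra $E_2$-type deletion (forcing any $r\notin R_N$ ranked above $N(h_0)$ to be matched no worse than $h_0$) is a refinement the paper does not spell out.

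One point to tighten: your residual step seeds with Yokoi's matching and then applies Algorithm~\ref{algo:sea}, and you state this ``produce[s] a maximum-size envy-free matching $M_N$ of $G_N$''. But Lemma~\ref{lem:sea_maxm} only guarantees a maximum envy-free matching \emph{among those containing the seed}, not a global maximum in $G_N$; so without further argument you cannot conclude $|M_{N^*}|\ge |M^*\setminus N^*|$ for the correct branch $N^*$. The paper avoids this by computing the stable matching of the pruned instance directly: by (the proof of) Lemma~\ref{lem:stbl_feas}, every envy-free matching of $G_N$ matches only residents matched in that stable matching, so the stable matching already has the right size, and one only needs a feasibility check. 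Replacing your Yokoi$+$sea step by this stable-matching step removes the gap and also simplifies the argument you anticipate in your final paragraph.
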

\begin{proof}
We use bounded branching algorithm presented in Algorithm~\ref{alg:fpt_sd}. Matching in line~\ref{line:fpt_sd_1} is computed using $\EFHRLQ$ algorithm in~\cite{Yokoi20}.
	For every deficient hospital $h$ (w.r.t. stable matching), we branch on every resident $r$ in the preference list of $h$. Any matching computed along the branch $r$ of $h$ has $(r,h)$. In every branch, we prune the preference lists such that possible envy-pairs w.r.t. current matching are removed. 
If we run out of preference list at a particular level, we mark the branch as ``invalid'' and do not progress on that branch.
This generates a bounded branching tree that has at most $d$ levels and $s$ branches at each level. 
	We process each valid leaf $l$ as follows. Let $A_l$ be the partial matching (assignment) we have computed along the branch that connects $l$ with the root. We compute a stable matching $M$ on the pruned instance $G_l$. Since we removed possible envy-pairs at each level, it is guaranteed that $M_l$ is envy-free. If $M_l$ is not feasible, we discard it otherwise we choose the largest size such matching across all valid leaf nodes as the output.
\begin{algorithm}
\SetAlgoNoLine
\SetAlgoNoEnd
	\KwIn{Input: HRLQ instance containing feasible envy-free matching}
	\KwOut{Output: Maximum size feasible envy-free matching}
	Let $H' = \{h \in H\ |\ h$ is deficient in a stable matching$\}$\\
	$M^* = $ Yokoi's matching\label{line:fpt_sd_1}\\
	\While {$H'$ is not empty}{
        Pick $h \in H'$\\
	\If {Preference list of $h$ is empty}{
		Mark the branch ``invalid''}
        For every resident $r$ in $h$'s preference list, create a branch and match $(r,h)$\\
	In every branch $l$, prune the instance by removing future envy i.e. $E = E \setminus \{(r',h')\ |\ h'$ is more preferred by $r$ than $h$ and $r'$ is less preferred by $h'$ than $r\}$
	}
	\ForEach {valid leaf $l$}{
        Let $A_l$ be the assignment to deficient hospitals\\
	Let $G_l$ be the pruned instance\\
	Compute a stable matching $M$ in $G_l$\\
	Let $M_l = M \cup A_l$\\
	\If {$M_l$ is not feasible}{
		Discard $M_l$\\
		}
	\If {$|M_l| > |M^*|$}{
		$M^* = M_l$\\}
	}
    return $M^*$;
\caption{FPT algorithm for $\MAXEFM$ parameterized in $s, d$}
\label{alg:fpt_sd}
\end{algorithm}
\\\noindent{\bf Correctness:} At every step, the instance is pruned to remove future envy, so it is easy to see that the matching output is envy-free. For the sake of contradiction, assume that there exists another larger envy-free matching $M^*$ than the matching $M$ output by Algorithm~\ref{alg:fpt_sd}. There must exist at least one hospital $h \in H'$ which has at least one different partner in $M$ and $M^*$. But, since we are considering all possible assignments to deficient hospitals, we must have considered the assignment in $M^*$ as well. So, we could not have missed out a larger envy-free matching.\\\\
\noindent{\bf Running time:} 
	It is clear that there are at most $s^d$ possible leaf nodes. Removing future envy and computing stable matching takes $O(m)$. So, overall running time is $O(m\cdot s^d)$.\\\\
Hence, $\MAXEFM$ is FPT if parameters are number of deficient hospitals ($d$) and the total number of unique residents acceptable to deficient hospitals ($s$).
\end{proof}

Now we consider total number of residents acceptable to $\LQ$ hospitals as a parameter and present a parameterized algorithm when quotas are at most one.
	\begin{lemma}
The $\MAXEFM$ problem parameterized on the total number of residents acceptable to \LQ\ hospitals is \FPT.
	\end{lemma}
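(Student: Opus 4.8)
The plan is to reduce this parameterization to one already handled: the earlier $\FPT$ algorithm whose parameters are the number $q$ of \LQ\ hospitals and the maximum length $\ell$ of the preference list of any \LQ\ hospital, which runs in time $O^{*}(2^{q\ell})$ (it tries every assignment of residents to \LQ\ hospitals and extends each envy-free one to a maximum size envy-free matching via Algorithm~\ref{algo:sea}, whose correctness is Lemma~\ref{lem:sea_maxm}). Let $k$ denote the present parameter, and let $R^{*}\subseteq\RR$ be the set of residents acceptable to at least one hospital in $H_{LQ}$, so that $k=|R^{*}|$. It is enough to show $\ell\le k$ and $q\le k$: then the algorithm above already runs in $O^{*}(2^{k^{2}})$, which is $\FPT$ in $k$, and its output is a maximum size envy-free matching.

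The bound $\ell\le k$ is immediate: the preference list of any hospital $h\in H_{LQ}$ consists only of residents acceptable to $h$, all of which lie in $R^{*}$, so this list has length at most $|R^{*}|=k$. For $q\le k$, I would use the standing assumption that the instance admits a feasible matching $M_f$. For each $h\in H_{LQ}$ we have $q^{-}(h)\ge 1$, and feasibility of $M_f$ gives $|M_f(h)|\ge q^{-}(h)$. The sets $M_f(h)$ for $h\in H_{LQ}$ are pairwise disjoint (each resident is matched to at most one hospital) and each is contained in $R^{*}$; hence $q=|H_{LQ}|\le\sum_{h\in H_{LQ}}q^{-}(h)\le\sum_{h\in H_{LQ}}|M_f(h)|\le|R^{*}|=k$. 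Combining the two bounds gives $2^{q\ell}\le 2^{k^{2}}$, and the correctness of the output is inherited verbatim from the cited lemma.

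There is essentially no obstacle here: the only content is the observation that both parameters of the earlier result are dominated by the number of residents acceptable to \LQ\ hospitals, everything else being inherited. I would additionally remark that for instances where all quotas are at most one, one can instead invoke the branching algorithm of Algorithm~\ref{alg:fpt_sd} --- parameterized by the number $d$ of deficient hospitals and the number $s$ of residents acceptable to a deficient hospital, with running time $O^{*}(s^{d})$ --- which gives a faster $O^{*}(k^{k})$ bound, since every deficient hospital is a lower-quota hospital and hence $d\le q\le k$ while $s\le k$.
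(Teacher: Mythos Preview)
Your proof is correct but takes a different route from the paper. The paper's argument enumerates directly: for each assignment $A$ between $R'$ and $H_{LQ}$ that is feasible and envy-free, it extends $A$ via Algorithm~\ref{algo:sea} (Lemma~\ref{lem:sea_maxm}) and returns the largest extension, giving a running time of $O(m\cdot |R'|!)$; the paper presents this under the hypothesis that all quotas are at most one. You instead observe that both parameters of the earlier $(q,\ell)$ result are bounded by $k=|R'|$ and simply invoke that algorithm, obtaining $O^{*}(2^{k^{2}})$. Your reduction is cleaner and, since the $(q,\ell)$ algorithm carries no quota restriction, it establishes the result for arbitrary quotas without any further work. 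The price is a worse dependence on the parameter ($2^{k^{2}}$ versus roughly $k!\approx 2^{k\log k}$), which is irrelevant for the $\FPT$ claim. Your additional remark on reaching $O^{*}(k^{k})$ for $0/1$ quotas through the $(s,d)$ branching algorithm is also correct, since every deficient hospital is a \LQ\ hospital (so $d\le q\le k$) and the residents acceptable to deficient hospitals form a subset of $R^{*}$ (so $s\le k$).
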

\begin{proof}
  Consider $R' = \{r \in R\ |\ \exists h \in H_{LQ}\ \text{such that}\ (r,h) \in E\}$. Thus, $R'$ is the set of residents acceptable to at least one $\LQ$ hospital. Algorithm~\ref{alg:fpt_rprime} is FPT for the parameter $|R'|$. Yokoi's matching in line~\ref{line:fpt_rprime_1} is computed using using Yokoi's $\EFHRLQ$ algorithm~\cite{Yokoi20}. Matching in line~\ref{line:fpt_rprime_5} is computed using Algorithm~\ref{algo:sea}.
\begin{algorithm}
	\SetAlgoNoLine
	\SetAlgoNoEnd
	\KwIn{HRLQ instance containing feasible envy-free matching}
	\KwOut{Maximum size feasible envy-free matching}
    $M^* = $ Yokoi's matching \label{line:fpt_rprime_1}\\
	\ForEach {assignment $A$ between $R'$ and $H_{LQ}$}{
		\If {$A$ is not feasible or not envy-free}{
            	discard $A$\\}
	\Else {
            Compute maximum size envy-free matching $M$ containing $A$ \label{line:fpt_rprime_5}\\
	    \If {$|M| > |M^*|$}{
		    $M^* = M$\\}
	    }
	    }
    return $M^*$
\caption{FPT algorithm for $\MAXEFM$ parameterized in $|R'|$}
\label{alg:fpt_rprime}
\end{algorithm}
	\\\noindent{\bf Correctness:} We consider all possible assignments to residents in $R'$ using branching. We discard an assignment that is infeasible or not envy-free. Thus, we consider all possible envy-free and feasible assignments and extend them using Algorithm~\ref{algo:sea}. By Lemma~\ref{lem:sea_maxm}, $M$ is maximum size envy-free matching that contains $A$ and we pick the largest among them.\\\\
\noindent{\bf Running time:} There are $|R'|!$ possible assignments to check. Finding if an assignment is feasible and envy-free takes $O(m)$. Computing maximum size envy-free matching containing a given assignment takes $O(m)$. So, overall running time is $O(m \cdot |R'|!)$.\\\\
Hence, $\MAXEFM$ is FPT if parameter is the number of residents acceptable to $\LQ$ hospitals.
\end{proof}
This establishes Theorem~\ref{thm:pc_results}\ref{pc_p3}.

\section{Relaxed Stability}\label{sec:maxrsm}
In this section, we present our results related to the relaxed stability in $\HRLQ$ instance.
We prove that $\MAXRSM$ is $\NP$-hard and hard to approximate within a factor of $\frac{21}{19}$ unless $\Poly=\NP$.
Then we present a simple efficient algorithm which gives a 
$\frac{3}{2}$-approximation guarantee for $\MAXRSM$.
\subsection{$\NP$-hardness and inapproximability of $\MAXRSM$}\label{sec:maxrefm}
In this section, we show a reduction from the Minimum Vertex Cover ({$\MVC$}) to the $\MAXRSM$. 
\noindent{\bf Reduction:}
Given a graph $G = (V,E)$, which is an instance of the {\sf MVC} problem, we construct an instance $G'$ of the {$\MAXRSM$} problem.
Let $n = |V|$. 
Corresponding to each vertex $v_i$ in $G$, $G'$ contains a gadget with three residents $r_1^i, r_2^i, r_3^i$, and three hospitals $h_1^i, h_2^i, h_3^i$.
All hospitals have an upper-quota of $1$ and 
$h_3^i$ has a lower-quota of $1$. 
Assume that the vertex $v_i$ has $d$ neighbors in $G$, namely $v_{j_1}, \dots, v_{j_d}$. The preference lists of the three residents
and three hospitals are shown in Fig.~\ref{fig:rsmhardness}. We impose an arbitrary but fixed ordering on the vertices which is used as 
a strict ordering of neighbors in the preference lists of resident $r_1^i$ and hospital $h_2^i$ in $G'$. Note that $G'$ has $N = 3|V|$ residents and hospitals.
\begin{figure}[!ht]
	\centering
\begin{minipage}{0.4\textwidth}
\begin{align*}
    {r_1^i} &: h_3^i, h_2^{j_1}, h_2^{j_2}, \ldots, h_2^{j_d}, h_1^i\\
    {r_2^i} &: h_2^i, h_3^i\\
    {r_3^i} &: h_2^i
\end{align*}
\end{minipage}%
\hfill
\begin{minipage}{0.5\textwidth}
\begin{align*}
    [0,1]\text{ } {h_1^i} &: r_1^i\\
    [0,1]\text{ } {h_2^i} &: r_2^i, r_1^{j_1}, r_2^{j_2}, \ldots, r_2^{j_d}, r_3^i\\
    [1,1]\text{ } {h_3^i} &: r_2^i, r_1^i
\end{align*}
\end{minipage}%
\caption{Preferences of residents and hospitals corresponding to a vertex $v_i$ in $G$.}
\label{fig:rsmhardness}
\end{figure}

\begin{lemma}\label{lem:correctness}
Let $G'$ be the instance of the $\MAXRSM$ problem constructed as above from an instance $G = (V, E)$ of the minimum vertex cover problem. If $V C(G)$ denotes a minimum vertex cover of $G$ and $OPT(G')$ denotes a maximum size relaxed stable matching in $G'$, then $|OPT(G')| = 3|V| - |VC(G)|$.
\end{lemma}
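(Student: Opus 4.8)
The plan is to establish the two inequalities $|OPT(G')| \ge 3|V| - |VC(G)|$ and $|OPT(G')| \le 3|V| - |VC(G)|$ separately, mirroring the structure of the proof of Lemma~\ref{lem:red-correct} but adapted to the three-hospital gadget and to relaxed stability instead of envy-freeness. For the lower bound, given a minimum vertex cover $VC(G)$ I would construct the matching $M$ that, for each $v_i \in VC(G)$, matches $(r_1^i, h_3^i)$ and $(r_3^i, h_2^i)$ (leaving $r_2^i$ unmatched), and for each $v_i \notin VC(G)$ matches $(r_1^i, h_1^i)$, $(r_2^i, h_3^i)$, $(r_3^i, h_2^i)$ (all three residents matched). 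This $M$ is feasible since every $h_3^i$ receives exactly one resident. Its size is $2|VC(G)| + 3(|V| - |VC(G)|) = 3|V| - |VC(G)|$. The key verification is that $M$ is relaxed stable: I must check that within each gadget at most $q^-(h_3^i)=1$ resident of $M(h_3^i)$ lies in a blocking pair, that no unmatched resident (the $r_2^i$ for $v_i \in VC(G)$) blocks, and that no cross-gadget blocking pair through an edge $(r_1^i, h_2^{j})$ survives. For a covered vertex, $r_1^i$ is matched to its top choice $h_3^i$, so it cannot block; the unmatched $r_2^i$ prefers $h_2^i$ and $h_3^i$, both fully subscribed with residents ($r_3^i$ and $r_1^i$ respectively) that $h_2^i$/$h_3^i$ rank below $r_2^i$ — so $r_2^i$ does block, but it is unmatched, which would violate relaxed stability. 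This is the point I must handle carefully: I expect the intended construction instead matches $r_2^i$ to $h_3^i$ and leaves $r_1^i$ unmatched for $v_i \in VC(G)$, or uses a different pattern; I would re-derive the gadget patterns (analogous to Fig.~\ref{fig:reverse}) to pin down exactly which size-2 configuration is relaxed stable, then the blocking resident is the forced resident $r_2^i \in M(h_3^i)$, which is allowed.

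For the upper bound, let $M = OPT(G')$ be a maximum relaxed stable matching. Since every $h_3^i$ has a positive lower quota, $h_3^i$ must be matched; since $M$ is relaxed stable and no unmatched resident may block, I would enumerate the possible local patterns inside a gadget for $v_i$ (who is matched to $h_3^i$, where the other residents go), exactly as the five-pattern analysis in Lemma~\ref{lem:red-correct}. The patterns that match all three residents of gadget $i$ will be shown to force, for every neighbor $v_j$, that $h_2^j$ is not available to $r_1^i$ in a blocking way — which in turn constrains the neighboring gadget. I would then define $C = \{v_i : \text{gadget } i \text{ is not matched in the "all-three" pattern}\}$ and prove $C$ is a vertex cover: if $(v_i,v_j)\in E$ and both gadgets used the all-three pattern, then the edge $(r_1^i, h_2^j)$ (or $(r_1^j, h_2^i)$) would be a blocking pair involving a non-forced, matched resident of a hospital with zero lower quota, contradicting relaxed stability. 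Counting as in Lemma~\ref{lem:red-correct} (all-three pattern contributes $3$ edges; the other patterns contribute $2$, possibly with a pattern-3/pattern-4 pairing averaging to $2$), I get $|M| = 3(|V| - |C|) + 2|C| = 3|V| - |C| \le 3|V| - |VC(G)|$, since $C$ is a vertex cover and $VC(G)$ is minimum.

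The main obstacle I anticipate is the relaxed-stability bookkeeping: unlike envy-freeness, a blocking pair is permitted as long as it only involves forced residents at lower-quota hospitals and never an unmatched resident. So in both directions I must track precisely which residents in $M(h_3^i)$ are "spending" the single unit of blocking budget, and confirm that every other potential blocking pair (in particular every cross-gadget pair via an $h_2^j$, which has zero lower quota and hence zero budget) is genuinely absent. Getting the size-2 gadget pattern right so that the surviving blocking pair lands exactly on the forced resident $r_2^i \in h_3^i$ — and not on an unmatched resident — is the delicate step; once the pattern catalogue is fixed, the vertex-cover correspondence and the counting are routine.
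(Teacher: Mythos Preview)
Your overall two-inequality structure matches the paper, but the lower-bound construction has a concrete gap. For $v_i\in VC(G)$ the paper does \emph{not} leave $r_2^i$ unmatched (your first guess) nor $r_1^i$ unmatched (your suggested fix); both fail relaxed stability exactly as you noticed, since an unmatched resident would block via $h_2^i$ or via some $h_2^j$. The correct size-two pattern is $M\supseteq\{(r_1^i,h_3^i),(r_2^i,h_2^i)\}$, leaving $r_3^i$ unmatched. This works because $r_3^i$'s only acceptable hospital is $h_2^i$, which is now matched to its top choice $r_2^i$, so $(r_3^i,h_2^i)$ is not a blocking pair; meanwhile $r_1^i$ and $r_2^i$ are each at their top choices, so no resident in a covered gadget blocks at all, and the single allowed blocking resident at $h_3^i$ is spent only in the uncovered gadgets (on $r_2^i$, via $(r_2^i,h_2^i)$). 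The cross-gadget check then goes through exactly as you outlined.

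For the upper bound your plan is right in spirit, but the pattern catalogue is not the five-pattern picture from Lemma~\ref{lem:red-correct}. With this three-hospital gadget there are seven local patterns (depending on whether $h_3^i$ gets $r_1^i$ or $r_2^i$, and where $r_1^i$ and $h_2^i$ go), and the paper's argument \emph{eliminates} four of them outright rather than pairing them: patterns where $h_2^i$ is empty are killed because the unmatched $r_3^i$ would block, and the pattern where $r_1^i$ goes to an external $h_2^j$ while $r_3^i$ stays at $h_2^i$ is killed by a chain argument that eventually produces a forbidden blocking pair. What survives are exactly the size-two pattern above, the all-three pattern, and a ``pattern~7'' in which both $r_1^i$ and $h_2^i$ are matched across gadgets; two pattern-7 gadgets together still contribute only $2$ per vertex. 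After that, your vertex-cover extraction and the count $|M|=3|V|-|C|$ are correct.
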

\begin{proof}
We first prove that $|OPT(G')| \geq 3|V| - |VC(G)|$. Given a minimum vertex cover $VC(G)$ of $G$ we construct a relaxed stable matching $M$ for $G'$ as follows. $M = \{(r_1^i, h_3^i), (r_2^i, h_2^i) \mid v_i \in VC(G)\} \cup \{(r_1^i, h_1^i), (r_2^i, h_3^i), (r_3^i, h_2^i) \mid v_i \notin VC(G)\}$. Thus, $|OPT(G')| \geq |M| = 2|VC(G)| + 3(|V| - |VC(G)|) = 3|V| - |VC(G)|$.

\begin{claim}
	{$M$ is relaxed stable in $G'$.}
\end{claim}
\begin{claimproof}
When $v_i \in VC(G)$, residents $r_1^i$ and $r_2^i$ both are matched to their top choice hospitals and hospital $h_2^i$ is matched to its top choice resident $r_2^i$. Thus, when $v_i \in VC(G)$, no resident from the $i$-th gadget participates in a blocking pair. When $v_i \notin VC(G)$, hospitals $h_1^i$ and $h_3^i$ are matched to their top choice residents and we ignore blocking pair $(r_2^i, h_2^i)$ because $r_2^i$ is matched to a lower-quota hospital $h_3^i$, thus there is no blocking pair within the gadget for $v_i \notin VC(G)$. Now suppose that there is a blocking pair $(r_1^i, h_2^j)$ for some $j$ such that $(v_i, v_j) \in E$. Note that either $v_i$ or $v_j$ is in $VC(G)$. If $v_i \in VC(G)$, $r_1^i$ is matched to its top choice hospital $h_3^i$, thus cannot participate in a blocking pair. If $v_i \notin VC(G)$, it implies that $v_j \in VC(G)$. Then for $v_j$'s gadget, $h_2^j$ is matched to its top choice $r_2^j$, thus cannot form a blocking pair.
\end{claimproof}

Now we prove that $OPT(G') \leq 3|V| - |VC(G)|$. Let $M = OPT(G')$ be a maximum size relaxed stable matching in $G'$. Consider a vertex $v_i \in V$ and the corresponding residents and hospitals in $G'$. Refer Fig.~\ref{fig:7P} for the possible patterns caused by $v_i$. Hospital $h_3^i$ must be matched to either resident $r_1^i$ (Pattern $1$) or resident $r_2^i$ (Pattern $2$ to Pattern $7$). If $(r_1^i, h_3^i) \in M$, then the resident $r_2^i$ must be matched to a higher preferred hospital $h_2^i$ in $M$. If $(r_2^i, h_3^i) \in M$ then $h_2^i$ may be matched with either $r_3^i$ or $r_1^j$ of some neighbour $v_j$ or may be left unmatched. Similarly, $r_1^i$ can either be matched to $h_1^i$ or $h_2^j$ of some neighbour $v_j$. This leads to $6$ combinations as shown in Fig.~\ref{fig:7P_2} to Fig.~\ref{fig:7P_7}.
\begin{figure}[!ht]
\begin{center}
    \begin{subfigure}[b]{0.13\textwidth}
    \begin{center}
    \begin{tikzpicture}
    \node (r1) at (0,0.8) {$r_1^i$};
    \node (r2) at (0,0) {$r_2^i$};
    \node (r3) at (0,-0.8) {$r_3^i$};
    \node (h1) at (1.3,0.8) {$h_1^i$};
    \node (h2) at (1.3,0) {$h_2^i$};
    \node (h3) at (1.3,-0.8) {$h_3^i$};
    \path [-] (r1) edge (h3);
    \path [-] (r2) edge (h2);
    \node (x) at (0,-1.6) {};
    \node (y) at (1.3,1.6) {};
    \draw[dotted] (-0.2,-1.1) rectangle (1.5,1.1);
    \end{tikzpicture}
    \caption{Pattern $1$}
    \label{fig:7P_1}
    \end{center}
    \end{subfigure}\hfill
    \begin{subfigure}[b]{0.13\textwidth}
    \begin{center}
    \begin{tikzpicture}
    \node (r1) at (0,0.8) {$r_1^i$};
    \node (r2) at (0,0) {$r_2^i$};
    \node (r3) at (0,-0.8) {$r_3^i$};
    \node (h1) at (1.3,0.8) {$h_1^i$};
    \node (h2) at (1.3,0) {$h_2^i$};
    \node (h3) at (1.3,-0.8) {$h_3^i$};
    \path [-] (r1) edge (h1);
    \path [-] (r2) edge (h3);
    \path [-] (r3) edge (h2);
    \node (x) at (0,-1.6) {};
    \node (y) at (1.3,1.6) {};
    \draw[dotted] (-0.2,-1.1) rectangle (1.5,1.1);
    \end{tikzpicture}
    \caption{Pattern $2$}
    \label{fig:7P_2}
    \end{center}
    \end{subfigure}\hfill
    \begin{subfigure}[b]{0.13\textwidth}
    \begin{center}
    \begin{tikzpicture}
    \node (r1) at (0,0.8) {$r_1^i$};
    \node (r2) at (0,0) {$r_2^i$};
    \node (r3) at (0,-0.8) {$r_3^i$};
    \node (h1) at (1.3,0.8) {$h_1^i$};
    \node (h2) at (1.3,0) {$h_2^i$};
    \node (h3) at (1.3,-0.8) {$h_3^i$};
    \node (x) at (0,-1.6) {};
    \node (y) at (1.3,1.6) {};
    \path [-] (r1) edge (h1);
    \path [-] (r2) edge (h3);
    \draw[dotted] (-0.2,-1.1) rectangle (1.5,1.1);
    \end{tikzpicture}
    \caption{Pattern $3$}
    \label{fig:7P_3}
    \end{center}
    \end{subfigure}\hfill
    \begin{subfigure}[b]{0.13\textwidth}
    \begin{center}
    \begin{tikzpicture}
    \node (r1) at (0,0.8) {$r_1^i$};
    \node (r2) at (0,0) {$r_2^i$};
    \node (r3) at (0,-0.8) {$r_3^i$};
    \node (h1) at (1.3,0.8) {$h_1^i$};
    \node (h2) at (1.3,0) {$h_2^i$};
    \node (h3) at (1.3,-0.8) {$h_3^i$};
    \node (x) at (0,-1.6) {};
    \node (y) at (1.3,1.6) {};
    \path [-] (r1) edge (h1);
    \path [-] (r2) edge (h3);
    \path [-] (x) edge (h2);
    \draw[dotted] (-0.2,-1.1) rectangle (1.5,1.1);
    \end{tikzpicture}
    \caption{Pattern $4$}
    \label{fig:7P_4}
    \end{center}
    \end{subfigure}\hfill
    \begin{subfigure}[b]{0.13\textwidth}
    \begin{center}
    \begin{tikzpicture}
    \node (r1) at (0,0.8) {$r_1^i$};
    \node (r2) at (0,0) {$r_2^i$};
    \node (r3) at (0,-0.8) {$r_3^i$};
    \node (h1) at (1.3,0.8) {$h_1^i$};
    \node (h2) at (1.3,0) {$h_2^i$};
    \node (h3) at (1.3,-0.8) {$h_3^i$};
    \node (x) at (1.3,1.6) {};
    \node (y) at (0,-1.6) {};
    \path [-] (r1) edge (x);
    \path [-] (r2) edge (h3);
    \path [-] (r3) edge (h2);
    \draw[dotted] (-0.2,-1.1) rectangle (1.5,1.1);
    \end{tikzpicture}
    \caption{Pattern $5$}
    \label{fig:7P_5}
    \end{center}
    \end{subfigure}\hfill
    \begin{subfigure}[b]{0.13\textwidth}
    \begin{center}
    \begin{tikzpicture}
    \node (r1) at (0,0.8) {$r_1^i$};
    \node (r2) at (0,0) {$r_2^i$};
    \node (r3) at (0,-0.8) {$r_3^i$};
    \node (h1) at (1.3,0.8) {$h_1^i$};
    \node (h2) at (1.3,0) {$h_2^i$};
    \node (h3) at (1.3,-0.8) {$h_3^i$};
    \node (x) at (1.3,1.6) {};
    \node (y) at (0,-1.6) {};
    \path [-] (r1) edge (x);
    \path [-] (r2) edge (h3);
    \draw[dotted] (-0.2,-1.1) rectangle (1.5,1.1);
    \end{tikzpicture}
    \caption{Pattern $6$}
    \label{fig:7P_6}
    \end{center}
    \end{subfigure}\hfill
    \begin{subfigure}[b]{0.13\textwidth}
    \begin{center}
    \begin{tikzpicture}
    \node (r1) at (0,0.8) {$r_1^i$};
    \node (r2) at (0,0) {$r_2^i$};
    \node (r3) at (0,-0.8) {$r_3^i$};
    \node (h1) at (1.3,0.8) {$h_1^i$};
    \node (h2) at (1.3,0) {$h_2^i$};
    \node (h3) at (1.3,-0.8) {$h_3^i$};
    \node (x) at (0,-1.6) {};
    \node (y) at (1.3,1.6) {};
    \path [-] (r1) edge (y);
    \path [-] (r2) edge (h3);
    \path [-] (x) edge (h2);
    \draw[dotted] (-0.2,-1.1) rectangle (1.5,1.1);
    \end{tikzpicture}
    \caption{Pattern $7$}
    \label{fig:7P_7}
    \end{center}
    \end{subfigure}
    \caption{Seven patterns possibly caused by vertex $v_i$}
    \label{fig:7P}
\end{center}
\end{figure}\raggedbottom
\begin{claim}
	{A vertex cannot cause pattern $5$.}
\end{claim}
\begin{claimproof}
Assume for the sake of contradiction that a vertex $v_i$ causes pattern $5$. Then, there must exist a vertex $v_j$ adjacent to $v_i$ such that $v_j$ causes either pattern $4$ or pattern $7$.

{\em Case 1:} If vertex $v_j$ causes pattern $4$, then $(r_1^j, h_2^i)$ form a blocking pair, a contradiction.

{\em Case 2:} If vertex $v_j$ causes pattern $7$, then there must exist vertices $v_{j+1}, \dots, v_t$ such that there are following edges in $G$ : $(v_i, v_j), (v_j, v_{j+1}), (v_{j+1}, v_{j+2}), \ldots, (v_{t-1}, v_t)$ and vertices $v_j$ to $v_{t-1}$ cause pattern $7$ and $v_t$ causes pattern $4$. See Fig.~\ref{fig:noP5}. In the vertex ordering, we must have $v_{j+1} > v_i$ otherwise $(r_1^j, h_2^i)$ form a blocking pair. But, since $h_2^j$ is matched to $r_1^i$, $v_{j+2} > v_j$. Continuing this way, $v_{t} > v_{t-2}$ but this causes $(r_1^t, h_2^{t-2})$ form a blocking pair. Thus, the claimed set of edges cannot exist.
\begin{figure}[!ht]
    \begin{center}
    \begin{tikzpicture}
    \node (r1i) at (0,0.8) {$r_1^i$};
    \node (r2i) at (0,0) {$r_2^i$};
    \node (r3i) at (0,-0.8) {$r_3^i$};
    \node (h1i) at (1,0.8) {$h_1^i$};
    \node (h2i) at (1,0) {$h_2^i$};
    \node (h3i) at (1,-0.8) {$h_3^i$};
    \draw[dotted] (-0.2,-1) rectangle (1.2,1);
    \node (r1j) at (2,0.8) {$r_1^j$};
    \node (r2j) at (2,0) {$r_2^j$};
    \node (r3j) at (2,-0.8) {$r_3^j$};
    \node (h1j) at (3,0.8) {$h_1^j$};
    \node (h2j) at (3,0) {$h_2^j$};
    \node (h3j) at (3,-0.8) {$h_3^j$};
    \draw[dotted] (1.8,-1) rectangle (3.2,1);
    \node (r1j1) at (4,0.8) {$r_1^{j+1}$};
    \node (r2j1) at (4,0) {$r_2^{j+1}$};
    \node (r3j1) at (4,-0.8) {$r_3^{j+1}$};
    \node (h1j1) at (5,0.8) {$h_1^{j+1}$};
    \node (h2j1) at (5,0) {$h_2^{j+1}$};
    \node (h3j1) at (5,-0.8) {$h_3^{j+1}$};
    \draw[dotted] (3.7,-1) rectangle (5.3,1);
    \node (x) at (7,0) {};
    \node () at (7.5,0) {$\ldots$};
    \path [-] (r1i) edge (h2j);
    \path [-] (r3i) edge (h2i);
    \path [-] (r2i) edge (h3i);
    \path [-] (r1j) edge (h2j1);
    \path [-] (r2j) edge (h3j);
    \path [-] (r2j1) edge (h3j1);
    \path [-] (r1j1) edge (x);
    \node (y) at (8,1) {};
    \node (r1t1) at (10,0.8) {$r_1^{t-1}$};
    \node (r2t1) at (10,0) {$r_2^{t-1}$};
    \node (r3t1) at (10,-0.8) {$r_3^{t-1}$};
    \node (h1t1) at (11,0.8) {$h_1^{t-1}$};
    \node (h2t1) at (11,0) {$h_2^{t-1}$};
    \node (h3t1) at (11,-0.8) {$h_3^{t-1}$};
    \draw[dotted] (9.7,-1) rectangle (11.3,1);
    \node (r1t) at (12,0.8) {$r_1^t$};
    \node (r2t) at (12,0) {$r_2^t$};
    \node (r3t) at (12,-0.8) {$r_3^t$};
    \node (h1t) at (13,0.8) {$h_1^t$};
    \node (h2t) at (13,0) {$h_2^t$};
    \node (h3t) at (13,-0.8) {$h_3^t$};
    \draw[dotted] (11.7,-1) rectangle (13.3,1);
    \path [-] (r1t1) edge (h2t);
    \path [-] (r2t1) edge (h3t1);
    \path [-] (r1j) edge (h2j1);
    \path [-] (r2t) edge (h3t);
    \path [-] (r1t) edge (h1t);
    \path [-] (y) edge (h2t1);
    \end{tikzpicture}
    \caption{Pattern combination that is not relaxed stable if $v_i$ causes pattern $5$.}
    \label{fig:noP5}
\end{center}
\end{figure}
\end{claimproof}
\begin{claim}
	{A vertex cannot cause pattern $3$ or $6$ or $4$.}
\end{claim}
\begin{claimproof}
In pattern $3$ and $6$, $r_3^i$ participates in a blocking pair $(r_3^i, h_2^i)$, contradicting that $M$ is relaxed stable.
If a vertex $v_i$ causes pattern $4$, then there exists a set of $t$ vertices $v_{i+1}, \ldots, v_{i+t}$ such that for $0 \leq k < t, (v_{i+k}, v_{i+k+1})$ is an edge in $G$ and $v_{i+t}$ causes pattern $6$. But, since pattern $6$ cannot occur, pattern $4$ cannot occur.
\end{claimproof}

Thus, a vertex can cause either pattern $1$ or $2$ and thus match all the residents and hospitals within its own gadget or pattern $7$ and match $r_1$ and $h_2$ outside its own gadget. Accordingly there are following cases.

{\em Case 1:} A vertex that causes pattern $7$ can be adjacent to another vertex that causes pattern $7$, which together give matching size $4$ i.e. $2$ per vertex.

{\em Case 2:} It is clear that a vertex causing pattern $1$ or $2$ contributes to matching size of $2$ or $3$ respectively.

\noindent {\bf Vertex cover $C$ of $G$ corresponding to $M$:}
Using $M$, we now construct the set $C$ of vertices in $G$ which constitute a vertex cover of $G$.
If $v_i$ causes pattern $2$, we do not include it in the $C$; Otherwise, we include it. We prove that $C$ is a vertex cover. Suppose not, then there exists an edge $(v_i, v_j)$ such that both $v_i$ and $v_j$ cause pattern $2$. But, this means that $(r_1^i, h_2^j)$ and $(r_1^j, h_2^i)$ form a blocking pair, a contradiction since $M$ is relaxed stable. Now, it is easy to see that $|OPT(G')| = 2|C| + 3(|V| - |C|) = 3|V| - |C|$. Thus, $VC(G) \leq |C| = 3|V| - |OPT(G')|$. This completes the proof of the lemma.
\end{proof}

Now we prove the hardness of approximation for the {$\MAXRSM$} problem. 
Similar to {Lemma~\ref{lem:hardness-approx}}, {Lemma~\ref{lem:hardness-approx_rsm}} is analogous to Theorem 3.2 and Corollary 3.4 from~\cite{HIMY07}. {Proof of Lemma~\ref{lem:hardness-approx_rsm} uses the result of Lemma~\ref{lem:correctness}} and can be reproduced in a similar manner as done in Appendix~\ref{app:missing_proofs} for {Lemma~\ref{lem:hardness-approx}}.
This establishes Theorem~\ref{thm:inapprox_rsm}.
\begin{lemma}
\label{lem:hardness-approx_rsm}
It is $\NP$-hard to approximate the {$\MAXRSM$} problem within a factor of $\frac{21}{19} - \delta$, for any constant $\delta > 0$, even when the quotas of all hospitals are either 0 or 1.
\end{lemma}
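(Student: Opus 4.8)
The plan is to follow the template of the proof of Lemma~\ref{lem:hardness-approx} (which in turn mirrors Theorem~3.2 and Corollary~3.4 of~\cite{HIMY07}), using the size characterisation already established in Lemma~\ref{lem:correctness}. The point is that Lemma~\ref{lem:correctness} gives more than a relation between the two \emph{optima}: inspecting its proof, from \emph{every} relaxed stable matching $M$ in $G'$ one can recover in polynomial time, via the seven-pattern analysis, a vertex cover $C$ of $G$ with $|M| = 3|V| - |C|$, and conversely every vertex cover of $G$ yields a relaxed stable matching of the corresponding size. Thus $G \mapsto G'$ is an affine, gap-preserving reduction from $\MVC$ to $\MAXRSM$: small vertex covers correspond to large relaxed stable matchings and vice versa. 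Moreover the gadget in Fig.~\ref{fig:rsmhardness} already has all upper quotas equal to $1$ and only $h_3^i$ carrying a unit lower quota, so the hardness comes for free for $\HRLQ$ instances with all quotas in $\{0,1\}$, which is the regime claimed.

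Next I would combine this with the gap version of the hardness of $\MVC$ underlying~\cite{HIMY07}: there is a family of graphs $G$ on $n$ vertices for which it is $\NP$-hard to distinguish $|VC(G)| \le a n$ from $|VC(G)| \ge b n$, for the explicit absolute constants $a < b$ used there, whose reduction is affine in exactly the same way, so that the resulting ratio is precisely $\frac{21}{19}$. Suppose, for contradiction, a polynomial-time $(\frac{21}{19}-\delta)$-approximation $\mathcal{A}$ for $\MAXRSM$ existed for some $\delta>0$; as in the $\MAXEFM$ case we may assume $\mathcal{A}$ returns a maximal relaxed stable matching. Running $\mathcal{A}$ on $G'$ and comparing $|M|$ against the threshold $(3-b)n$: in the small-cover case $|M| \ge \tfrac{1}{\alpha}(3n-an)$ by the forward direction of Lemma~\ref{lem:correctness}, while in the large-cover case $|M| \le (3-b)n$ since from any relaxed stable matching we extract a cover of size $3n-|M| \ge bn$. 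Since $\alpha<\frac{21}{19}=\frac{3-a}{3-b}$ these two cases are separated, contradicting $\Poly\neq\NP$; hence no $(\frac{21}{19}-\delta)$-approximation exists unless $\Poly=\NP$.

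The part that actually requires care is not the arithmetic above but the reverse direction of Lemma~\ref{lem:correctness} applied to an \emph{arbitrary}, possibly suboptimal, relaxed stable matching: one must be sure the seven-pattern case analysis, together with the claims ruling out patterns $3$, $4$, $5$ and $6$, holds for every relaxed stable matching, so that a vertex cover of size exactly $3|V| - |M|$ is always recoverable. Lemma~\ref{lem:correctness} already proves exactly this, so the remaining argument is a routine reproduction of the one in Appendix~\ref{app:missing_proofs} for Lemma~\ref{lem:hardness-approx}, with $G'$ and Lemma~\ref{lem:correctness} substituted for the $\MAXEFM$ gadget and Lemma~\ref{lem:red-correct}; no genuinely new obstacle arises.
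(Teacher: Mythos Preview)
Your proposal is correct and follows essentially the same route as the paper: reduce from the Dinur--Safra gap version of $\MVC$, translate the gap through the affine relation $|OPT(G')| = 3|V| - |VC(G)|$ of Lemma~\ref{lem:correctness}, substitute $p=\tfrac{1}{3}$ to obtain the $\tfrac{21}{19}$ ratio, and observe that the gadget of Fig.~\ref{fig:rsmhardness} already has all quotas in $\{0,1\}$.

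One small remark: you work harder than the paper does. You insist on the stronger fact that from \emph{every} relaxed stable matching $M$ the seven-pattern analysis extracts a vertex cover of size exactly $3|V|-|M|$, and you correctly note that Lemma~\ref{lem:correctness} already establishes this (the claims ruling out patterns $3$--$6$ use only relaxed stability, not optimality). The paper's argument in Appendix~\ref{app:missing_proofs}, however, never needs this: it only uses the equality between the two \emph{optima}, so that in the large-cover case one simply has $|M| \le |OPT(G')| = 3|V| - |VC(G)| \le (3-b)|V|$ directly, without re-extracting a cover from the approximation output. Your version is not wrong---indeed that stronger extraction property is exactly what is needed for the $1.25$ lower bound in the subsequent Remark---but for the $\tfrac{21}{19}$ statement itself the gap in $OPT$ already suffices. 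The ``maximal relaxed stable'' assumption you carry over from the $\MAXEFM$ case is likewise harmless but unused.
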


\subsection{A $\frac{3}{2}$-approximation algorithm for $\MAXRSM$}
In this section, we present Algorithm~\ref{alg:rsm3by2_1} that computes a relaxed stable matching 
in an $\HRLQ$ instance and prove that it is a $\frac{3}{2}$-approximation  to ${\MAXRSM}$.
Our algorithm
is simple to implement and hence we believe is of practical importance. 
Furthermore, we show that the output of Algorithm~\ref{alg:rsm3by2_1} is at least as large as the stable matching in the instance (disregarding lower-quotas).
Our algorithm is inspired by the  one proposed by Kir{\'{a}}ly~\cite{Kiraly13}. 

We say a feasible matching $M_0$ is {\em minimal} w.r.t. feasibility if for any edge $e \in M_0$, the matching $M_0 \setminus \{e\}$ is
infeasible for the instance. That is for a minimal matching $M_0$, we have for every hospital $h$, $|M_0(h)| = q^-(h)$. Algorithm~\ref{alg:rsm3by2_1} begins by computing a 
feasible matching $M_0$ in the instance $G$ disregarding
the preferences of the residents and hospitals.  Such a feasible matching can be computed by the standard reduction from bipartite matchings to flows with demands on edges ~\cite{KT}.
Let $M = M_0$. We now associate levels with the residents -- all residents matched in $M$ are set to have level-$0$; all residents
unmatched in $M$ are assigned level-$1$. We now execute the Gale and Shapley resident proposing algorithm, with the modification that
a hospital prefers any level-$1$ resident over any level-0 resident (irrespective of the preference list of $h$). Furthermore, if a level-$0$ resident
becomes unmatched during the course of the proposals, then it gets assigned a level-$1$ and it starts proposing from the beginning of its preference
list. Amongst two residents of the same level, the hospital uses its preference list in order them. 
Our algorithm terminates when either every resident is matched or every resident has exhausted its preference list when proposing hospitals at level-1.
It is clear that our algorithm runs in polynomial time since it only computes a feasible matching (using a reduction to flows) and executes a modification of Gale and Shapley algorithm.  
We prove the correctness of our algorithm below.

\begin{algorithm}[!ht]
	\SetAlgoNoLine
	\SetAlgoNoEnd
	\KwIn{Input: $\HRLQ$ instance $G = (\RR \cup \HH, E)$ }
	\KwOut{A relaxed stable matching that is a $\frac{3}{2}$-approximation of $\MAXRSM$}
	$M_0$ is a minimal feasible matching in $G$.
	Let $M = M_0$\;
	For every matched resident $r$, set level of $r$ to level-0;\label{line:star0_0} \\
	For every unmatched resident $r$, set level of $r$ to level-1;\label{line:star0_1} \\
	\While{there is an unmatched resident $r$ which has not exhausted his preference list}{\label{line:while_1}
	$r$ proposes to the most preferred hospital $h$ to whom he has not yet proposed\;
	\If{$h$ is under-subscribed}{
		$M = M \cup \{(r, h)\}$\; \label{line:undersub}
		}
	\Else{
		\If{$M(h)$ has at least one level-0 resident $r'$}{

			$M = M \setminus \{(r',h)\} \cup \{(r,h)\}$\; \label{line:l10}
			Set level of $r'$ to level-1 and $r'$ starts proposing from the beginning of his list; \label{line:star_1}
			}
		\Else{
			$h$ rejects the least preferred resident in $M(h) \cup {r}$\; \label{line:better}
			}
		}
	}
	Return $M$\;
\caption{Algorithm to compute $\frac{3}{2}$-approximation of $\MAXRSM$}
\label{alg:rsm3by2_1}
\end{algorithm}
\begin{lemma}
Matching $M$ output by Algorithm~\ref{alg:rsm3by2_1} is  feasible and relaxed stable.
\end{lemma}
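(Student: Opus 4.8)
The plan is to prove feasibility by a monotonicity argument on hospital loads, and relaxed stability by establishing a ``settled hospital'' invariant together with a classification of the residents by how they behave during the run of Algorithm~\ref{alg:rsm3by2_1}.

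For feasibility I would first observe that $|M(h)|$ never decreases during the algorithm: line~\ref{line:undersub} is executed only when $h$ is under-subscribed and raises $|M(h)|$ by one (so $|M(h)|$ never exceeds $q^+(h)$); line~\ref{line:l10} swaps the proposer in for a level-$0$ resident; and line~\ref{line:better} either rejects the proposer or swaps it in for the $h$-least-preferred resident --- in both of the last cases $|M(h)|$ is unchanged. Since the algorithm starts from the minimal feasible matching $M_0$, for which $|M_0(h)|=q^-(h)$ for every $h$, we get $q^-(h)\le |M(h)|\le q^+(h)$ throughout, hence at termination; and every resident is matched to at most one hospital at all times, because a resident proposes only while unmatched and a displaced resident becomes unmatched. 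So $M$ is a feasible matching.

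For relaxed stability the first point is that \emph{every proposing resident is at level-$1$}: a resident is unmatched only if it was unmatched in $M_0$ (so it started at level-$1$) or it was displaced, and a level-$0$ resident displaced in line~\ref{line:l10} is promoted to level-$1$ at once (line~\ref{line:star_1}), while a resident displaced in line~\ref{line:better} was already at level-$1$. Say a hospital $h$ becomes \emph{settled against} a level-$1$ resident $r$ once either $r$ proposes to $h$ and is rejected, or $r$ is matched to $h$ and is later displaced; in both cases this can only happen via line~\ref{line:better}, so at that moment $h$ is fully-subscribed, $M(h)$ contains no level-$0$ resident, and every resident of $M(h)$ is $h$-preferred to $r$. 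I would then argue this persists until termination: $h$ stays fully-subscribed because loads do not decrease; no level-$0$ resident can re-enter $M(h)$ (level-$0$ residents are matched only in $M_0$ and never re-matched while at level-$0$), so $M(h)$ stays all-level-$1$ and line~\ref{line:l10} never applies at $h$ again; and each later execution of line~\ref{line:better} at $h$ replaces the $h$-worst resident of $M(h)$ by a strictly $h$-better proposer (the incoming proposer beats the ejected resident, which was itself $h$-preferred to $r$). Hence once $h$ is settled against $r$, the pair $(r,h)$ is not blocking at termination: $h$ is full and everyone in $M(h)$ is $h$-preferred to $r$.

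To finish I would partition the residents into \emph{stuck} residents --- matched in $M_0$ and never displaced, hence always matched to their $M_0$-hospital and never proposing --- and the rest, which all propose at level-$1$ going down their preference lists. A non-stuck resident $r$ that ends unmatched has exhausted its list at level-$1$, so every hospital on its list is settled against it and $r$ is in no blocking pair. A non-stuck resident $r$ with $M(r)=h^\ast$ had, before its final acceptance at $h^\ast$, proposed to every hospital $h>_r h^\ast$, each such proposal ending in rejection or in acceptance followed by a later displacement; so every such $h$ is settled against $r$, and no $h\le_r h^\ast$ can block with $r$ either. Thus only stuck residents can participate in blocking pairs, and the stuck residents lying in $M(h)$ form a subset of $M_0(h)$, of size at most $q^-(h)$; combined with ``no unmatched resident blocks'', this is exactly the definition of relaxed stability. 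I expect the main obstacle to be the persistence of the settled-hospital invariant --- in particular ruling out that line~\ref{line:l10} re-introduces a level-$0$ resident into $M(h)$ after $M(h)$ has become all-level-$1$, and checking that a bumped resident, after restarting its proposals from the top of its list (line~\ref{line:star_1}) and possibly revisiting hospitals (including its $M_0$-hospital), is still settled against every hospital it strictly prefers to its final partner; residents that are accepted and only later displaced also need the second form of the settling event rather than the first.
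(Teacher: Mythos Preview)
Your proof is correct and follows essentially the same approach as the paper's: feasibility via monotonicity of $|M(h)|$ starting from the minimal feasible $M_0$, and relaxed stability by showing that only residents who remain at level-$0$ (your ``stuck'' residents) can participate in blocking pairs, with at most $q^-(h)$ of them at each $h$. Your ``settled hospital'' invariant and stuck/non-stuck partition are more explicit names for what the paper argues in two sentences (a level-$1$ resident is rejected by $h$ only when $h$ is full with level-$1$ residents it prefers, and this persists), and your worry about line~\ref{line:l10} reintroducing a level-$0$ resident is handled by the same observation the paper makes implicitly: level-$0$ residents never propose, so once $M(h)$ is all level-$1$ it stays that way.
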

\begin{proof}
We note that $M_0$ is feasible and since Algorithm~\ref{alg:rsm3by2_1} uses a resident proposing algorithm,
it is clear that for any hospital $h$, we have $|M(h)| \geq |M_0(h)| = q^-(h)$. Thus $M$ is feasible.

To show relaxed stability, we claim that when the algorithm terminates, a resident at level-1 does not participate
in a blocking pair. Whenever a level-1 resident $r$ proposes to a hospital $h$, resident $r$ always gets accepted except 
when $h$ is fully-subscribed and all the residents matched to $h$ are level-1 and are better preferred than $r$.
When a  matched level-1 resident $r$ is rejected by a hospital $h$,  $h$ gets a better preferred resident than $r$.
Thus, a level-1 resident does not participate in a blocking pair.
We note that every unmatched resident is a level-1 resident and hence does not participate in a blocking pair. 
Recall that all residents matched in $M_0$ are level-0 residents and $M_0$ is minimal. This implies that 
for every hospital $h$, at most $q^-(h)$ many residents assigned to $h$ in $M_0$ participate in a blocking pair.
We show that in $M$, the number of level-0 residents assigned to any hospital does not increase.
To see this, if $r$ is matched to $h$ in $M$, but not matched to $h$ in $M_0$, it implies that either $r$ was
unmatched in $M_0$ or $r$ was matched to some $h'$ in $M_0$. In either case $r$ becomes level-1 when it gets assigned to $h$ in $M$.
Thus the number of level-0 residents assigned to any hospital $h$ in $M$ is at most $q^-(h)$, all of which can potentially participate in blocking pairs.
This completes the proof that $M$ is relaxed stable.
%
\end{proof}
\begin{lemma}
Matching $M$ output by Algorithm~\ref{alg:rsm3by2_1} is a $\frac{3}{2}$-approximation to the maximum size relaxed stable matching.
\end{lemma}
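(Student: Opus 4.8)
The plan is to fix an optimal solution $M^{*}$ of $\MAXRSM$ and prove $|M^{*}|\le\frac32|M|$. First I would pass to unit quotas by splitting every hospital $h$ into $q^{+}(h)$ unit‑capacity copies and distributing the edges of $M$ and of $M^{*}$ at $h$ among the copies so that an edge of $M\cap M^{*}$ is assigned the same copy in both; this gives ordinary matchings $\hat M,\hat M^{*}$ with $|\hat M|=|M|$, $|\hat M^{*}|=|M^{*}|$, and $\hat M\oplus\hat M^{*}$ decomposes into alternating paths and even cycles. Cycles and paths that are not $\hat M$‑augmenting carry at least as many $\hat M$‑edges as $\hat M^{*}$‑edges, and an $\hat M$‑augmenting path with $b\ge 2$ edges of $\hat M$ carries $b+1\le\frac32 b$ edges of $\hat M^{*}$; so it is enough to rule out $\hat M$‑augmenting paths of length $1$ and to bound the number of $\hat M$‑augmenting paths of length $3$.

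Next I would record three properties of the output $M$, all read off from the algorithm together with the previous lemma: (a) $|M(h)|$ is non‑decreasing during the run, so if $h$ is under‑subscribed in $M$ it never rejected or displaced anyone; hence every resident acceptable to an under‑subscribed hospital is matched in $M$, and $M_{0}(h)\subseteq M(h)$, giving $|M(h)|\ge|M_{0}(h)|=q^{-}(h)$; (b) if a level‑$1$ resident $r$ is rejected by $h$ then $h$ is full in $M$ with all of $M(h)$ level‑$1$ and preferred by $h$ to $r$ --- in particular, if $r$ is unmatched in $M$ (so level‑$1$ and out of proposals) and $(r,h)\in E$, then $h$ is full in $M$, all of $M(h)$ level‑$1$ and ahead of $r$; and (c) no level‑$1$ resident participates in a blocking pair of $M$.

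For length‑$1$ augmenting paths: such a component is an edge $(r,h^{(j)})\in\hat M^{*}$ with $r$ unmatched in $M$ and copy $j$ of $h$ free in $\hat M$; freeness forces $h$ under‑subscribed in $M$, while $(r,h)\in M^{*}\subseteq E$ makes $r$ acceptable to $h$, contradicting (a). For a length‑$3$ augmenting path $r_{0}-h_{0}^{(i)}-r_{1}-h_{1}^{(j)}$ (with $(r_{0},h_{0}^{(i)}),(r_{1},h_{1}^{(j)})\in\hat M^{*}$, $(r_{1},h_{0}^{(i)})\in\hat M$, $r_{0}$ unmatched in $M$, copy $j$ free): by (b), $h_{0}$ is full in $M$ and $r_{1}\in M(h_{0})$ is level‑$1$ with $r_{1}>_{h_{0}}r_{0}$; freeness of copy $j$ makes $h_{1}$ under‑subscribed in $M$, so (c) with $(r_{1},h_{1})\in E$ gives $M(r_{1})=h_{0}>_{r_{1}}h_{1}=M^{*}(r_{1})$. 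Then $(r_{1},h_{0})$ blocks $M^{*}$ (since $r_{1}>_{h_{0}}r_{0}\in M^{*}(h_{0})$) and $r_{1}\in M^{*}(h_{1})$, so relaxed stability of $M^{*}$ forces $q^{-}(h_{1})\ge1$ and at most $q^{-}(h_{1})$ residents of $M^{*}(h_{1})$ block $M^{*}$; as distinct length‑$3$ augmenting paths ending at copies of $h_{1}$ use distinct such residents, at most $q^{-}(h_{1})\le|M(h_{1})|$ of them end at $h_{1}$. Summing over under‑subscribed hospitals, and using that the $M_{0}$‑edges witnessing these $q^{-}$ slots lie at under‑subscribed hospitals --- hence are never the (unique) $M$‑edge of a length‑$3$ augmenting path, whose $M$‑edge always lies at a hospital that is full in $M$ --- I would charge to each length‑$3$ augmenting path one such spare $M$‑edge, so that it then accounts for two $M$‑edges and two $M^{*}$‑edges; a component‑wise comparison then yields $|M^{*}|\le\frac32|M|$. (Separately, $|M|\ge|M_{s}|$ for a stable matching $M_{s}$ of the underlying \HR\ instance, since with no lower quotas the algorithm is exactly resident‑proposing Gale--Shapley and in general the level‑$1$ residents behave like Gale--Shapley run with the full quotas.)

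The hard part will be the length‑$3$ case: turning ``at most $q^{-}(h_{1})$ length‑$3$ augmenting paths end at $h_{1}$'' into a globally consistent charging --- in particular, checking that every $M$‑edge handed to a length‑$3$ augmenting path comes from a component that still respects the $\frac32$‑ratio once that edge is removed, which is exactly where one must exploit disjointness of the borrowed $M_{0}$‑edges from the $M$‑edges already committed to the tight components (length‑$5$ augmenting paths and the like).
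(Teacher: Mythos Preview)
Your proposal diverges from the paper at the treatment of length-$3$ augmenting paths. Both arguments clone every hospital into $q^+(h)$ unit-capacity copies and rule out length-$1$ augmenting paths identically. But the paper does not bound and then charge length-$3$ paths---it eliminates them outright by choosing the copy assignment more carefully. Specifically, the first $q^-(h)$ copies of $h$ are declared \emph{lower-quota copies}, and in each of $M_1$ and $OPT_1$ the residents of $M(h)$ (respectively $OPT(h)$) that participate in blocking pairs with respect to that matching are placed on lower-quota copies. Feasibility of $M$ gives $|M(h)|\ge q^-(h)$, so every lower-quota copy is matched in $M_1$; hence any copy free in $M_1$ is a non-lower-quota copy. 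Your own computation shows that the terminal resident $r_d=OPT_1(h_j^d)$ of a putative length-$3$ path blocks $OPT$; under the paper's assignment such an $r_d$ would sit at a lower-quota copy of $h_j$ in $OPT_1$, contradicting that $h_j^d$ is non-lower-quota. That is the entire argument---no charging is needed.

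Your route (allow length-$3$ paths, bound them by $q^-(h_1)$ at each under-subscribed terminal $h_1$, then borrow an $M_0$-edge at $h_1$ for each) has a real gap at exactly the place you flag. You verify only that a borrowed $M_0$-edge at an under-subscribed $h_1$ is not the $M$-edge of another length-$3$ path. But nothing prevents it from being the \emph{middle} $M$-edge of a length-$5$ augmenting path: in $r_0\,\text{--}\,h_0\,\text{--}\,r_1\,\text{--}\,h'\,\text{--}\,r_2\,\text{--}\,h_2$ only $h_0$ is forced to be full, while the interior hospital $h'$ may well be under-subscribed, and $r_2$ may well be a level-$0$ resident of $M_0(h')$. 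A length-$5$ path is already tight at ratio $3/2$, so borrowing one of its $M$-edges breaks the accounting. Thus the disjointness you need is not established, and the charging does not close as written. The paper's assignment trick encodes relaxed stability directly into the copy labels and sidesteps this difficulty entirely.
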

\begin{proof}
Let $OPT$ denote the maximum size relaxed stable matching in $G$. To prove the lemma we show that in $M \oplus OPT$ there does not exist
any one length as well as any three length augmenting path. To do this, we first convert the matchings $M$ and $OPT$ as one-to-one matchings,
by making {\em clones} of the hospital. In particular we make $q^+(h)$ many copies of the hospital $h$ for every $h$
where the first $q^-(h)$ copies are called \emph{lower-quota copies} and the $q^-(h)+1$ to $q^+(h)$ copies are called  \emph{non lower-quota copies} of $h$.

Let $M_1$ denote the one-to-one matching corresponding to $M$.
To obtain $M_1$, we assign every resident $r \in M(h)$ to a unique copy of $h$ as follows: first, all the residents in $M(h)$ who participate in  blocking pair w.r.t. $M$
 are assigned  unique lower-quota copies of $h$ arbitrarily. The remaining residents in $M(h)$ are assigned to the rest of the copies of $h$, ensuring all lower-quota
copies get assigned some resident.
We get $OPT_1$ from $OPT$ in the same manner.

	Suppose that $(r,h)$ is a one length augmenting path w.r.t. $M$ in  $M \oplus OPT$ such that $r$ is unmatched and $h$ is under-subscribed in $M$.
Recall that an unmatched resident is a level-1 resident, hence $r$ is a level-1 resident. Thus, $r$ must have proposed to $h$ during the execution of algorithm.
Since, $r$ is unmatched, it implies that $h$ must be fully-subscribed in $M$, a contradiction. Thus, there is no one length augmenting path in $M \oplus OPT$.

Next, suppose there exists a three length augmenting path w.r.t. $M$ which starts at an under-subscribed hospital, say $h_j$  and ends at
an unmatched resident in $M$.  Since $h_j$ is under-subscribed in $M$, and there is an augmenting path starting at $h_j$, it implies that
there exists a copy $h_j^d$ such that (i) $h_j^d$ is matched in $OPT_1$ and unmatched in $M_1$, say $OPT_1(h_j^d) = r_d$ and (ii) the resident $r_d$ is matched in $M_1$ (otherwise
there is a one length augmenting path w.r.t. $M_1$, which does not exist); let $M_1(r_d) = h_i^c$, and (iii) the copy $h_i^c$ is matched in $OPT_1$ and $OPT_1(h_i^c) = r_c$ is unmatched in $M_1$ (else
the claimed three length augmenting path does not exist).

We first note that $h_i^c$ and $h_j^c$ are not copies of the same hospital, that is, $i \neq j$, otherwise there is a one length augmenting path $(r_c, h_i)$ w.r.t. $M$.
Since $r_c$ is unmatched in $M_1$ (and hence $M$), the resident $r_c$ is a level-1 resident. Therefore, $r_c$ must have proposed to $h_i$ during the course of the algorithm.
Thus, $h_i$ is fully-subscribed and  is matched to all level-1 residents all of which are better preferred over $r_c$.  This implies that $r_d >_{h_i} r_c$ and  $r_d$ is a level-1 resident.
Since $r_d$ is a level-1 resident, it proposed to hospitals from the beginning of its preference list. Since $h_j$ is under-subscribed, it must be the case that $h_i >_{r_d} h_j$.
Thus, $(r_d, h_i)$ is a blocking pair w.r.t. $OPT$. By the construction of $OPT_1$ from $OPT$, we must have assigned $r_d$ to a lower-quota copy of $h_j$. However, copy $h_j^d$ is
a non lower-quota copy, since it is unassigned in $M_1$, a contradiction. Thus, the claimed three length augmenting path does not exist.
\end{proof}

\begin{wrapfigure}{r}{0.5\textwidth}
    \begin{minipage}{0.22\columnwidth}
    \begin{align*}
	    r_1 &: h_1 \\
	    r_2 &: h_1, h_2 \\
	    r_3 &: h_3, h_2
    \end{align*}
    \end{minipage}%
    \begin{minipage}{0.22\columnwidth}
    \begin{align*}
	    [0,1]\ h_1 &: r_2, r_1 \\
	    [1,1]\ h_2 &: r_2, r_3 \\
	    [0,1]\ h_3 &: r_3
    \end{align*}
    \end{minipage}%
\caption{A tight example for Algorithm~\ref{alg:rsm3by2_1}}
\label{fig:tight}
\end{wrapfigure}
We note that the analysis of our Algorithm~\ref{alg:rsm3by2_1} is tight.
Consider the $\HRLQ$ instance in Fig.~\ref{fig:tight} and a minimal feasible matching $M_0 = \{(r_3,h_2)\}$.
Algorithm~\ref{alg:rsm3by2_1} computes matching $M = \{(r_3,h_2), (r_2,h_1)\}$.
Maximum size relaxed stable matching in this instance is $OPT = \{(r_1,h_1), (r_2,h_2), (r_3,h_3)\}$ and $M \oplus OPT$ admits a five length augmenting path $\langle r_1, h_1, r_2, h_2, r_3, h_3\rangle$.
We also show that every resident matched in stable matching (ignoring lower quotas) is also matched in $M$ that is output by Algorithm~\ref{alg:rsm3by2_1}, implying that $M$ is at least as large as any stable matching.
\begin{lemma}
	A resident matched in a stable matching is also matched in $M$. Hence $M$ is at least as large as any stable matching in that instance.
\end{lemma}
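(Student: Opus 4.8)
The plan is to compare the behaviour of Algorithm~\ref{alg:rsm3by2_1} against an arbitrary stable matching $M_{\mathsf{stab}}$ in $G$ (computed disregarding lower quotas) and argue that no resident matched in $M_{\mathsf{stab}}$ can be left unmatched by the algorithm. Since the set of residents matched in a stable matching is invariant across all stable matchings (Rural Hospitals Theorem~\cite{Roth86}), it suffices to fix one such $M_{\mathsf{stab}}$. First I would suppose, for contradiction, that some resident $r$ is matched in $M_{\mathsf{stab}}$ but unmatched in the output $M$. Since $r$ is unmatched in $M$, by the termination condition of the algorithm $r$ has exhausted its preference list while proposing as a level-$1$ resident; in particular $r$ proposed to $h^\ast := M_{\mathsf{stab}}(r)$ at level $1$ and was rejected.

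The key step is to analyse why a level-$1$ proposal can be rejected. A hospital $h$ rejects (or discards) a level-$1$ resident $r$ only in line~\ref{line:better}, i.e. when $h$ is fully-subscribed and every resident currently assigned to $h$ is a level-$1$ resident whom $h$ prefers to $r$ (a level-$0$ resident would have been evicted in line~\ref{line:l10} in favour of $r$). Moreover, once a hospital $h$ is fully-subscribed with only level-$1$ residents, it stays fully-subscribed and its assigned set only improves in $h$'s preference order (level-$0$ residents never re-enter, and a rejected level-$1$ resident is replaced by a better level-$1$ one). Hence at termination $h^\ast$ is fully-subscribed and $M(h^\ast)$ consists of $q^+(h^\ast)$ residents, each preferred by $h^\ast$ over $r$. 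Now I would derive a blocking-pair contradiction for $M_{\mathsf{stab}}$: among the $q^+(h^\ast)$ residents in $M(h^\ast)$, at least one, call it $r'$, satisfies $r' \notin M_{\mathsf{stab}}(h^\ast)$ (since $|M_{\mathsf{stab}}(h^\ast)| \le q^+(h^\ast)$ but $r \in M_{\mathsf{stab}}(h^\ast)\setminus M(h^\ast)$, so the two size-$\le q^+(h^\ast)$ sets differ). This $r'$ was matched in $M$, but I want to track it back to a proposal it made; since $r'$ ended up at $h^\ast$, it either is unmatched-turned-level-$1$ and proposed down its list reaching $h^\ast$, or it was demoted to level $1$ and re-proposed. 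In either case, $r'$ prefers $h^\ast$ to every hospital it was subsequently rejected by, so in particular $h^\ast >_{r'} M_{\mathsf{stab}}(r')$ if $r'$ is matched in $M_{\mathsf{stab}}$, or $r'$ is unmatched in $M_{\mathsf{stab}}$; combined with $r' >_{h^\ast} r$ and $r \in M_{\mathsf{stab}}(h^\ast)$, the pair $(r', h^\ast)$ blocks $M_{\mathsf{stab}}$, a contradiction.

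**The main obstacle** I anticipate is the bookkeeping around levels: a resident that is matched in $M$ need not have been processed as a level-$1$ proposer at all (it could be a level-$0$ resident from $M_0$ that was never evicted), so the ``trace back its proposals'' argument for $r'$ needs care. I would handle this by a small case split on whether $r' \in M_0$ and whether $r'$ was ever demoted; the crucial invariant to prove is that \emph{any} resident occupying $h^\ast$ at termination either (a) proposed to $h^\ast$ and thus prefers $h^\ast$ to everything it was later rejected by, or (b) is a level-$0$ resident originally placed at $h^\ast$ by $M_0$ — but case (b) is impossible here, since $h^\ast$ must have evicted all its level-$0$ occupants before it could reject the level-$1$ proposal of $r$. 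Once that invariant is nailed down, the blocking-pair contradiction is immediate, proving that every resident matched in a stable matching is matched in $M$ and hence $|M|\ge |M_{\mathsf{stab}}|$.
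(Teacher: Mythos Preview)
Your overall setup is fine and matches the paper: fix a stable matching $M_{\mathsf{stab}}$, suppose some $r$ with $M_{\mathsf{stab}}(r)=h^\ast$ is unmatched in $M$, observe $r$ is level-$1$ and was rejected by $h^\ast$, and conclude that at termination $h^\ast$ is fully-subscribed by level-$1$ residents each preferred by $h^\ast$ over $r$. That part is correct.

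The gap is in the step where you try to finish in one move by exhibiting a blocking pair $(r',h^\ast)$ for $M_{\mathsf{stab}}$. You assert that $h^\ast >_{r'} M_{\mathsf{stab}}(r')$ because ``$r'$ prefers $h^\ast$ to every hospital it was subsequently rejected by''. This inference does not go through: as a level-$1$ proposer $r'$ went \emph{down} its list, so the hospitals that rejected $r'$ are precisely those $r'$ prefers \emph{more} than $h^\ast$, not less. Nothing in the algorithm forces $M_{\mathsf{stab}}(r')$ to lie below $h^\ast$ in $r'$'s list. In fact, stability of $M_{\mathsf{stab}}$ gives the opposite inequality: since $r' >_{h^\ast} r$ and $r\in M_{\mathsf{stab}}(h^\ast)$ while $r'\notin M_{\mathsf{stab}}(h^\ast)$, avoiding the blocking pair $(r',h^\ast)$ in $M_{\mathsf{stab}}$ requires $M_{\mathsf{stab}}(r') >_{r'} h^\ast$. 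So your intended contradiction evaporates.

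What the paper does instead is iterate. From $M_{\mathsf{stab}}(r') >_{r'} h^\ast$ and the fact that $r'$ is level-$1$, conclude that $r'$ proposed to $h':=M_{\mathsf{stab}}(r')$ and was rejected; hence $h'$ is fully-subscribed in $M$ by level-$1$ residents each preferred by $h'$ over $r'$. Now repeat the argument at $h'$: each such resident must be matched in $M_{\mathsf{stab}}$ to a hospital it prefers over $h'$, and so on. This produces an unending chain of distinct residents (or hospitals), contradicting finiteness. Your one-step blocking-pair argument needs to be replaced by this chain argument; once you do that, the proof is complete.
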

\begin{proof}
Let $M_s$ be a stable matching. 
	By Rural Hospitals theorem, we know that the same set of residents are matched in all the stable matchings.
	Hence, it is enough to prove that a resident $r$ matched in $M_s$ is also matched in $M$.
	Suppose not. Then $r$ must be a level-1 resident. Let $M_s(r) = h$. Since $M$ is relaxed stable, $h$ must be fully-subscribed in $M$ with residents who are level-1 and better preferred over $r$. All these residents in $M(h)$ must be matched and matched to a higher preferred hospital than $h$ in $M_s$ otherwise they form a blocking pair w.r.t. $M_s$. But, since they are level-1, their matched partners in $M_s$ must be fully-subscribed in $M$ with residents who are level-1 and better preferred than them. Thus, a path starting at $r$ who is claimed to be unmatched in $M$ cannot terminate at either a resident or a hospital, a contradiction since there are finite number of hospitals and residents. Hence, every resident matched in $M_s$ is matched in $M$.
\end{proof}

\section{Discussion}
\label{sec:conl}
In this paper we consider computing matchings with two-sided preferences and lower-quotas. A thorough investigation
of the notion of envy-freeness from a computational perspective reveals that the $\MAXEFM$ problem is $\NP$-hard, and hard to approximate within a constant factor {$\frac{21}{19}$}. In future,
it will be nice to improve the approximation guarantee for the $\MAXEFM$ problem. For the new notion of relaxed stability,
we show desirable properties like guaranteed existence, and an efficient constant factor approximation for the $\MAXRSM$ problem. 
However, the gap between the approximation guarantee and hardness of approximation remains to be bridged.

\bibliography{refs}

\appendix
\section{Missing proofs from section~\ref{sec:inapprox} and section~\ref{sec:maxrefm}}\label{app:missing_proofs}
As stated earlier, Lemma~\ref{lem:hardness-approx} is analogous to Theorem 3.2 and Corollary 3.4 from~\cite{HIMY07}.
For completeness, we reproduce the proof below. 
\begin{proposition}~\cite{DS02} \label{prop:dinur}
    For any $\epsilon > 0$ and $p < \frac{3-\sqrt{5}}{2}$, the following statement holds: If there exists a polynomial time algorithm that, given a graph $G = (V,E)$, distinguishes between the following two cases, then $\Poly=\NP$.
    \begin{enumerate}
        \item $|VC(G)| \leq (1-p+\epsilon)|V|$
        \item $|VC(G)| > (1 - max\{p^2, 4p^3-3p^4\}-\epsilon)|V|$
    \end{enumerate}
\end{proposition}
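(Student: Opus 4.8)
The statement is precisely the Dinur--Safra vertex-cover inapproximability gap, so in the paper it is invoked as a black box from \cite{DS02}; nonetheless, the plan for establishing it from first principles is a PCP-based hardness reduction. First I would fix the starting point: by the PCP theorem combined with Raz's parallel repetition, the \emph{Label Cover} problem is $\NP$-hard in the following gap sense --- given a bipartite constraint instance with projection constraints, it is $\NP$-hard to distinguish instances that are fully satisfiable from instances in which no labeling satisfies more than an arbitrarily small constant fraction $\eta$ of the constraints. This gap, with $\eta \to 0$, is the engine that will be transported into a vertex-cover gap.

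Next I would build the graph $G$ from the Label Cover instance by replacing every variable with a \emph{$p$-biased long-code block}: a vertex for each subset of the label alphabet, where the intended (honest) encoding of a label $\sigma$ is the family of all subsets containing $\sigma$. Edges are placed (i)~within a block between subsets that fail an intersection test, and (ii)~across blocks joined by a Label Cover constraint, so that a subset on one side is adjacent to a subset on the other whenever the projected labels are inconsistent. The design is arranged so that \emph{independent sets} of $G$ correspond to collections of intersecting families, one per block, that are moreover consistent across the Label Cover edges. Measuring vertices under the $p$-biased product measure, the weight of a single honest dictator family (all sets containing a fixed coordinate) is exactly $p$.

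The two cases of the proposition then come from the completeness and soundness of this reduction. For \textbf{completeness}, a satisfying labeling yields an independent set consisting of one dictator per block; this independent set has measure $\approx p$, so its complement --- a vertex cover --- has measure $\approx 1-p$, giving case~(1), $|VC(G)| \le (1-p+\epsilon)|V|$. For \textbf{soundness}, I would argue the contrapositive: if $G$ had a vertex cover of measure at most $1 - \max\{p^2,\,4p^3-3p^4\} - \epsilon$, then its complementary independent set would have $p$-biased measure exceeding $\max\{p^2,\,4p^3-3p^4\}$ in a noticeable fraction of blocks. The crux is a Fourier-analytic \emph{junta-decoding} step: a $p$-biased intersecting family whose weight exceeds the extremal threshold $\max\{p^2,\,4p^3-3p^4\}$ must have an influential coordinate, and selecting such coordinates block-by-block decodes a labeling satisfying more than $\eta$ of the Label Cover constraints, contradicting soundness. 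Hence $|VC(G)| > (1-\max\{p^2,4p^3-3p^4\}-\epsilon)|V|$, which is case~(2).

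The main obstacle is exactly this soundness analysis together with the verification of the extremal constant: one must show that the only large $p$-biased intersecting families are essentially small juntas, and quantify this precisely. The two expressions $p^2$ and $4p^3-3p^4$ arise as the weights of the two best competing small-junta intersecting families, and their maximum is the largest weight attainable with \emph{no} influential coordinate; carrying this out rigorously requires a $p$-biased noise-operator / influence argument in the spirit of Friedgut's junta theorem adapted to the biased cube, plus a careful accounting of how block-wise decoding respects the projection constraints. Finally, optimizing the induced hardness ratio $\frac{1-\max\{p^2,4p^3-3p^4\}}{1-p}$ over the admissible range drives the choice $p<\frac{3-\sqrt{5}}{2}$ stated in the proposition (the optimum $p=\frac{3-\sqrt{5}}{2}$ yielding the factor $10\sqrt{5}-21$). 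Since all of this is established in \cite{DS02}, for the purposes of the present paper I would simply cite it and use the gap as stated.
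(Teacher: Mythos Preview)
Your proposal is correct: the paper does not prove this proposition at all but simply quotes it verbatim from \cite{DS02} as a black box, exactly as you anticipate in your opening sentence. Your additional sketch of the Dinur--Safra reduction (Label Cover via PCP and parallel repetition, $p$-biased long-code blocks, dictator completeness giving independent-set weight $\approx p$, and the Fourier-analytic junta-decoding soundness yielding the threshold $\max\{p^2,4p^3-3p^4\}$) is a faithful high-level outline of the original argument and goes well beyond what the paper itself supplies, which is nothing more than the citation.
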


Proposition~\ref{prop:dinur}, Lemma~\ref{lem:red-correct} and $N=3|V|$ together imply following lemma.

\begin{lemma}\label{lem:efm_gap} 
	For any $\epsilon > 0$ and $p < \frac{3-\sqrt{5}}{2}$, the following statement holds: If there exists a polynomial time algorithm that, given a {$\MAXEFM$} instance $G'$ consisting of $N$ residents and $\frac{4N}{3}$ hospitals, distinguishes between the following two cases, then $\Poly=\NP$.
    \begin{enumerate}
        \item $|OPT(G')| \geq \frac{2+p-\epsilon}{3}N$
        \item $|OPT(G')| < \frac{2 + max\{p^2, 4p^3-3p^4\}+\epsilon}{3}N$
    \end{enumerate}
\end{lemma}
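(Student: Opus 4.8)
The plan is to obtain Lemma~\ref{lem:efm_gap} as an immediate consequence of Proposition~\ref{prop:dinur}, pushing the vertex-cover gap through the reduction of Section~\ref{sec:inapprox}. I would argue by contrapositive. Suppose there were a polynomial-time algorithm $\mathcal{A}$ that, on a $\MAXEFM$ instance $G'$ with $N$ residents and $\frac{4N}{3}$ hospitals, distinguishes the case $|OPT(G')| \ge \frac{2+p-\epsilon}{3}N$ from the case $|OPT(G')| < \frac{2+\max\{p^2,4p^3-3p^4\}+\epsilon}{3}N$. Given an arbitrary graph $G=(V,E)$, the first step is to run the reduction of Section~\ref{sec:inapprox} to build the instance $G'$; this takes polynomial time, always produces a feasible $\HRLQ$ instance with exactly $N = 3|V|$ residents and $\frac{4N}{3} = 4|V|$ hospitals (so the size hypothesis of the lemma is met with nothing lost), and by Lemma~\ref{lem:red-correct} it satisfies the exact identity $|OPT(G')| = 3|V| - |VC(G)|$, equivalently $|VC(G)| = N - |OPT(G')|$.

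The second step is the routine substitution that translates the two regimes. Using $|V| = N/3$ and $|VC(G)| = N - |OPT(G')|$, the inequality $|VC(G)| \le (1-p+\epsilon)|V|$ of case~1 of Proposition~\ref{prop:dinur} is equivalent to $N - |OPT(G')| \le \frac{1-p+\epsilon}{3}N$, i.e.\ to $|OPT(G')| \ge \frac{2+p-\epsilon}{3}N$, which is exactly case~1 of the lemma; and $|VC(G)| > (1 - \max\{p^2,4p^3-3p^4\} - \epsilon)|V|$ of case~2 is equivalent to $|OPT(G')| < \frac{2+\max\{p^2,4p^3-3p^4\}+\epsilon}{3}N$, which is case~2 of the lemma. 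Hence, running $\mathcal{A}$ on $G'$ and reading off its answer decides in polynomial time which of the two cases of Proposition~\ref{prop:dinur} the graph $G$ falls into, so $\Poly=\NP$, contradicting the proposition (under the stated ranges of $p$ and $\epsilon$).

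I do not expect any real obstacle here beyond bookkeeping; the only things to verify carefully are that the reduction is gap-preserving in the \emph{exact} sense needed — which holds because Lemma~\ref{lem:red-correct} is an equality rather than an approximate bound — and that the reduced instance always meets the $N$ residents / $\frac{4N}{3}$ hospitals shape, which is built into the construction. Because no slack is introduced, the same constants $p$ and $\epsilon$ carry over verbatim; the subsequent conversion of this two-sided gap into the $\frac{21}{19}$ inapproximability factor for $\MAXEFM$ (using that an approximation algorithm may be assumed to return a maximal envy-free matching) then proceeds exactly as in~\cite{HIMY07} and is what the proof of Lemma~\ref{lem:hardness-approx} will carry out.
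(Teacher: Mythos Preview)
Your proposal is correct and follows exactly the approach the paper indicates: the paper simply states that Lemma~\ref{lem:efm_gap} follows from Proposition~\ref{prop:dinur}, Lemma~\ref{lem:red-correct}, and $N=3|V|$, and you have written out precisely that derivation with the substitutions made explicit. There is nothing to add.
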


\begin{proof}[Proof of Lemma~\ref{lem:hardness-approx}]
As in~\cite{HIMY07}, we substitute $p = \frac{1}{3}$ in Lemma~\ref{lem:efm_gap} to obtain the simplified cases as follows.
\begin{enumerate}
    \item $|OPT(G')|\geq \frac{21-\epsilon}{27} N$
\vspace{0.1in}
    \item $|OPT(G')|< \frac{19+\epsilon}{27} N$    
\end{enumerate}
    Now, suppose we have a polynomial-time approximation algorithm for the {$\MAXEFM$} problem with an approximation factor of at most $\frac{21}{19} - \delta,~\delta > 0$. Consider the above two cases with a fixed constant, $\epsilon < \frac{361\delta}{40-19\delta}$. For an instance of case (1), this algorithm outputs a matching of size $\geq \frac{21-\epsilon}{27}N\frac{1}{\frac{21}{19}-\delta}$, and for an instance of case (2), it outputs a matching of size $<\frac{19+\epsilon}{27} N$. By our setting of $\epsilon$, we can easily verify that $\frac{21-\epsilon}{27}N\frac{1}{\frac{21}{19}-\delta} > \frac{19+\epsilon}{27} N$. Hence using this approximation algorithm, we can distinguish between instances of the two cases, implying that $\Poly=\NP$. This completes the proof of the lemma.
\end{proof}

Following remark is also analogous to Remark 3.6 from~\cite{HIMY07}.
\begin{remark}
A long standing conjecture~\cite{KR08} states that {MVC} is hard to approximate within a factor of $2-\epsilon,~\epsilon>0$. We obtain a lower bound of $1.25$ on the approximation ratio of {$\MAXEFM$}, modulo this conjecture.
\end{remark}
\begin{proof}
Suppose we have an $r$-approximation algorithm for the {$\MAXEFM$} problem. Let $G = (V,E)$ be such that $|VC(G)|\geq \frac{|V|}{2}$. Approximability of {MVC} for general graphs is equivalent to the approximability of {MVC} for graphs with this property~\cite{NT75}.
Using reduction provided 
	obtain a {$\MAXEFM$} instance $G'$. We showed that $|OPT(G')| = 3|V|-|VC(G)|$. Suppose we are given a maximal envy-free matching, $M$ for $G'$, obtained using this algorithm, we can construct a vertex cover $C$ for $G$ with $|C|\leq 3|V|-|M|$. Since $M$ is an $r$-approximation to $OPT(G')$, we have $|M|\geq \frac{|OPT(G')|}{r}$. Combining these constraints we get,
\begin{align*}
    |C| &\leq 3|V| - |M|\\
    &\leq \Big(6-\frac{5}{r}\Big) |VC(G)|
\end{align*}
On substituting $r=1.25-\delta$ in the above equation ($0<\delta \leq 0.25$), we see that $1\leq 6-\frac{5}{r} < 2$. Thus, effectively we have constructed a vertex cover $C$, which is a $k$-approximation to $VC(G)$, where $1\leq k < 2$. This contradicts the conjecture that {MVC} is hard to approximate within a factor of $2-\epsilon,~\epsilon>0$. This completes the proof.
\end{proof}

\noindent Proof of Lemma~\ref{lem:hardness-approx_rsm} can be reproduced in a similar manner as the proof for Lemma~\ref{lem:hardness-approx} {and using the result from Lemma~\ref{lem:correctness} in place of Lemma~\ref{lem:red-correct}}.

\end{document}